\newcommand{\bep}{\begin{picture}}
\newcommand{\eep}{\end{picture}}
\newcommand{\circleBox}[5]{\multiput(#1,#2)(#5,0){#3}{\multiput(0,0)(0,#5){#4}{\circle{5}}}}
\newcommand{\circleBoldBox}[5]{\multiput(#1,#2)(#5,0){#3}{\multiput(0,0)(0,#5){#4}{\circle*{5}}}}
\newcommand{\Boxof}[6]{\multiput(#1,#2)(#5,0){#3}{\multiput(0,0)(0,#5){#4}{#6}}}
\newcommand{\TwoModulesGlued}{{\bep(200,170)%
\put(10,150){$\overbrace{\phantom{aaaaaaaaaa}}^{\module_1}$}%
\put(125,150){$\overbrace{\phantom{aaaaaaaa}}^{\module_2}$}%
\put(10,150){\axeXLabel{200}{$g$}}%
\put(10,150){\axeYLabelUpdown{150}{$q$}}%
\circleBox{10}{30}{4}{7}{20}\circleBoldBox{10}{50}{4}{1}{20}%
\Boxof{30}{50}{3}{6}{20}{\put(-4,-4){\vector(-1,-1){12}}}%
\circleBox{130}{90}{3}{4}{20}\circleBoldBox{130}{90}{3}{1}{20}%
\Boxof{150}{110}{2}{3}{20}{\put(-4,-4){\vector(-1,-1){12}}}%
\multiput(130,90)(0,20){4}{\put(-4,-4){\vector(-1,-1){52}}}%
\put(-15,50){$W^1\fm{p}$}\put(-20,70){$\xi^1\fm{p-1}$}\put(-15,150){$\xi^1\fm{0}$}%
\put(180,90){$W^2\fm{r}$}\put(180,110){$\xi^2\fm{r-1}$}\put(180,150){$\xi^2\fm{0}$}%
\eep}}
\newcommand{\GaugeParameters}{{\bep(160,150){%
\put(0,0){\RectT{5}{4}{\TextTopLeftIn{\ensuremath{s_N-1}}{\ensuremath{p_N}}}}%
\put(50,20){\RectT{1}{2}{\TextRight{$k_N$}}}%
\put(50,0){\BlockADotted{1}{2}}%
\put(0,40){\RectT{9}{5}{\TextTopLeftIn{\ensuremath{s_2-1}}{\ensuremath{p_2}}}}%
\put(90,60){\RectT{1}{3}{\TextRight{$k_2$}}}%
\put(90,40){\BlockADotted{1}{2}}%
\put(0,90){\RectT{14}{6}{\TextTopLeftIn{\ensuremath{s_1-1}}{\ensuremath{p_1}}}}%
\put(140,120){\RectT{1}{3}{\TextRight{$k_1$}}}%
\put(140,90){\BlockADotted{1}{3}}%
}\eep}}
\newcommand{\PhysicalVielbein}{{\bep(160,170)(0,-20){%
\put(0,-20){\RectT{1}{2}{\TextRight{$\alpha_{N+1}$}}}%
\put(0,0){\RectT{5}{4}{\TextTopLeftIn{\ensuremath{s_N-1}}{\ensuremath{p_N}}}}%
\put(50,20){\RectT{1}{2}{\TextRight{$\alpha_N$}}}%
\put(0,40){\RectT{9}{5}{\TextTopLeftIn{\ensuremath{s_2-1}}{\ensuremath{p_2}}}}%
\put(90,60){\RectT{1}{3}{\TextRight{$\alpha_2$}}}%
\put(0,90){\RectT{14}{6}{\TextTopLeftIn{\ensuremath{s_1-1}}{\ensuremath{p_1}}}}%
\put(140,120){\RectT{1}{3}{\TextRight{$\alpha_1$}}}%
}\eep}}
\newcommand{\ABProduct}{{\bep(400,170){%
\put(0,0){\TypicalDiagram}%
\put(140,80){$\bigotimes$}%
\put(160,30){\RectT{1}{10}{\TextCenter{$q$}}}
\put(180,80){$=\displaystyle\sum_{\{\alpha_j\},\{\beta_i\}}N_{\{\alpha_j\},\{\beta_i\}}$}%
\put(280,0){\TracedTensorProduct}%
}\eep}}
\newcommand{\TypicalDiagram}{{\bep(160,170){%
\put(0,20){\RectT{5}{4}{\TextTopLeftIn{\ensuremath{s_N}}{\ensuremath{p_N}}}}%
\put(0,60){\RectT{9}{5}{\TextTopLeftIn{\ensuremath{s_2}}{\ensuremath{p_2}}}}%
\put(0,110){\RectT{14}{6}{\TextTopLeftIn{\ensuremath{s_1}}{\ensuremath{p_1}}}}%
}\eep}}
\newcommand{\TracedTensorProduct}{{\bep(160,170){%
\put(0,0){\RectT{1}{2}{\TextRight{$\alpha_{N+1}$}}}
\put(0,20){\ABRectT{5}{4}{1}{1}{N}}%
\put(0,60){\ABRectT{9}{5}{2}{2}{2}}%
\put(0,110){\ABRectT{14}{6}{2}{2}{1}}%
}\eep}}
\newcommand{\FieldContentDynamical}{{\unitlength=0.34mm\bep(160,100){%
\put(0,0){\RectT{5}{3}{\TextTopLeftIn{\ensuremath{s_N-1}}{\ensuremath{p_N}}}}%
\put(0,30){\RectT{10}{3}{\TextTopLeftIn{\ensuremath{s_2-1}}{\ensuremath{p_2}}}}%
\put(0,60){\RectT{15}{4}{\TextTopLeftIn{\ensuremath{s_1-1}}{\ensuremath{p_1}}}}%
}\eep}}
\newcommand{\FieldContentN}{{\unitlength=0.34mm\bep(160,120)(0,-10){%
\put(0,-10){\RectT{3}{1}{\TextCenter{$k$}}}%
\put(0,0){\RectT{5}{3}{\TextTopLeftIn{\ensuremath{s_N-1}}{\ensuremath{p_N}}}}%
\put(0,30){\RectT{10}{3}{\TextTopLeftIn{\ensuremath{s_2-1}}{\ensuremath{p_2}}}}%
\put(0,60){\RectT{15}{4}{\TextTopLeftIn{\ensuremath{s_1-1}}{\ensuremath{p_1}}}}%
}\eep}}
\newcommand{\FieldContentkk}{{\unitlength=0.34mm\bep(160,120)(0,-10){%
\put(0,-10){\RectT{5}{1}{\TextCenter{$\scriptstyle s_N-1$}}}%
\put(50,-10){\RectT{1}{4}{\TextCenter{$ $}}}%
\put(0,0){\RectT{5}{3}{\TextTopLeftIn{\ensuremath{s_N-1}}{\ensuremath{p_N}}}}%
\put(60,20){\RectT{3}{1}{\TextCenter{$\scriptstyle k$}}}%
\put(0,30){\RectT{10}{3}{\TextTopLeftIn{\ensuremath{s_2-1}}{\ensuremath{p_2}}}}%
\put(0,60){\RectT{15}{4}{\TextTopLeftIn{\ensuremath{s_1-1}}{\ensuremath{p_1}}}}%
}\eep}}
\newcommand{\FieldContentWeyl}{{\unitlength=0.34mm\bep(150,110)(0,-10){%
\put(0,-10){\RectT{5}{1}{\TextCenter{$s_N-1$}}}%
\put(50,-10){\RectT{1}{4}{\TextCenter{$ $}}}%
\put(0,0){\RectT{5}{3}{\TextTopLeftIn{\ensuremath{s_N-1}}{\ensuremath{p_N}}}}%
\put(60,20){\RectT{4}{1}{\TextCenter{$\scriptstyle s_2-s_3-1$}}}%
\put(100,20){\RectT{1}{4}{\TextCenter{$ $}}}%
\put(0,30){\RectT{10}{3}{\TextTopLeftIn{\ensuremath{s_2-1}}{\ensuremath{p_2}}}}%
\put(110,50){\RectT{5}{1}{\TextCenter{$\scriptstyle s_1-s_2-1$}}}%
\put(160,50){\RectT{1}{5}{\TextCenter{$ $}}}%
\put(0,60){\RectT{16}{4}{\TextTopLeftIn{\ensuremath{s_1-1}}{\ensuremath{p_1}}}}%
\put(170,90){\RectT{4}{1}{\TextCenter{$k+1$}}}%
}\eep}}
\newcommand{\FieldContentWeylTensor}{{\unitlength=0.34mm\bep(220,110)(0,-10){%
\put(0,-10){\RectT{6}{3}{\TextTopLeftIn{\ensuremath{s_N}}{\ensuremath{p_N}}}}%
\put(0,20){\RectT{11}{3}{\TextTopLeftIn{\ensuremath{s_2}}{\ensuremath{p_2}}}}%
\put(0,50){\RectT{17}{5}{\TextTopLeftIn{\ensuremath{s_1}}{\ensuremath{p_1+1}}}}%
}\eep}}
\newcommand{\axeYLabelUpdown}[2]{\Add{TGR0}{#1}{-10}\put(-10,-\value{TGR0}){#2}\put(0,0){\vector(0,-1){#1}}}
\newcommand{\axeXLabel}[2]{\Add{TGR0}{#1}{-10}\put(\value{TGR0},5){#2}\put(0,0){\vector(1,0){#1}}}
\newcounter{YoungHeight}\newcounter{YoungWidth}
\newcounter{Mul1}\newcounter{Mul2}\newcounter{Mul3}\newcounter{Mul4}
\newcounter{A0}\newcounter{A1}\newcounter{A2}\newcounter{A3}\newcounter{A4}\newcounter{A5}\newcounter{A6}
\newcounter{B0}\newcounter{B1}\newcounter{B2}\newcounter{B3}
\newcounter{C1}\newcounter{C2}\newcounter{C3}\newcounter{C4}\newcounter{C6}\newcounter{C7}
\newcounter{D1}\newcounter{D2}\newcounter{D3}\newcounter{D4}\newcounter{D5}
\newcounter{T0}\newcounter{T1}
\newcounter{TGR0}
\newcounter{R0}\newcounter{R1}\newcounter{R2}\newcounter{R3}
\newcounter{AR0}\newcounter{AR1}\newcounter{AR2}\newcounter{AR3}\newcounter{AR5}
\newcounter{Dotted0}\newcounter{Dotted1}\newcounter{Dotted2}\newcounter{Dotted3}
\newcounter{reptA}
\newlength{\txtHShift}
\newlength{\txtWidth}
\newcommand{\HalfLength}[2]{\setcounter{Mul1}{#1}\setcounter{Mul2}{#1}\addtocounter{Mul1}{\value{Mul2}}\addtocounter{Mul1}{\value{Mul2}}%
\addtocounter{Mul1}{\value{Mul2}}\addtocounter{Mul1}{\value{Mul2}}\setcounter{#2}{\value{Mul1}}}
\newcommand{\Add}[3]{\setcounter{#1}{#2}\addtocounter{#1}{#3}}
\newcommand{\Length}[1]{#10}
\newcommand{\YoungScale}{}
\newcommand{\reptInit}[1]{{\setcounter{reptA}{#1}}}
\newcommand{\reptbyone}[1]{{\ensuremath{{#1}_{\arabic{reptA}}}\addtocounter{reptA}{1}}}
\newcommand{\shiftedText}[2]{{\hspace{#1}#2}}
\newcommand{\calcHShift}[1]{\settowidth{\txtWidth}{#1}\setlength{\txtHShift}{-0.5\txtWidth}}
\newcommand{\TextCenter}[3]{{\HalfLength{#2}{T0}%
\HalfLength{#3}{T1}\addtocounter{T1}{-3}\calcHShift{#1}%
\put(\value{T0},\value{T1}){\shiftedText{\txtHShift}{#1}}}}
\newcommand{\TextTop}[3]{{\calcHShift{#1}\HalfLength{#2}{T0}\Add{T1}{\Length{#3}}{-9}\put(\value{T0},\value{T1}){\shiftedText{\txtHShift}{#1}}}}
\newcommand{\TextLeftIn}[3]{{\HalfLength{#3}{T1}\addtocounter{T1}{-5}\put(2,\value{T1}){#1}}}
\newcommand{\TextRight}[3]{{\HalfLength{#3}{T1}\addtocounter{T1}{-5}\Add{T0}{\Length{#2}}{2}\put(\value{T0},\value{T1}){#1}}}
\newcommand{\TextTopLeftIn}[4]{{\TextTop{#1}{#3}{#4}\TextLeftIn{#2}{#3}{#4}}}
\newcommand{\CrossMarker}[2]{{\put(\value{#1},\value{#2}){\circle{4}}}}
\newcommand{\Contraction}[5]{{\Add{C1}{\Length{#1}}{5}\Add{C2}{\Length{#2}}{5}\Add{C3}{\Length{#3}}{5}\Add{C4}{\Length{#4}}{5}%
\Add{C6}{#1}{#3}\HalfLength{\value{C6}}{C6}\Add{C7}{#2}{#4}\HalfLength{\value{C7}}{C7}\Add{C7}{\value{C7}}{#5}%
\qbezier(\value{C1},\value{C2})(\value{C6},\value{C7})(\value{C3},\value{C4})\CrossMarker{C1}{C2}\CrossMarker{C3}{C4}}}
\newcommand{\dottedline}[2]{{\HalfLength{#1}{Dotted0}\HalfLength{#2}{Dotted1}\Add{Dotted2}{#1}{#2}\HalfLength{\value{Dotted2}}{Dotted3}%
\qbezier[\value{Dotted3}](0,0)(\value{Dotted0},\value{Dotted1})(\Length{#1},\Length{#2})}}
\newcommand{\BlockA}[2]{{\YoungScale\bep(\Length{#1},\Length{#2}){\Add{A1}{#1}{1}\Add{A2}{#2}{1}}%
\multiput(0,0)(10,0){\value{A1}}{\line(0,1){\Length{#2}}}\multiput(0,0)(0,10){\value{A2}}{\line(1,0){\Length{#1}}}%
\setcounter{YoungHeight}{\Length{#2}}\setcounter{YoungWidth}{\Length{#1}}\eep}}
\newcommand{\BlockB}[4]{{\YoungScale\Add{B3}{\Length{#2}}{\Length{#4}}%
\bep(\Length{#1},\value{B3})\put(0,\Length{#4}){\BlockA{#1}{#2}}%
\put(0,0){\BlockA{#3}{#4}}\setcounter{YoungHeight}{\value{B3}}\setcounter{YoungWidth}{\Length{#1}}\eep}}
\newcommand{\BlockC}[6]{{\YoungScale\Add{C3}{\Length{#4}}{\Length{#6}}\Add{C4}{\value{C3}}{\Length{#2}}%
\bep(\Length{#1},\value{C4})\put(0,\value{C3}){\BlockA{#1}{#2}}\put(0,\Length{#6}){\BlockA{#3}{#4}}%
\put(0,0){\BlockA{#5}{#6}}\setcounter{YoungHeight}{\value{C4}}\setcounter{YoungWidth}{\Length{#1}}\eep}}
\newcommand{\BlockADotted}[2]{{\YoungScale\bep(\Length{#1},\Length{#2}){\Add{A1}{#1}{1}\Add{A2}{#2}{1}}%
\multiput(0,0)(10,0){\value{A1}}{\dottedline{0}{#2}}\multiput(0,0)(0,10){\value{A2}}{\dottedline{#1}{0}}%
\setcounter{YoungHeight}{\Length{#2}}\setcounter{YoungWidth}{\Length{#1}}\eep}}
\newcommand{\RectT}[3]{\bep(\Length{#1},\Length{#2})\put(0,0){\line(1,0){\Length{#1}}}\put(0,0){\line(0,1){\Length{#2}}}%
\put(\Length{#1},\Length{#2}){\line(-1,0){\Length{#1}}}\put(\Length{#1},\Length{#2}){\line(0,-1){\Length{#2}}}#3{#1}{#2}\eep}
\newcommand{\RectARow}[2]{{\bep(\Length{#1},10)\put(0,0){\RectT{#1}{1}{\TextTop{#2}}}\eep}}
\newcommand{\RectBRow}[4]{{\bep(\Length{#1},20)\put(0,0){\RectT{#2}{1}{\TextTop{#4}}}%
\put(0,10){\RectT{#1}{1}{\TextTop{#3}}}\eep}}
\newcommand{\RectAYoung}[3]{{\bep(0,0)#3\eep\Add{A0}{\value{YoungWidth}}{\Length{#1}}%
\Add{A1}{\value{YoungHeight}}{-10}\bep(\value{A0},\value{YoungHeight})%
\put(\value{YoungWidth},\value{A1}){\RectT{#1}{1}{\TextTop{#2}}}\eep}}
\newcommand{\RectBYoung}[3]{{\bep(0,0)\put(0,0){#3}\eep\Add{A0}{\value{YoungHeight}}{10}%
\bep(\Length{#1},\value{A0})%
\put(0,\value{YoungHeight}){\RectT{#1}{1}{\TextTop{#2}}}\eep}}
\newcommand{\ABRectT}[5]{{\Add{AR0}{\Length{#1}}{10}\Add{AR1}{\Length{#2}}{-\Length{#3}}\bep(\value{AR0},\Length{#2}){%
\Add{AR2}{\Length{#1}}{-10}\put(0,0){\line(1,0){\value{AR2}}}\put(\value{AR2},0){\line(0,1){\Length{#4}}}%
\put(\value{AR2},\Length{#4}){\line(1,0){10}}\put(\value{AR2},0){\dottedline{1}{0}}%
\put(\Length{#1},0){\dottedline{0}{#4}}%
\Add{AR3}{\Length{#2}}{-\Length{#3}}\addtocounter{AR3}{-\Length{#4}}%
\put(0,0){\line(0,1){\Length{#2}}}\put(0,\Length{#2}){\line(1,0){\value{AR0}}}%
\put(\Length{#1},\Length{#4}){\line(0,1){\value{AR3}}}\put(\Length{#1},\value{AR1}){\line(1,0){10}}%
\put(\value{AR0},\value{AR1}){\line(0,1){\Length{#3}}}\put(\Length{#1},\value{AR1}){\dottedline{0}{#3}}%
\put(\value{AR2},0){\TextCenter{$\ \beta_{#5}$}{1}{#4}}%
\put(\Length{#1},\value{AR1}){\TextCenter{$\alpha_{#5}$}{1}{#3}}%
\Add{AR5}{#2}{-#3}\addtocounter{AR5}{-#4}\put(\Length{#1},\Length{#4}){\TextLeftIn{$\epsilon_{#5}$}{2}{\value{AR5}}}%
\put(0,0){\TextLeftIn{$p_{#5}$}{#1}{#2}}%
\put(0,0){\TextCenter{$s_{#5}$}{#1}{#2}}%
}\eep}}
\newcommand{\BlockAEnum}[4]{{\YoungScale\Add{A1}{#1}{1}\Add{A2}{#2}{1}\Add{A3}{\Length{#1}}{5}\Add{A4}{\Length{#2}}{-10}\bep(\value{A3},\Length{#2}){%
\multiput(0,0)(10,0){\value{A1}}{\line(0,1){\Length{#2}}}\multiput(0,0)(0,10){\value{A2}}{\line(1,0){\Length{#1}}}%
\reptInit{#3}\multiput(\value{A3},\value{A4})(0,-10){#2}{\reptbyone{#4}}
}\eep}}
\newcommand{\BlockBEnum}[5]{{\YoungScale\Add{B2}{\Length{#1}}{10}\Add{B1}{#2}{1}\Add{B3}{\Length{#2}}{\Length{#4}}\bep(\value{B2},\value{B3})\put(0,\Length{#4}){\BlockAEnum{#1}{#2}{1}{#5}}%
\put(0,0){\BlockAEnum{#3}{#4}{\value{B1}}{#5}}\eep}}
\newcommand{\BlockCEnum}[7]{{\YoungScale\Add{C2}{\Length{#1}}{10}\Add{C1}{#2}{#4}\addtocounter{C1}{1}%
\Add{C3}{\Length{#4}}{\Length{#6}}\Add{C4}{\value{C3}}{\Length{#2}}\bep(\value{C2},\value{C4})%
\put(0,\Length{#6}){\BlockBEnum{#1}{#2}{#3}{#4}{#7}}\put(0,0){\BlockAEnum{#5}{#6}{\value{C1}}{#7}}\eep}}
\newcommand{\BlockDEnum}[9]{{\YoungScale\Add{D5}{\Length{#1}}{10}\Add{D4}{#2}{#4}\addtocounter{D4}{#6}\addtocounter{D4}{1}%
\Add{D1}{\Length{#8}}{\Length{#6}}\Add{D2}{\value{D1}}{\Length{#4}}\Add{D3}{\value{D2}}{\Length{#2}}%
\bep(\value{D5},\value{D3})\put(0,0){\BlockAEnum{#7}{#8}{\value{D4}}{#9}}\put(0,\Length{#8}){\BlockCEnum{#1}{#2}{#3}{#4}{#5}{#6}{#9}}\eep}}
\newcommand{\YoungBAzA}{{\bep(60,20){\put(0,0){\BlockB{2}{1}{1}{1}}\Contraction{1}{1}{5}{1}{20}\put(22,2){\ensuremath{\otimes_{\scriptstyle{sl(n)}}}} \put(50,0){\YoungAA{}}}\eep}}
\newcommand{\YoungBAzB}{{\bep(60,20){\put(0,0){\BlockB{2}{1}{1}{1}}\Contraction{0}{0}{5}{1}{10}\put(22,2){\ensuremath{\otimes_{\scriptstyle{sl(n)}}}}\put(50,0){\YoungAA{}}}\eep}}
\newcommand{\YoungBAzAB}{{\bep(60,20){\put(0,0){\BlockB{2}{1}{1}{1}}\Contraction{0}{0}{5}{0}{20}\Contraction{1}{1}{5}{1}{20}\put(22,2){\ensuremath{\otimes_{\scriptstyle{sl(n)}}}}\put(50,0){\YoungAA{}}}\eep}}
\newcommand{\YoungA}{\BlockA{1}{1}}
\newcommand{\YoungB}{\BlockA{2}{1}}
\newcommand{\YoungC}{\BlockA{3}{1}}
\newcommand{\YoungAA}{\BlockA{1}{2}}
\newcommand{\YoungBA}{\BlockB{2}{1}{1}{1}}
\newcommand{\YoungBB}{\BlockA{2}{2}}
\newcommand{\YoungCA}{\BlockB{3}{1}{1}{1}}
\newcommand{\YoungCB}{\BlockB{3}{1}{2}{1}}
\newcommand{\YoungAAA}{\BlockA{1}{3}}
\newcommand{\YoungBAA}{\BlockB{2}{1}{1}{2}}
\newcommand{\YoungBBA}{\BlockB{2}{2}{1}{1}}
\newcommand{\YoungBBB}{\BlockA{2}{3}}
\newcommand{\YoungCAA}{\BlockB{3}{1}{1}{2}}
\newcommand{\YoungCBA}{\BlockC{3}{1}{2}{1}{1}{1}}
\newcommand{\YoungCCB}{\BlockB{3}{2}{2}{1}}
\newcommand{\YoungDCB}{\BlockC{4}{1}{3}{1}{2}{1}}
\newcommand{\YoungAAAA}{\BlockA{1}{4}}
\newcommand{\YoungBAAA}{\BlockB{2}{1}{1}{3}}
\renewcommand{\theequation}{\arabic{section}.\arabic{equation}}
\newcommand{\dSAdS}{{\ensuremath{(A)dS_d}}}
\newcommand{\AlgebraFont}[1]{\mathfrak{#1}}
\newcommand{\iso}{{\ensuremath{\AlgebraFont{iso}(d-1,1)}}}
\newcommand{\lorentz}{{\ensuremath{\AlgebraFont{so}(d-1,1)}}}
\newcommand{\msv}{{\ensuremath{\AlgebraFont{so}(d-1)}}}
\newcommand{\mls}{{\ensuremath{\AlgebraFont{so}(d-2)}}}
\newcommand{\sons}{{\ensuremath{\scriptstyle{so(n)}}}}
\newcommand{\slns}{{\ensuremath{\scriptstyle{sl(n)}}}}
\newcommand{\sod}{{\ensuremath{so(d)}}}
\newcommand{\sln}{{\ensuremath{sl(n)}}}
\newcommand{\sld}{{\ensuremath{sl(d)}}}
\newcommand{\ads}{{\ensuremath{\AlgebraFont{so}(d-1,2)}}}
\newcommand{\ds}{{\ensuremath{\AlgebraFont{so}(d,1)}}}
\newcommand{\Verma}[2]{\ensuremath{\EuScript{D}\left({\textstyle{#1}};#2\right)}}
\newcommand{\Irrep}[2]{\ensuremath{\EuScript{H}\left({\textstyle{#1}};#2\right)}}
\newcommand{\pl}{\partial}
\newcommand{\be}{\begin{equation}}
\newcommand{\ee}{\end{equation}}
\newcommand{\bes}{\begin{split}}
\newcommand{\es}{\end{split}}
\newcommand{\bee}{\begin{eqnarray}}
\newcommand{\eee}{\end{eqnarray}}
\newcommand{\beee}{\begin{array}}
\newcommand{\bem}{\begin{multline}}
\newcommand{\eem}{\end{multline}}
\theoremstyle{definition}
\newtheorem{Definition}{Definition}[section]
\theoremstyle{plain}
\newtheorem*{Lemma}{Lemma}
\newtheorem*{Corollary}{Corollary}
\theoremstyle{remark}
\newtheorem*{Comment}{Comment}
\newtheorem*{Sketch}{Sketch of the proof}
\newtheorem{Example}{Example}[subsection]
\newcommand{\bec}{\begin{Comment}}
\newcommand{\ec}{\end{Comment}}
\newcommand{\Y}[1]{{\ensuremath{\mathbf{Y}\{#1\}}}}
\newcommand{\Yy}{\ensuremath{\mathbf{Y}}}
\newcommand{\tr}[3]{{#1}^{#3}_{\phantom{#3}#3#2}}
\newcommand{\ComplexB}[2]{\ensuremath{0 \longrightarrow #1\longrightarrow #2 \longrightarrow 0}}
\newcommand{\ComplexC}[3]{\ensuremath{0 \longrightarrow #1\longrightarrow #2 \longrightarrow #3 \longrightarrow 0}}
\newcommand{\ComplexD}[4]{\ensuremath{0 \longrightarrow #1\longrightarrow #2 \longrightarrow #3 \longrightarrow #4  \longrightarrow 0}}
\newcommand{\fm}[1]{_{\mathbf{{#1}}}}
\newcommand{\aA}{{\ensuremath{\mathcal{A}}}}
\newcommand{\aB}{{\ensuremath{\mathcal{B}}}}
\newcommand{\aC}{{\ensuremath{\mathcal{C}}}}
\newcommand{\coh}{{\mathsf{H}}}
\newcommand{\irrep}{\textit{irrep}}
\newcommand{\irreps}{\textit{irreps}}
\newcommand{\uirrep}{\textit{uirrep}}
\newcommand{\onmassshell}{{\it on-mass-shell}}
\newcommand{\onshell}{{\it on-shell}}
\newcommand{\offshell}{{\it off-shell}}
\newcommand{\particle}{{\it particle}}
\newcommand{\particles}{{\it particles}}
\newcommand{\fields}{{\it fields}}
\newcommand{\field}{{\it field}}
\newcommand{\dual}{{\it dual}}
\newcommand{\Dual}{{\it Dual}}
\newcommand{\auxiliary}{{\it auxiliary}}
\newcommand{\Stueckelberg}{{\it Stueckelberg}}
\newcommand{\dynamical}{{\it dynamical}}
\newcommand{\physical}{{\it physical}}
\newcommand{\redundant}{{\it redundant}}
\newcommand{\metric}{{\it metric-like}}
\newcommand{\minimalformulation}{{\it minimal formulation}}
\newcommand{\interpretation}{{\it interpretation}}
\newcommand{\interpretations}{{\it interpretations}}
\newcommand{\framelike}{{\it frame-like}}
\newcommand{\Framelike}{{\it Frame-like}}
\newcommand{\lowestgrade}{{\it lowest grade}}
\newcommand{\partition}[1]{{\ensuremath{\mathtt{P}(#1)}}}
\newcommand{\dimension}{{\ensuremath{\mathrm{dim}}}}
\newcommand{\WW}{{\ensuremath{\mathcal{W}}}}
\newcommand{\smallpic}[1]{{\unitlength=0.2mm#1}}
\newcommand{\module}{{\ensuremath{\mathcal{R}}}}
\newcommand{\moduleP}{{\ensuremath{\mathcal{P}}}}
\newcommand{\Comp}{{\ensuremath{\mathcal{C}}}}
\newcommand{\kmax}[1]{{k^{{\scriptstyle max}}_{#1}}}
\newcommand{\YProjector}[1]{{{\mathtt\Pi}\left[#1\right]}}
\newcommand{\DD}{{{\mathcal{D}}}}
\newcommand{\ferm}{{\frac12}}
\newcommand{\fpl}{{\slash\!\!\!\partial}}
\newcommand{\fsp}{{;}}
\begin{document}
{\begin{titlepage}
\begin{flushright}
\vspace{1mm}
FIAN/TD/14-08\\
\end{flushright}

\vspace{1cm}

\begin{center}
{\bf \Large  Mixed-Symmetry Massless Fields in Minkowski space
Unfolded} \vspace{1cm}

\textsc{E.D.
Skvortsov\footnote{skvortsov@lpi.ru}}

\vspace{.7cm}

{ I.E.Tamm Department of Theoretical Physics, P.N.Lebedev Physical
Institute,\\Leninsky prospect 53, 119991, Moscow, Russia}

\end{center}

\vspace{0.5cm}
\begin{abstract}
The unfolded formulation for arbitrary massless mixed-symmetry bosonic and fermionic fields in
Minkowski space is constructed. The unfolded form is proved to be uniquely determined by the requirement that all gauge symmetries are manifest. The unfolded equations have the form of a covariant constancy condition. The gauge fields and gauge parameters are differential forms with values in certain irreducible Lorentz tensors. The unfolded equations for bosons determine completely those for fermions. The proposed unfolded
formulation also contains dual formulations for massless mixed-symmetry fields.
\end{abstract}

\end{titlepage}

\tableofcontents

\section*{Introduction}

In four dimensional Minkowski space spin degrees of freedom are
known to be classified by non-negative integers or half-integers.
However, in dimensions higher than four spin degrees of freedom are
described by a set of (half)integers, according to the weights of
the Wigner's little group. The simplest and the most developed are
the cases of totally symmetric \cite{Fronsdal:1978rb,
Fradkin:1986qy, Fradkin:1987ks, Vasiliev:1990en, Francia:2002pt,
Francia:2002aa, Vasiliev:2003ev, Bekaert:2003uc, Sorokin:2004ie,
Bekaert:2005vh, Sagnotti:2005ns} and totally antisymmetric fields
\cite{Townsend:1979hd, Sezgin:1980tp, Blau:1989bq}. All other
types are referred to collectively as mixed-symmetry.
Mixed-symmetry fields naturally arise in field theories in
higher-dimensions, for instance, in (super)string theory
\cite{Bonelli:2003kh}.

The simplest mixed-symmetry fields were originally considered in
\cite{Curtright:1980yk} and \cite{Aulakh:1986cb}. The most general
type of mixed-symmetry fields was studied in
\cite{Ouvry:1986dv,Labastida:1986gy,Labastida:1986ft,Labastida:1987kw}  though,
the rigorous proof of the fact that the proposed in
\cite{Labastida:1986ft} fields/gauge symmetries content and
equations describe massless \particles\ properly was given in
\cite{Bekaert:2002dt, Bekaert:2006ix}. In terms of BRST approach
mixed-symmetry fields, characterized by at most two non-zero
weights, were studied in \cite{Burdik:2001hj, Burdik:2000kj,
Moshin:2007jt, Buchbinder:2007ix}. An elegant approach to
the description of mixed-symmetry fields was proposed in
\cite{Zinoviev:2002ye, Zinoviev:2003ix, Zinoviev:2003dd} on the basis of the simplest mixed-symmetry fields.

In this paper massless mixed-symmetry fields are reformulated within the unfolded approach \cite{Vasiliev:1988xc, Vasiliev:1988sa, Vasiliev:1992gr} because it is the unfolded approach that underlies the full nonlinear theory
of interacting massless fields with arbitrary totally symmetric spins
\cite{Vasiliev:2003ev}, being the only approach succeeded in
constructing the full theory, though exhaustive results concerned
with cubic vertices of higher-spin fields were obtained within the
light-cone approach in \cite{Metsaev:1993mj,Metsaev:2005ar,Metsaev:2007rn}.
Therefore, to unfold an arbitrary spin, viz., mixed-symmetry, fields
in the Minkowski space is considered as the first step towards the full
nonlinear theory of arbitrary spin fields.

The main statement of the paper is that a free massless field with spin degrees of freedom characterized by an arbitrary bosonic or fermionic unitary irreducible
representation of the Wigner's little algebra can be uniquely described within the unfolded approach, in which all gauge symmetries are manifest. The unfolded system has the form of a covariant constancy equation. The gauge fields and gauge parameters are differential forms on the Minkowski space with values in certain irreducible representations of the Lorentz algebra, i.e., irreducible tensors or spin-tensors. The full unfolded system is described in terms of a single nilpotent operator $\sigma_-$, whose cohomology groups correspond to independent differential gauge parameters, dynamical fields, gauge-invariant equations and Bianchi identities.

Another advantage of the unfolded approach is in that the equations for bosons and fermions have literally the same form, the only difference being in change of tensor representations, in which the fields takes values, by the corresponding spin-tensors.
The form and the order of dynamical equations, second for bosons and first for fermions, turns out to be encoded in $\sigma_-$ cohomology. In fact the unfolded system is constructed for the bosonic case and, then, proved to have the same form for fermions. The similarity between bosons and fermions within the unfolded approach can have deep applications in theories with supersymmetries.

Despite the deep relations of the unfolded approach to the nonlinear theory of higher-spin fields, unfolding by itself provides a very powerfull
method for analysis of dynamical systems. For instance,
once some linear dynamical system is unfolded it is given a direct interpretation in terms of Lie algebras/modules
and all gauge symmetries become manifest.

The paper is organized as follows: the main result, i.e., the unfolded
form of equations describing a massless field with the spin
that corresponds to an arbitrary irreducible representation of the
Wigner's little algebra is stated in Section \ref{Results}. All the necessary
information about mixed-symmetry fields in the Minkowski space-time is
collected in Section \ref{FlatMS}. The basic facts concerning the
unfolded approach, viz., the very definition, the relation to Lie
algebras/modules, to the Chevalley-Eilenberg cohomology are recalled in
Section \ref{Unfld}, illustrated on the examples of a scalar field,
spin-one field and totally symmetric spin-$s$ and spin-$(s+\frac12)$ fields in
Section \ref{UnfldExamples}. The proof of the general statement of Section
\ref{Results} is in Section \ref{MSUnfld}. The physical degrees of freedom are analyzed in Section \ref{MSPDOFCounting}. The discussion of the results and conclusions are in Section \ref{Conclusions}.
Multi-index notation and basic facts on Young diagrams and irreducible representations are collected in Appendices.
\section*{Conventions}
As the most general type of irreducible representations of
orthogonal algebras, viz., the Wigner's little algebra, the Lorentz
algebra, is considered, the essential use is made of Young diagrams'
language. A certain Young diagram is denoted by $\Yy$ with
subscripts or directly enumerating the lengths of the rows as
\Y{s_1,s_2,..} or, when rows of equal lengths are combined to
blocks, as \Y{(s_1,p_1),(s_2,p_2),...}, $p_i$ being the number of
rows of length $s_i$. Loosely speaking we do not make any difference
between irreducible finite-dimensional representations of
orthogonal algebras, Young diagrams\footnote{In addition to the Young symmetry conditions, extra restrictions (with the aid of invariant tensors: Levi-Civita for \sld, metric and Levi-Civita for \sod) have to be imposed on the tensors to make them irreducible. In what follows it is important that irreducible \sod-tensors are traceless. No special consideration is given to (anti)-self dual fields, see Appendix B.}  and the corresponding irreducible (spin)-tensors\footnote{In the case of fermionic representations of orthogonal algebras, i.e., spin-tensors, the tensor part of a spin-tensor (all but one spinor indices can be converted pairwise to tensor indices by means of $\Gamma$-matrices, hence, we consider spin-tensors with one spinor index only) is characterized by Young diagram, which is labeled by the subscript $\frac12$, e.g., an irreducible rank-two symmetric tensor-spinor $\psi^{\alpha\fsp ab}$ satisfies ${\Gamma^\alpha_a}_\beta\psi^{\beta\fsp ab}=0$ and belongs to $\Y{2}_\ferm$. The connection with the standard Gelfan-Zeitlin labels is obvious. Additional conditions, viz., Majorana, Weyl and Majorana-Weyl are irrelevant to the problems concerned.}, e.g., rank-two symmetric
traceless tensor-valued field $\phi^{ab}$, i.e.,
$\phi^{ab}=\phi^{ba}$ and $\eta_{ab}\phi^{ab}=0$, can be
equivalently denoted either as $\phi^{\smallpic{\YoungB}}$ or
$\phi^{\Yy}$ with $\Yy=\Y{2}\equiv\Y{(2,1)}$. The scalar
representation $\Y{0}$ is denoted by $\bullet$. Unless otherwise
stated, all Young diagrams are of orthogonal algebras, viz.,
\lorentz\ or \mls. For more detail on Young diagrams see in Appendix B. Greek indices $\mu$, $\nu$,...=0...(d-1) are the
world indices of the Minkowski space-time $\mathcal{M}_d$. $d\equiv
dx^\mu\frac\pl{\pl x^\mu}$ is the exterior differential on
$\mathcal{M}_d$. The degree of differentials forms on
$\mathcal{M}_d$ is indicated by the bold subscript, e.g., a
degree-$q$ differential form $\omega$ on $\mathcal{M}_d$ with values
in \lorentz-\irrep\ characterized by the Young diagram $\Yy$ is
denoted as $\omega^{\Yy}\fm{q}$ (loosely speaking $\Yy$-valued
degree-$q$ form $\omega^{\Yy}\fm{q}$). The wedge symbol $\wedge$ is
systematically omitted. Lowercase Latin letters $a$,
$b$,...=0...(d-1) are vector indices of \lorentz, fiber indices of the sections of
tensor bundles over the Minkowski space-time. Greek indices $\alpha, \beta, \gamma=1...2^{[\frac{d}2]}$ are fiber spinor indices of \lorentz. The multi-index
condensed notation is used in the paper: the (anti)-symmetrization
is denoted by placing the corresponding indices in (square) round
brackets, for details on the multi-index notation see Appendix A.

\section{Summary of Results}\label{Results}

The main statement of the paper is that given a unitary irreducible
bosonic(fermionic) representation of the massless Wigner's little algebra \mls,
which is characterized by Young diagram
$\Yy=\Y{(s_1,p_1),...,(s_N,p_N)}$ ($\Yy=\Y{(s_1,p_1),...,(s_N,p_N)}_\ferm$), there exists a uniquely
determined unfolded system that describes a massless spin-\Yy\ field, with all gauge symmetries being manifest. The system has
the form of a covariant constancy equation
\begin{align}\label{ResultsFullSystem}
    \DD\omega\fm{p}&=0, &\qquad \omega\fm{p}&\in\WW\fm{p},\nonumber\\
    \delta \omega\fm{p}&=\DD\xi\fm{p-1},& \qquad \xi\fm{p-1}&\in\WW\fm{p-1},\nonumber\\
    \delta \xi\fm{p-1}&=\DD\xi\fm{p-2},&\qquad \xi\fm{p-2}&\in\WW\fm{p-2},\nonumber\\
    ..&..,& ..&..,\nonumber\\
    \delta \xi\fm{1}&=\DD\xi\fm{0}, &\qquad \xi\fm{0}&\in\WW\fm{0},
    \end{align}
where $\WW=\bigoplus_{q=0}^{\infty}\WW\fm{q}$ is certain graded space,  $\DD$ is a nilpotent operator of degree $(+1)$, $\DD:\WW\fm{q}\rightarrow\WW\fm{q+1}$ and $\DD^2=0$. Gauge fields take values in $\WW\fm{p}$, where $p=\sum_{i=1}^{i=N}p_i$ is the height of Young diagram \Yy, the first level gauge parameters in $\WW\fm{p-1}$,
the second level gauge parameters in $\WW\fm{p-2}$ and so on. The gauge invariance is manifest by virtue of $\DD^2=0$. The reducibility of gauge transformations is similar to those of totally anti-symmetric fields.

Space $\WW\fm{p}$, which contains the gauge fields of the
unfolded system, is a graded by nonnegative integer $g=0,1,...$ set
of differential forms
$\WW\fm{p}=\{\omega^{\Yy_0}\fm{q_0},\omega^{\Yy_1}\fm{q_1},...,\omega^{\Yy_g}\fm{q_g},...\}$,
$q_0=p$. Diagrams $\Yy_g$ that characterize \lorentz-irreducible
representations, in which the fields and gauge parameters take values,
are uniquely determined by the initial diagram
$\Yy=\Y{(s_1,p_1),...,(s_N,p_N)}_{(\ferm)}$ of \mls.

The \dynamical\ field is
incorporated in a $p$-form $\omega^{\Yy_0}\fm{p}\in\WW^{g=0}\fm{p}$
that takes values in the irreducible representation(\irrep) of the
Lorentz algebra \lorentz\ that is characterized by Young diagram
$\Yy_0$ of the
form\be\label{ResultsPhysicalVielbein}\FieldContentDynamical,\ee
i.e., it is obtained by cutting off the first column of $\Yy$. All
other gauge fields in the system are auxiliary and can be expressed
in terms of derivatives of $\omega^{\Yy_0}\fm{p}$. It is convenient to
enumerate the Lorentz-\irreps\ in which gauge fields
$\omega^{\Yy_g}\fm{q_g}$ take values by a pair $\{n,k\}$ of
integers. Roughly speaking, the first integer is related to the
number of the block of $\Yy$, $n=N,...,0$, the second one is related
to the relative length of the $n$-th and $(n+1)$-th blocks. The
\lorentz-\irreps\ $\Yy_{g=0}$,...,$\Yy_{g=(s_N-1)}$ are given by
$\Yy_{\{n,k\}}$ with $n=N$, $k=0...(s_N-1)$ of the form
\be\label{ResultsFirstAuxiliary}\FieldContentN.\ee
The diagrams $\Yy_{g}$ with $g=s_N...(s_{N-1}-1)$ are given by $\Yy_{\{n,k\}}$ with $n=N-1$, $k=0...(s_{N-1}-s_N-1)$ of the form
\be\FieldContentkk.\ee
and analogously for the rest of $\Yy_{g}$ with $g<s_1$. The diagrams $\Yy_{g}$ with $g=s_1,s_1+1,...$ are given by $\Yy_{\{n,k\}}$ with $n=0$, $k=0,1,...$ of the form
\be\FieldContentWeyl.\ee
Gauge fields $\omega^{\Yy_g}\fm{q_g}$ for $g\sim\{n=N,k\}$ are $(p_1+...+p_N)$-forms, gauge fields $\omega^{\Yy_g}\fm{q_g}$ for $g\sim\{n=N-1,k\}$ are $(p_1+...+p_{N-1})$-forms, ..., gauge fields $\omega^{\Yy_g}\fm{q_g}$ for $g\sim\{n=0,k\}$ are zero-forms. Hence, form degree function $q_g$ is completely defined. The grade $g$ is equal to the element number (starting from zero) in the set of pairs $\{n,k\}$ ordered by $k$ in increasing order and, then, by $n$ in decreasing order.

Space $\WW\fm{p-1}$, which contains the first level gauge parameters of the system, is a set of forms with values in the same \lorentz-\irreps\ as gauge fields but the form degree is less by one, i.e., $\WW\fm{p-1}=\{\xi^{\Yy_0}\fm{q_0-1},\xi^{\Yy_1}\fm{q_1-1},...,\xi^{\Yy_g}\fm{q_g-1},...\}$. The sector $\WW\fm{p-1}^{g}$ for $g\geq s_1$ is trivial du to $q_{g}=0$. Analogously, spaces $\WW\fm{q}$ for $q<p-1$ and $q>p$ can be defined.
To sum up, the element of space $\WW\fm{p\pm i}$ at grade $g\sim\{n,k\}$ is a degree-$(q_g\pm i)$ form with values in $\Yy_g\equiv\Yy_{\{n,k\}}$ \irrep\ of the Lorentz algebra.

The Minkowski background space is described in terms of vielbein(tetrad) $h^a_\mu dx^\mu$ and Lorentz spin-connection $\varpi^{a,b}_\mu dx^\mu$, which determines Lorentz-covariant derivative $D_L=d+\varpi$.

When reduced to $\WW^g\fm{q}$ the full system has the form
\begin{align} \label{ResultsSystem}
    D_L\omega^{\Yy_g}\fm{q_g}&=\sigma_-\left(\omega^{\Yy_{g+1}}\fm{q_{g+1}}\right), & &
    \omega^{\Yy_g}\fm{q_g}\in\WW^g\fm{p},\quad \omega^{\Yy_{g+1}}\fm{q_{g+1}}\in\WW^{g+1}\fm{p}, \nonumber\\
    \delta\omega^{\Yy_g}\fm{q_g}&=D_L\xi^{\Yy_g}\fm{q_g-1}+\sigma_-\left(\xi^{\Yy_{g+1}}\fm{q_{g+1}-1}\right), & &
    \xi^{\Yy_g}\fm{q_g-1}\in\WW^g\fm{p-1},\quad \xi^{\Yy_{g+1}}\fm{q_{g+1}-1}\in\WW^{g+1}\fm{p-1},\nonumber\\
    \delta\xi^{\Yy_g}\fm{q_g-1}&=...,& & &
\end{align}
where operator $\DD$ is a sum $\DD=D_L-\sigma_-$ of
Lorentz-covariant derivative $D_L$ and certain nilpotent operator
$\sigma_-:\bigwedge^q\otimes\Yy_{g+1}\rightarrow\bigwedge^{q+\Delta_g+1}\otimes\Yy_g$,
$\Delta_g=q_g-q_{g+1}\geq0$, $(\sigma_-)^2=0$, built of
background vielbein $h^a_\mu dx^\mu$, which is unambiguously fixed
by the symmetry of $\Yy_g$ and $\Yy_{g+1}$. $\sigma_-$ contracts $(\Delta_g+1)$ vielbeins $h^a$ with the tensor representing $\Yy_{g+1}$
to obtain the tensor with the symmetry of $\Yy_g$, appropriate Young
symmetrizers are implied.

Since $\sigma_-$ affects tensor indices only it is correctly defined on spin-tensors too and does not violate the $\Gamma$-tracelessness condition.
The unfolded equations for fermions have the same form as for bosons, the irreducible tensors are to be replaced with corresponding spin-tensors.

The case of the last block of the length one, i.e., single column,
is not special but requires some comments. Since $(s_N-1)=0$, it is not possible to add a cell to the
bottom-left of the $N$-th block of $\Yy_{0}$, therefore, $\Yy_{\{n=N,k=0\}}\equiv\Yy_{0}$ is the only diagram with $n=N$ and
diagram $\Yy_{\{n=N-1,k=0\}}\equiv\Yy_{g=1}$ has the symmetry of $\Y{(s_1-1,p_1),...,(s_{N-1}-1,p_{N-1}),(1,p_N+1)}$.

Subspace $\WW^{\{n,k\}}\fm{q}$ with definite $n$ forms an
irreducible module of \iso, whereas the subspace with definite both
$n$ and $k$ forms a finite-dimensional irreducible module of
\lorentz, i.e., an irreducible Lorentz tensor, characterized by
$\Yy_g$, as was stated above. The intervals of constancy of $q_g$
correspond to $g_{n,k}$ with definite $n$, i.e., the set of forms,
on which a certain irreducible \iso-module is realized, all have the
same degree.

Let us also note that $\WW\fm{k\geq p}$ are infinite-dimensional, whereas $\WW\fm{k<p}$ are finite-dimensional. The higher degree spaces $\WW\fm{k\geq p}$ correspond to the equations of motion ($\WW\fm{k=p+1}$) and Bianchi identities  ($\WW\fm{k>p+1}$), which manifest the gauge symmetries. Most of the equations express higher grade $g>0$ fields via the derivatives of \physical\ field $\omega^{\Yy_0}\fm{q_0}$ and only certain elements of $\WW^{g=2}\fm{k=p+1}$ impose on $\omega^{\Yy_0}\fm{q_0}$ second order dynamical equations. The significance of the fields with $g>0$ is to make all gauge symmetries be manifest.

\section{Mixed-Symmetry
Fields in Minkowski Space}\setcounter{equation}{0}\label{FlatMS} The types of the Minkowski
space \particles, being, by definition, in one-to-one
correspondence with unitary irreducible representations(\uirrep)
of the Poincare algebra \iso, in the case of four space-time
dimensions were classified by Wigner in \cite{Wigner:1939cj}.

Leaving out the details of the Wigner's construction, for recent
reviews and for generalization to an arbitrary space-time dimension $d$ see \cite{Bekaert:2006py}, important is that given
$m^2\geqslant0$ and a \uirrep\ $\Yy$ of the Wigner's little
algebra, being \msv\ for $m^2>0$ and \mls\ for $m^2=0$, there exists a
standard procedure to construct a \uirrep\ of \iso, which is
called a massive(massless) \particle\ of spin-$\Yy$. So-called continuous or infinite spin particles
\cite{Wigner:1939cj}, \cite{Bekaert:2005in} are not considered in this paper. Therefore, physical degrees of freedom for massive and massless \particles\ are classified by irreducible tensors of \msv\ and \mls, respectively.

More elaborated are the two
cases of totally-symmetric
spin-$s$
\particles\ $\Yy=\Y{s}\equiv\Y{(s,1)}$ \cite{Singh:1974qz, Fronsdal:1978rb, Bekaert:2005vh}, viz., scalar,
vector, graviton, and of totally anti-symmetric spin-$p$ \particles\
$\Yy=\Y{(1,p)}$ \cite{Townsend:1979hd, Sezgin:1980tp, Blau:1989bq}. The
others are referred to collectively as mixed-symmetry, the
simplest one being $\Yy=\Y{2,1}\equiv\Y{(2,1),(1,1)}$.

Yet different problem is to realize a \uirrep\ of \iso\ on the
solutions of a wave equation for a field $\phi_{\Yy_M}(x)$,
which takes values in a certain representation $\Yy_M$ of the Lorentz
algebra \lorentz, i.e., $\phi_{\Yy_M}(x)$ is a Lorentz tensor or a set of
tensors. As field theories free \particles\ can be
described in either non-gauge or gauge way, in the former case a \uirrep\ of \iso\ is realized on the
solutions of the wave equation directly, whereas for the latter case a \uirrep\ of \iso\
is realized on the quotient of the solutions by certain specific solutions,
called pure gauge. The wave equation $(\square+m^2)\phi_{\Yy_M}(x)=0$,
which fixes the quadratic Casimir $m^2$ of \iso, is generally
supplemented with a set of algebraic/differential constraints to
single out an irreducible, in the sense of the little algebra,
component. Field $\phi_{\Yy_M}(x)$ takes values in a certain
finite-dimensional representation of \lorentz, which is
not irreducible in most cases, nevertheless it can be made irreducible when dealing with free equations of motion only. Yet
more different problem is to realize an \irrep\ of \iso\ on the
solutions of the variational problem for some action, in most
cases the procedure requires a set of auxiliary fields, which carry no physical degrees of freedom.

The choice of \lorentz-representation $\Yy_M$ (even if irreducible), in
which field $\phi_{\Yy_M}(x)$ takes values, is not unique, e.g., a free
massless spin-one \particle\ can be described either by a
gauge potential $A_\mu(x)$ subjected to \be \square
A_\mu-\pl_\mu\pl^\nu A_\nu=0, \qquad \delta A_\mu=\pl_\mu\xi,\ee or
by a field strength $F_{\mu\nu}$ subjected to \be \pl^\mu
F_{\mu\nu}=0,\quad\pl_{[\mu}F_{\lambda\rho]}=0.\ee In the case of
a massless spin-two \particle\ ($\Yy={\smallpic\YoungB_{\mls}}$), in
addition to the conventional description in terms of the metric
$g_{\mu\nu}$ \be R_{\mu\nu}-\frac12g_{\mu\nu}R=0\ee the Weyl
tensor $C_{\mu\nu,\lambda\rho}$ known to have the symmetry of
\smallpic{\YoungBB}\footnote{In antisymmetric basis
$C_{\mu\nu,\lambda\rho}$ satisfies
$C_{\mu\nu,\lambda\rho}=-C_{\nu\mu,\lambda\rho}$,
$C_{\mu\nu,\lambda\rho}=-C_{\mu\nu,\rho\lambda}$ and
$C_{[\mu\nu,\lambda]\rho}=0$.} can describe a free spin-two in a
non gauge way \be \pl_{[\mu}C_{\mu\mu],\nu\nu}=0,\quad
\pl^{\lambda}C_{\mu\lambda,\nu\nu}=0.\ee

Another example is a $4d$ massless scalar particle, which can be
described either by a scalar field $\phi(x)$ subjected to $\square
\phi=0$ or, more exotically, by an antisymmetric gauge field $\omega_{\mu\nu}$ (so-called notoph
\cite{Ogievetsky:1967ij})
subjected to \be \square \omega_{\mu\nu}-\pl_\mu
\pl^{\rho}\omega_{\rho\nu}+\pl_\nu
\pl^{\rho}\omega_{\rho\mu}=0,\quad \delta \omega_{\mu\nu}=\pl_\mu
\xi_\nu-\pl_\nu \xi_\mu.\ee This equation describes a massless
\particle\ with spin-${\smallpic\YoungAA_{\mls}}\equiv\Y{1,1}_{\mls}$, which for $d=4$ by virtue
of the \mls\ Levi-Civita tensor $\varepsilon_{ij}$ is equivalent to a
scalar, $\Y{1,1}_{\mls}\sim\Y{0,0}_{\mls}$. On the other hand, a scalar
\particle\ can be described by a rank-$d$ antisymmetric field $\omega_{\mu_1...\mu_d}$
satisfying $\square\omega_{\mu_1...\mu_d}=0$, where the use of
\lorentz-duality is made of,
$\omega_{\mu_1...\mu_d}=\varepsilon_{\mu_1...\mu_d}\phi$. These
two types of duality are referred to as trivial.

The general statement is that free \particles\ can be described by
an infinite number of ways, called \dual\ descriptions
\cite{Curtright:1980yj,Boulanger:2003vs,Zinoviev:2005ux,
Zinoviev:2005qp,Zinoviev:2005zj}, but \dual\ theories exhibit
certain difficulties while introducing interactions
\cite{Bekaert:2002uh,Deser:1980fy,Townsend:1981nu,Townsend:1979yv},
e.g., despite the fact that free massless spin-one and spin-two
\particles\ can be described by the Maxwell field strength and by the
Weyl tensor, respectively, introducing interactions requires the
corresponding gauge potentials $A_\mu$ and $g_{\mu\nu}$ to be
brought in.

There exists a distinguished choice of \lorentz-\irrep\ $\Yy_M$, in which field $\phi_{\Yy_M}(x)$
takes values, that can be referred to as fundamental or
minimal. For the minimal description of a spin-$\Yy$ \particle\
the spin degrees of freedom and the \lorentz-\irrep\ $\Yy_M$  are characterized by the same Young
diagram, i.e., $\Yy_M=\Yy$, e.g., a spin-one particle by $A_\mu$, a spin-two by
$g_{\mu\nu}$. All other descriptions are referred to as \dual. It is the minimal descriptions that will be discussed further, by this reason the
term spin-$\Yy$ \particle\ can be substituted for more accepted spin-$\Yy$ \field.

A massive totally symmetric spin-$s$ \field\ can be described \cite{Dirac:1936tg}
by a totally symmetric tensor field $\phi_{(\mu_1...\mu_s)}$
subjected to
\be\begin{split}\label{FlatMSSymmetric}&(\square+m^2)\phi_{\mu_1...\mu_s}=0,\\
&\pl^\nu\phi_{\nu\mu_2...\mu_s}=0,\\
&\tr{\phi}{\mu_3...\mu_s}{\nu}=0,\end{split}\ee where the last
equation (tracelessness condition) makes the tensor irreducible in the \lorentz\ sense,
the first one puts the system \onmassshell\ and the second one projects out the
components orthogonal to the momentum, restricting
$\phi_{\mu_1...\mu_s}$ to contain only a spin-$s$ \irrep\ of \msv.

Analogously, a massive totally anti-symmetric spin-$p$ \field,
i.e., $\Yy=\Y{(1,p)}$, can be described by a totally
anti-symmetric tensor field $\omega_{[\mu_1...\mu_p]}$ subjected
to
\be\begin{split}\label{FlatMSAntiSymmetric}&(\square+m^2)\omega_{\mu_1...\mu_p}=0,\\
&\pl^\nu\omega_{\nu\mu_2...\mu_p}=0,
\end{split}\ee
where the tracelessness condition becomes trivial for anti-symmetric tensors.

These two results are easily generalized to an arbitrary
spin-$\Y{s_1,...,s_n}$ massive \field, which can be minimally
described by a symmetric in each group of indices tensor field
$\phi_{{\mu}_1(s_1),{\mu}_2(s_2),...,{\mu}_n(s_n)}$ subjected to
\be\begin{split}\label{FlatMSEquations} &(\square+m^2)\phi_{{\mu}_1(s_1),{\mu}_2(s_2),...,{\mu}_n(s_n)}=0, \\
&\pl^{\mu_i}\phi_{{\mu}_1(s_1),{\mu}_2(s_2),...,{\mu}_n(s_n)}=0, \quad\qquad\qquad\qquad i\in[1,n],\\
&\phi_{{\mu}_1(s_1),...,({\mu}_k(s_k),...,{\mu}_k)
{\mu}_i(s_i-1),...,{\mu}_n(s_n)}=0, \qquad k\in[1,n-1], \quad k<i,
\\ &\eta^{{\mu}_i {\mu}_j}\phi_{{\mu}_1(s_1),{\mu}_2(s_2),...,{\mu}_n(s_n)}=0,\qquad\qquad\qquad
i,j\in[1,n],
\end{split}\ee
where the last two conditions are just the Young symmetry and the
tracelessness conditions, which make the field carry an
\irrep-$\Y{s_1,...,s_n}$ of \lorentz, and can be thought of as the part of the definition of
field $\phi_{\Yy_M}(x)$. The first equation puts the system \onmassshell, the
second one projects out all \msv-\irreps, which the tensor
$\phi_{{\mu}_1(s_1),{\mu}_2(s_2),...,{\mu}_p(s_p)}$ decomposes
into, except for the one with the symmetry of $\Y{s_1,...,s_n}$
\msv-diagram.

Let \Verma{m^2}{\Y{s_1,...,s_n}} be an \iso\ module extracted by
(\ref{FlatMSEquations}), being irreducible for $m^2\neq0$ or for
$\Y{0}$ (scalar \field). Since minimally described massless
\fields\ are gauge theories a \uirrep\ \Irrep{0}{\Y{s_1,...,s_n}} of \iso\
corresponding to a massless spin-\Y{s_1,...,s_n}\
\field\ should be defined as appropriate quotient by the pure gauge solutions of the form
\be\ComplexC{\mbox{gauge solutions}}{\mbox{all
solutions}}{\Irrep{0}{\Y{s_1,...,s_n}}},\ee where the sequence is
non-split, as there is no \lorentz-covariant way to extract
\mls-\irrep\ $\Y{s_1,...,s_n}$ from the Lorentz tensor with the same
symmetry of $\Y{s_1,...,s_n}$ by virtue of a single
\lorentz-vector, the momentum $p_\mu\sim\pl_\mu$.

A massless totally symmetric spin-$s$ \field\ can be described
as the quotient of (\ref{FlatMSSymmetric}) with $m^2=0$ by
pure gauge solutions of the form
$\delta\phi_{\mu_1...\mu_s}=\pl_{(\mu_1}\xi_{\mu_2...\mu_s)}$,
where $\xi_{\mu_1...\mu_{s-1}}$ is a totally symmetric gauge
parameter subjected to the equations of the same form
(\ref{FlatMSSymmetric}), i.e., \onmassshell, tracelessness and
transversality, thus, belonging to $\Verma{0}{\Y{s-1}}$. The
definition of \Irrep{0}{\Y{s}} is given by \be
\ComplexC{\Verma{0}{\Y{(s-1,1)}}}{\Verma{0}{\Y{(s,1)}}}{\Irrep{0}{\Y{(s,1)}}}.\ee
There is a bit difference for a totally anti-symmetric spin-$p$
massless \field. Pure gauge solutions are defined analogously
as $\delta\omega_{\mu_1...\mu_p}=
\pl_{[\mu_1}\xi_{\mu_2...\mu_p]}$, where
$\xi_{[\mu_1...\mu_{p-1}]}$ is a rank-$(p-1)$ antisymmetric gauge
parameter subjected to the equation of the same form
(\ref{FlatMSAntiSymmetric}) and, thus, belonging to
$\Verma{0}{\Y{(1,p-1)}}$. In contrast to a totally symmetric
spin-$s$ massless \field, the gauge transformations are
reducible in the sense that $\delta\omega_{\mu_1...\mu_p}\equiv0$
provided that one transforms gauge parameter
$\delta\xi_{\mu_1...\mu_{p-1}}=
\pl_{[\mu_1}\xi_{\mu_2...\mu_{p-1}]}$ with a second level
rank-$(p-2)$ antisymmetric gauge parameter
$\xi_{[\mu_1...\mu_{p-2}]}$, and so on. Some components
(parallel to the momentum) of the first level gauge parameter
$\xi_{\mu_1...\mu_{p-1}}$ do not contribute to the gauge law for
$\omega_{\mu_1...\mu_p}$, these are represented by the second
level gauge parameter $\xi_{\mu_1...\mu_{p-2}}$ modulo those
components of $\xi_{\mu_1...\mu_{p-2}}$ that do not contribute to
$\xi_{\mu_1...\mu_{p-1}}$ and so on till $\delta \xi_\mu=\pl_\mu
\xi$. Gauge parameters $\xi_{\mu_1...\mu_{p-1}}$,
$\xi_{\mu_1...\mu_{p-k}}$, $k\in[1,p]$ and $\xi$ are referred to
as the first level, the $k$-th level and the deepest level of
reducibility, respectively. The corresponding \iso-\uirrep\ is
given by a non-split exact sequence of the form
\begin{align}0\longrightarrow\Verma{0}{\Y{(0,0)}}&\longrightarrow\Verma{0}{\Y{(1,1)}}\longrightarrow\nonumber\\...&...\longrightarrow\Verma{0}{\Y{(1,p)}}\longrightarrow\Irrep{0}{\Y{(1,p)}}\longrightarrow0.\end{align}
For example, Maxwell gauge potential $A_\mu$ possesses gauge parameters of the first
level only as $p=1$ in this case.

Though, a considerable success in describing \onmassshell\
massless \fields\ was achieved, for instance, within the
light-cone approach \cite{Metsaev:1997nj}, there are many reasons
to have an \offshell\ gauge symmetry, i.e., to construct equations
that are invariant with respect to gauge transformations with gauge
parameters not subjected to $\square\xi=0$. To make the symmetry
\offshell\ generally requires to relax the irreducibility of the
\lorentz-representation in which field $\phi_\Yy(x)$ takes values.

For instance, a massless totally symmetric spin-$s$ field can be
described\cite{Skvortsov:2007kz} by a traceless rank-$s$ symmetric
field $\phi_{\mu_1...\mu_s}$ with an \offshell\ gauge symmetry
\begin{align}
    &\square\phi_{\mu_1...\mu_s}-s\pl_{(\mu_1}\pl^\nu\phi_{\nu\mu_2...\mu_s)}+%
    {\scriptstyle\frac{s(s-1)}{(d+2s-4)}}\eta_{(\mu_1\mu_2}\pl^\nu\pl^\rho\phi_{\nu\rho\mu_3...\mu_s)}=0,\quad
    \tr{\phi}{\mu_3...\mu_s}{\nu}=0,\nonumber\\
    &\delta\phi_{\mu_1...\mu_s}=\pl_{(\mu_1}\xi_{\mu_2...\mu_s)},\quad%
    \tr{\xi}{\mu_3...\mu_{s-1}}{\nu}=0,\quad \pl^\nu\xi_{\nu\mu_2...\mu_{s-1})}=0,
\end{align}
where in order for $\phi_{\mu_1...\mu_s}$ to be traceless the
gauge parameter has to be not only traceless but transverse also,
this still being true for general mixed-symmetry fields.
Apparently, to get rid of any differential constraints on gauge
parameters, the tracelessness constraint for field
$\phi_{\mu_1...\mu_s}$ has to be relaxed. The same spin-$s$ massless
\field\ can be described\cite{Fronsdal:1978rb} by field $\phi_{\mu_1...\mu_s}$
subjected to\footnote{Obtained from the Lagrangian, equations of \cite{Fronsdal:1978rb} have the form $G_{\mu_1...\mu_s}-\frac{s(s-1)}4\eta_{(\mu_1\mu_2}G^{\nu}_{\phantom{\nu}\nu\mu_3...\mu_s)}=0$, where $G_{\mu_1...\mu_s}$ is equal to (\ref{FlatMSFronsdal}), the two forms being equivalent.}
\be\label{FlatMSFronsdal}\begin{split}
    &\square\phi_{\mu_1...\mu_s}-s\pl_{(\mu_1}\pl^\nu\phi_{\nu\mu_2...\mu_s)}+
    {\scriptstyle\frac{s(s-1)}{2}}\pl_{(\mu_1}\pl_{\mu_2}\tr{\phi}{\mu_3...\mu_s)}{\nu}=0,\quad
    \tr{\phi}{\mu_5...\mu_s}{\nu\rho}=0,\\
    &\delta
    \phi_{\mu_1...\mu_s}=\pl_{(\mu_1}\xi_{\mu_2...\mu_s)},\quad
    \tr{\xi}{\mu_3...\mu_{s-1}}{\nu}=0,
\end{split}\ee
where in order to get rid of any differential constraints on gauge
parameters a tracelessness is relaxed to a double-tracelessness,
i.e., field $\phi_{\mu_1...\mu_s}$ takes values in the direct
sum of two \lorentz-\irreps, being symmetric traceless tensors of
ranks $s$ and $(s-2)$.  The general statement is that for
equations of motion to have an \offshell\ gauge symmetry the
\lorentz-\irrep\ in which field $\phi_\Yy(x)$ takes values has to be
reducible, with additional direct summands representing certain
nonzero traces. These additional fields are called auxiliary and
carry no physical degrees of freedom. Imposing certain gauge, equations (\ref{FlatMSEquations}) can be restored.

A massless totally antisymmetric spin-$p$ \field\ can be
described by an antisymmetric rank-$p$ tensor field
$\omega_{[\mu_1...\mu_p]}$ subjected to \be\square
\omega_{\mu_1...\mu_p}-p\pl_{[\mu_1}\pl^{\nu}\omega_{\nu\mu_2...\mu_p]}=0,
\quad \delta\omega_{\mu_1...\mu_p}=
\pl_{[\mu_1}\xi_{\mu_2...\mu_p]},\ee where the symmetry at the all
levels of reducibility is manifest and \offshell, i.e.,
$\delta\xi_{\mu_1...\mu_{p-1}}=
\pl_{[\mu_1}\xi_{\mu_2...\mu_{p-1}]}$ for the second level gauge
parameter $\xi_{\mu_1...\mu_{p-2}}$ not subjected to any
differential constraints, and analogously for the gauge symmetries
at deeper levels.

Much similar to totally anti-symmetric fields mixed-symmetry
massless fields possess reducible gauge transformations, i.e., for
a field $\phi$ with equations of motion invariant under gauge
transformations $\delta_{\xi_1} \phi=\sum_{i_1}\pl\xi^{i_1}_1$
there exist the second level gauge transformations
$\delta_{\xi_2}\xi^{i_1}_1=\sum_{i_2}\pl\xi^{i_2}_2$ such that
$\delta_{\xi_2}\phi\equiv0$, there exist the third level gauge
transformations
$\delta_{\xi_3}\xi^{i_2}_2=\sum_{i_3}\pl\xi^{i_3}_3$ such that
$\delta_{\xi_3}\xi_1\equiv0$ and so on. The difference from
totally anti-symmetric fields is in that there are generally more
than one gauge parameters at each level of reducibility
(enumerated by index $i_k$ at the $k$-th level).

For instance, the simplest mixed-symmetry massless \field\ has the
spin-\smallpic{\YoungBA} and can be minimally
described\cite{Curtright:1980yk, Aulakh:1986cb} by a field
$\phi_{[\mu\mu],\nu}$, which is anti-symmetric in the first two
indices and satisfies Young symmetry condition\footnote{In
more detail, the field $\phi_{\mu\nu,\lambda}$ satisfies
$\phi_{\mu\nu,\lambda}=-\phi_{\nu\mu,\lambda}$,
$\phi_{\mu\nu,\lambda}+\phi_{\nu\lambda,\mu}+\phi_{\lambda\mu,\nu}=0$.
Equivalently, a symmetric basis can be used, i.e,
$\phi^S_{\mu\nu,\lambda}=\phi^S_{\nu\mu,\lambda}$,
$\phi^S_{\mu\nu,\lambda}+\phi^S_{\nu\lambda,\mu}+\phi^S_{\lambda\mu,\nu}=0$.
The two bases are related by
$\phi^S_{(\mu\nu),\lambda}=\frac1{\sqrt{3}}(\phi^A_{\mu\lambda,\nu}+\phi^A_{\nu\lambda,\mu})$,
$\phi^A_{[\mu\nu],\lambda}=\frac1{\sqrt{3}}(\phi^S_{\mu\lambda,\nu}-\phi^S_{\nu\lambda,\mu})$.}
$\phi_{[\mu\mu,\nu]}=0$. The equations of
motion\be\label{FlatMSHookA}\square\phi_{\mu\mu,\lambda}+2\pl_{[\mu}\pl^{\lambda}\phi_{\mu]\lambda,\nu}-
\pl_\nu\pl^\lambda\phi_{\mu\mu,\lambda}-2\pl_\nu\pl_{[\mu}\phi_{\mu]\lambda,}^{\phantom{\mu]\lambda,}\lambda}=0\ee
are invariant under
\be\label{FlatMSHookB}\delta\phi_{\mu\mu,\nu}=\pl_{[\mu}\xi^S_{\mu]\nu}+\pl_{[\mu}\xi^A_{\mu]\nu}-\pl_\nu\xi^A_{\mu\mu},\ee
with symmetric and anti-symmetric gauge parameters
$\xi^S_{(\mu\nu)}$ and $\xi^A_{[\mu\nu]}$. Let us stress that to
maintain an \offshell\ gauge symmetry field
$\phi_{[\mu\mu],\nu}$ has to take values in a reducible \lorentz\
representation, $\smallpic{\YoungBA}\oplus\smallpic{\YoungA}$, so
does symmetric gauge parameter $\xi^S_{(\mu\mu)}$,
$\smallpic{\YoungB}\oplus\bullet$, which is in accordance with
$\phi_{\mu\nu,}^{\phantom{\mu\nu,}\nu}\neq0$ and
$\xi_{\nu}^{S\nu}\neq0$. Analogously to totally anti-symmetric fields
there exist second level gauge transformations with a vector parameter $\xi_\mu$
\be\label{FlatMSHookC}\begin{split}
    &\delta\phi_{\mu\mu,\nu}=0,\\
    &\delta\xi^A_{\mu\nu}={\scriptstyle\frac23}(\pl_\mu\xi_\nu-\pl_\nu\xi_\mu), \\& \delta
    \xi^S_{\mu\nu}=\pl_\mu \xi_\nu+\pl_\nu \xi_\mu.
\end{split}\ee In this case \Irrep{0}{\smallpic{\YoungBA}} is given by a non-split exact
sequence of the form
\be\label{FlatMSHookSequence}\ComplexD{\Verma{0}{\smallpic{\YoungA}}}{\Verma{0}{\smallpic{\YoungAA}}\oplus\Verma{0}{\smallpic{\YoungB}}}{\Verma{0}{\smallpic{\YoungBA}}}{\Irrep{0}{\smallpic{\YoungBA}}}\ee

In the general case of a massless
spin-$\Yy=\Y{(s_1,p_1),...,(s_N,p_N)}$ \field\ the depth of
reducibility of gauge transformations is equal to
$p=\sum_{i=1}^{i=N}p_i$, where $p$ is the height of the first
column of the Young diagram and at the $r$-th level of
reducibility gauge parameters have the symmetry of
\be\label{FlatMSGaugeParameters}\GaugeParameters:\quad\sum_{i=1}^{i=N}k_i=\sum_{i=1}^{i=N}p_i-r\ee
This pattern of reducibility of gauge transformations will be of
great importance while constructing the unfolded formulation in
Section \ref{MSUnfld}. As is easily seen, at the first level of
reducibility, the gauge parameters are various tensors, whose Lorentz
Young diagrams are obtained by cutting off one cell from the
original \mls-diagram in all possible ways, i.e., the number of
gauge parameters at the first level is equal to the number of
blocks $N$, e.g., two for spin-\smallpic{\YoungBA}. There is
only one gauge parameter at the deepest level of reducibility,
whose Young diagram is obtained by cutting off the first column
from $\Yy$, e.g., \smallpic{\YoungA} for spin-\smallpic{\YoungBA}.

This pattern corresponds, of course, to \onshell\ equations, i.e.,
the gauge parameters taking values in \lorentz-\irreps\ of Young symmetry (\ref{FlatMSGaugeParameters}) are subjected to (\ref{FlatMSEquations})-like
equations. To obtain an \offshell\ gauge symmetry
the field content has to be extended to
take values in certain reducible \lorentz-representations, additional
components turn out can be identified with certain traces of a single
field with the symmetry of $\Yy$ as an \sld-tensor. In general, gauge parameters are also reducible tensors with the symmetry of (\ref{FlatMSGaugeParameters}). In the case of a spin-\smallpic{\YoungBA} massless
\field, the trace $\phi_{\mu\nu,}^{\phantom{\mu\nu,}\nu}$ in the sector of fields and the trace $\xi^{S\nu}_{\nu}$ in the sector of
gauge parameters are the additional components. In the general
case of a spin-$\Yy=\Y{s_1,...,s_n}$ massless \field,  field
$\phi_{{\mu}_1(s_1),{\mu}_2(s_2),...,{\mu}_n(s_n)}$ should satisfy
\cite{Labastida:1986ft}
\be\label{FlatMSLabastidaDoubleTracelessness}\eta^{\mu_i\mu_i}\eta^{\mu_i\mu_i}\phi_{{\mu}_1(s_1),{\mu}_2(s_2),...,{\mu}_n(s_n)}=0,\qquad
i\in[1,n],\ee which is a generalization of the Fronsdal's
double-trace condition (\ref{FlatMSFronsdal}). As it will be shown,
this condition naturally arises in the unfolded approach. Note
that double-tracelessness is imposed on each group of
symmetric indices and it is not required for cross-traces to
vanish.

Let a generalized Weyl tensor for a minimally described spin-$\Yy$ massless
\field\ be a gauge-invariant combination of the least order in derivatives of field
$\phi_{\Yy}(x)$ that is allowed to be nonzero \onmassshell. On the
other hand it is the generalized Weyl tensor that the minimal non-gauge
description of a massless spin-$\Yy$ \field\ is based on. For a
spin-$\Yy=\Y{(s_1,p_1),...,(s_N,p_N)}_{\mls}$ massless
\field\ the generalized Weyl tensor has the symmetry of
$\Y{(s_1,p_1+1),(s_2,p_2),...,(s_N,p_N)}_{\lorentz}$ and is of the
$s_1$-th order in derivatives. In the case of spin-one
($\Yy={\smallpic\YoungA_{\mls}}$)  Maxwell field strength $F_{\mu\nu}$
with the symmetry of \smallpic{\YoungAA} can be also called a generalized Weyl
tensor.

Fermionic mixed-symmetry fields share most features of bosonic ones, viz., the reducibility of gauge transformations, the enlargement of the field content for the equations of motion to possess an \offshell\ gauge invariance. The difference is that the equations for fermions have the first order in derivatives and the irreducibility of spin-tensors is achieved by the $\Gamma$-tracelessness condition\footnote{$\Gamma_\mu$ are the Clifford algebra generators and satisfy $\Gamma_\mu\Gamma_\nu+\Gamma_\nu\Gamma_\mu=2\eta_{\mu\nu}$, $\fpl\equiv \Gamma^\mu\pl_\mu$. $\Gamma$-trace is a contraction of a spinor index and one tensor index with $\Gamma_\mu$, e.g., $\Gamma_\nu\phi^{\mu\nu}\equiv{\Gamma^\alpha_\nu}_\beta\phi^{\beta\fsp\mu\nu}$. } instead of the tracelessness one.

For example, a massless totally symmetric spin-$(s+\frac12)$ field can be described \offshell\cite{Fang:1978wz} by a totally symmetric spin-tensor field $\phi_{\alpha\fsp(\mu_1...\mu_s)}$ subjected to\footnote{Obtained from the Lagrangian, equations of \cite{Fang:1978wz} have the form $G_{\mu_1...\mu_s}-\frac{s}2\Gamma_{(\mu_1}\Gamma^\nu G_{\nu\mu_2...\mu_s)}-\frac{s(s-1)}4\eta_{(\mu_1\mu_2}G^{\nu}_{\phantom{\nu}\nu\mu_3...\mu_s)}=0$, $G_{\mu_1...\mu_s}=\fpl\phi_{\mu_1...\mu_s}-s\pl_{(\mu_1}\Gamma^{\nu}\phi_{\nu\mu_2...\mu_s)}=0$, which is equivalent to (\ref{FlatMSFronsdalFermionic}).}
\be\label{FlatMSFronsdalFermionic}\begin{split}
    &\fpl\phi_{\mu_1...\mu_s}-s\pl_{(\mu_1}\Gamma^{\nu}\phi_{\nu\mu_2...\mu_s)}=0,\\
    &\delta \phi_{\mu_1...\mu_s}=\pl_{(\mu_1}\xi_{\mu_2...\mu_s)},\\
    &\Gamma^\nu\Gamma^\rho\Gamma^\lambda\phi_{\nu\rho\lambda\mu_4...\mu_s}=0,\qquad \Gamma^\nu\xi_{\nu\mu_2...\mu_{s-1}}=0,
\end{split}\ee
where in order to get an \offshell\ gauge invariance the irreducibility of $\phi_{\alpha\fsp(\mu_1...\mu_s)}$ has to be relaxed to the triple $\Gamma$-tracelessness.

A massless totally anti-symmetric spin-$\Y{(1,p)}_\ferm$ field can be described \offshell\ by a totally anti-symmetric spin-tensor field $\omega_{\alpha\fsp[\mu_1...\mu_p]}$ subjected to
\be\label{FlatMSFronsdalFermionic}\begin{split}
    &\fpl\omega_{\mu_1...\mu_p}-p\pl_{[\mu_1}\Gamma^{\nu}\omega_{\nu\mu_2...\mu_p]}=0,\\
    &\delta \omega_{\mu_1...\mu_p}=\pl_{[\mu_1}\xi_{\mu_2...\mu_p]},\\
\end{split}\ee
Similar to the bosonic case, (\ref{FlatMSFronsdalFermionic}) possesses reducible gauge transformations. The difference is that one can impose the triple $\Gamma$-tracelessness on $\omega_{\alpha\fsp[\mu_1...\mu_p]}$ but this restricts the first level gauge parameter to be $\Gamma$-traceless, $\Gamma^\nu\xi_{\nu\mu_2...\mu_{p-1}}=0$, and, hence, the second order gauge parameter has to be \onmassshell, i.e., $\fpl\xi_{\mu_1...\mu_{p-2}}=0$. Therefore, for equations of motion to possess an \offshell\ gauge symmetry of all orders no $\Gamma$-trace conditions have to be imposed on field/gauge parameters.

In the general case of a massless spin-$\Yy=\Y{(s_1,p_1),...,(s_N,p_N)}_\ferm$ \field, the pattern of gauge symmetries is given by the spin-tensors with the tensor part described by (\ref{FlatMSGaugeParameters}), the definition of the Weyl tensor remains unchanged also.

The descriptions based on tensor field $\phi_{{\mu}_1(s_1),{\mu}_2(s_2),...,{\mu}_n(s_n)}$, which is analogous to the metric $g_{\mu\nu}$, are referred to as \metric. At least in writing
$\square\equiv\pl_\mu\pl_\nu\eta^{\mu\nu}$ the explicit use of
metric $\eta^{\mu\nu}$ is made of, which complicates the issue of introducing interactions with gravitation.

Let us note that in principle one can verify the gauge invariance of the field equations for massless mixed-symmetry fields despite the fact that the very form of gauge transformation is cumbersome due to Young symmetrizers. The advantage of the unfolded approach is in that gauge invariance at all levels of reducibility is manifest.

\section{Unfolding Dynamics and $\sigma_-$
Cohomology}\setcounter{equation}{0}\label{Unfld} In this section we,
first, recall the definition of unfolding and the relation of the
simplest unfolded systems to Lie algebras/modules and
Chevalley-Eilenberg cohomology with coefficients. Second, peculiar
properties of unfolded systems that describe free \fields\ and
specifically the so-called $\sigma_-$ cohomology concept are
recalled. Third, the very procedure of constructing the unfolded
form is illustrated on the examples of massless
spin-zero, spin-one, arbitrary totally symmetric spin-$s$ and spin-$(s+\frac12)$ fields,
the relation with the general statement  of Section \ref{Results} is
pointed out in each of the examples.
\subsection{General Features}

Some dynamical system is said to be unfolded
\cite{Vasiliev:1988xc, Vasiliev:1988sa, Vasiliev:1992gr} if it has
the form \be\label{UnfldEquations} d W^\aA = F^\aA(W),\ee where
$W^\aA$ is a set\footnote{In this section indices $\aA$, $\aB$,
$\aC$ are of arbitrary nature. In the cases of practical
significance $\aA$, $\aB$, $\aC$ vary over certain \irreps\ of the
Lorentz algebra.} of differential forms of degree-$q_\aA$ on some
$d$-dimensional manifold $\mathcal{M}_d$, $d$ - exterior
differential on $\mathcal{M}_d$ and $F^\aA(W)$ is an arbitrary
degree-$(q_\aA+1)$ function of $W^\aA$ assumed to be expandable in
terms of exterior (wedge) products only\footnote{The wedge symbol
$\wedge$ will be systematically omitted further.}

\be
F^\aA(W)=\sum_{n=1}^{\infty}\sum_{q_{\aB_1}+...+q_{\aB_n}=q_\aA+1}
f^\aA_{\phantom{\aA} \aB_1 ... \aB_n}W^{\aB_1}\wedge...\wedge
W^{\aB_n},\ee where $f^\aA_{\phantom{\aA} \aB_1 ... \aB_n}$ are
constant coefficients satisfying $f^\aA_{\phantom{\aA}
\aB_1...\aB_i\aB_j...\aB_n}=(-)^{q_{\aB_i}q_{\aB_j}}f^\aA_{\phantom{\aA}
\aB_1...\aB_j\aB_i...\aB_n}$. Moreover, $F^\aA(W)$ must satisfy the
integrability condition (called generalized Jacobi identity)
obtained by applying $d$ to (\ref{UnfldEquations})
\be\label{UnfldBianchi} F^\aB\frac{\delta F^\aA}{\delta
W^\aB}\equiv0.\ee Any solution of (\ref{UnfldBianchi}) defines a
free differential algebra (FDA) \cite{Sullivan77, D'Auria:1982nx,
D'Auria:1982pm, Nieuwenhuizen:1982zf}. If Jacobi identities
(\ref{UnfldBianchi}) are satisfied irrespective of $\mathcal{M}_d$
dimension\footnote{As the forms with the rank greater than $d$ are
identically zero, there exist certain identities, e.g.,
$W\fm{n}\wedge W\fm{n}\equiv0$ for $n+m>d$, which make the operator
$\frac{\delta }{\delta W^\aA}$ to be ill-behaved.}, the free
differential algebra is referred to as universal
\cite{Bekaert:2005vh, Vasiliev:2007yc}. It is the universal algebras
only that will be considered further.

Equations (\ref{UnfldEquations}) are invariant under gauge
transformations  \be\label{UnfldGauge}\delta
W^\aA\fm{q_{\aA}}=d\epsilon^\aA\fm{q_{\aA}-1}+\epsilon^\aB\fm{q_{\aB}-1}\frac{\delta
F^\aA}{\delta W^\aB},\qquad \mbox{for }q_{\aA}>0,\ee
\be\label{UnfldGaugeZeroForm}\delta
W^\aA\fm{0}=\epsilon^{\aB'}\fm{0}\frac{\delta F^\aA}{\delta
W^{\aB'}\fm{1}},\quad \aB': q_{\aB'}=1,\qquad \mbox{for
}q_{\aA}=0,\ee where $\epsilon^\aA\fm{q_{\aA}-1}$ is a
degree-$(q_{\aA}-1)$ form taking values in the same space as
$W^\aA\fm{q_{\aA}}$. In its
turn, $\delta W^\aA\fm{q_{\aA}}=0$ can be treated as unfolded-like system for
$\epsilon^\aA\fm{q_{\aA}-1}$, i.e.,
$d\epsilon^\aA\fm{q_{\aA}-1}=-\epsilon^\aB\fm{q_{\aB}-1}\frac{\delta
F^\aA}{\delta W^\aB}$, there emerge second level gauge
transformations
\be\label{UnfldGaugeRed}\delta
\epsilon^\aA\fm{q_{\aA}-1}=d\xi^\aA\fm{q_{\aA}-2}-\xi^\aB\fm{q_{\aB}-2}\frac{\delta
F^\aA}{\delta W^\aB},\ee provided that $F^\aA(W)$ is linear in
matter fields and analogously for the gauge transformations at
deeper levels. Therefore, the reducibility of gauge transformations
is manifest in the unfolded approach. For a degree-$q_\aA$ gauge
field $W^\aA\fm{q_\aA}$ there exist $q_\aA$ levels of gauge
transformations.

The use of the exterior algebra respects diffeomorphisms, which is
very appropriate for introducing interactions with the gravitation.
The whole information about the dynamics turns out to be contained
in $F^\aA(W)$ and one can even extend an unfolded system to other
manifolds\cite{Vasiliev:2007yc} simply by changing the exterior
differential, the new unfolded system has literally the same form.

Note that introducing enough auxiliary fields it is possible to
reformulate any dynamical system in the unfolded form, although it
may be difficult to unfold some particular system or to find all
unfolded forms.

Collected below are some important cases of unfolded systems,
which have a direct bearing on Lie algebras \cite{Sullivan77, Vasiliev:2007yc}.
\begin{description}
  \item[Lie algebras/Flat connections.] Let $\Omega^I\equiv\Omega^I_\mu dx^\mu$ be a subsector of degree-one
    forms. The only self-closed unfolded equations are of the form
    \be d\Omega^I=-f^I_{JK}\Omega^J\Omega^K.\label{UnfldFlatness}\ee
    Generalized Jacobi identity (\ref{UnfldBianchi}) implies the Jacobi
    identity $f^I_{JK}f^J_{LM}\Omega^K\Omega^L\Omega^M\equiv0$ for some Lie algebra $\mathfrak{g}$  with structure constants $f^I_{JK}$. Therefore, the closed subsector of one-forms is in one-to-one
    correspondence with Lie algebras and (\ref{UnfldFlatness}) is
    the flatness condition for a connection $\Omega^I$ of
    $\mathfrak{g}$. This provides a coordinate-independent framework for describing background geometry.
    In the cases of interest, $\mathfrak{g}$ is \iso, \ads, \ds\ and $\mathfrak{sp}(2n)$ \cite{Vasiliev:2007yc}. Background geometry connection $\Omega^I$ is assumed to be
    of order zero, whereas all matter fields, including dynamical
    gravitation, are of the first order. All equations
    are assumed to be of the first order in matter fields and, hence, describe
    free fields only.

    In this paper
    $\mathfrak{g}=\iso$ and $\Omega^I=\{\varpi^{a,b}, h^a\}$, where $\varpi^{a,b}\equiv\varpi^{a,b}_\mu dx^\mu$
    is a Lorentz spin-connection and $h^a\equiv h^a_\mu dx^\mu$ is
    a background vielbein, which defines a non-holonomic basis of a tangent space at each point of the manifold. Flatness equation (\ref{UnfldFlatness}) for \iso-connection $\Omega^I$ reads
    \be\label{UnfldIsoFlatness}\begin{split}
        &dh^a+\varpi^{a,}_{\phantom{a,}b}h^b=0,\\
        &d\varpi^{a,b}+\varpi^{a,}_{\phantom{a,}c}\varpi^{c,b}=0.
    \end{split}\ee
    The first is the zero torsion equation that expresses the Lorentz
    spin-connection via the first derivative of
    $h^a_\mu$. The second can be recognized as the zero curvature
    equation.
    For example, in Cartesian coordinates the explicit solution is
    $\varpi^{a,b}_\mu=0$ and $h^a_\mu=\delta^a_\mu$.

    The advantage
    of description of background geometry as the flatness condition
    for a connection of the space-time symmetry algebra is in that this way is
    coordinate-independent. In what follows we assume $\varpi^{a,b}$, $h^a$ to satisfy
    (\ref{UnfldIsoFlatness}), which is enough if there is no need for the explicit form of the solution
    in some particular coordinate system. For instance, in
    (anti)-de Sitter space (\ref{UnfldIsoFlatness}) is modified by
    the terms proportional to the cosmological constant $\lambda^2$
     \be\label{UnfldAdSFlatness}\begin{split}
        &dh^a+\varpi^{a,}_{\phantom{a,}b}h^b=0,\\
        &d\varpi^{a,b}+\varpi^{a,}_{\phantom{a,}c}\varpi^{c,b}+\lambda^2h^ah^b=0
    \end{split}\ee
    and admits no simple solutions with $h^a_\mu=\delta^a_\mu$, $\varpi^{a,b}=0$. Nevertheless, without giving the explicit solution, to imply
    that $\varpi^{a,b}$, $h^a$ satisfy (\ref{UnfldAdSFlatness})
    is sufficient for general analysis, e.g., for constructing Lagrangians \cite{Alkalaev:2006rw}.
  \item[Contractible FDA.] The simplest equations linear in matter
    fields of the form
    \be dW\fm{q}^\aA=f^\aA_{\phantom{\aA}\aB}W^{\aB}\fm{q+1}\label{UfldContractible}\ee
    can be reduced by linear transformations to either
    \be dW\fm{q}^\aA=W^{\aA}\fm{q+1},\qquad dW^{\aA}\fm{q+1}=0,\label{UfldContractibleA}\ee
    or \be dW\fm{q}^\aA=0,\ee where the second equation of
    (\ref{UfldContractibleA}) is the consequence of Jacobi
    identities (\ref{UnfldBianchi}) for the first one.
    In the first case, by virtue of gauge transformations (\ref{UnfldGauge}) $\delta W\fm{q}^\aA=d\xi^\aA\fm{q-1}+\chi^\aA\fm{q}$,
    $\delta W^{\aB}\fm{q+1}=d\chi^\aA\fm{q}$ the field $W\fm{q}^\aA$ can be gauged away. In
    both cases, by virtue of the Poincare's Lemma these equations are dynamically empty and correspond to the co-called contractible FDAs \cite{Sullivan77}.

  \item[$\mathfrak{g}$-modules/Covariant constancy equations.]
    Let $W^\aA\fm{q}$ be a closed subsector of $q$-forms of matter gauge fields. Linear
    in matter fields equations may involve the background
    $\mathfrak{g}$-connection $\Omega^I$, which is of zeroth order.
    Such equations referred to as linearized over $\mathfrak{g}$ background
    (described by any solution $\Omega^I$ of (\ref{UnfldFlatness})) have the
    form
    \be d W^\aA\fm{q}=-\Omega^I{f_I}^\aA_{\phantom{\aA}\aB}W^\aB\fm{q}.\ee
    Jacobi identity (\ref{UnfldBianchi}), where
    (\ref{UnfldFlatness}) is also taken into account,
    \be\Omega^J\Omega^K\left(-f^I_{JK}{f_I}^\aA_{\phantom{\aA}\aB}+{f_J}^\aA_{\phantom{\aA}\aC}{f_K}^\aC_{\phantom{\aC}\aB}\right)W^\aB\fm{q}=0\ee implies
    ${f_I}^\aA_{\phantom{\aA}\aB}$ to realize a representation of
    $\mathfrak{g}$. Therefore, the closed subsector of forms of definite
    degree is in one-to-one correspondence with $\mathfrak{g}$-modules, whereas \be \label{UnfldCovariantConstancy}D_\Omega W^\aA\fm{q} \equiv d
    W^\aA\fm{q}+\Omega^I{f_I}^\aA_{\phantom{\aA}\aB}W^\aB\fm{q}=0\ee
    is a covariant constancy equation and $(D_\Omega)^2=0$ since the connection is flat (\ref{UnfldFlatness}).
    In the cases of interest, $\aA$ runs over certain
    finite-dimensional \lorentz-\irreps, i.e.,
    $\mathfrak{g}$-modules decompose with respect to its subalgebra $\lorentz\subset\mathfrak{g}$ into a direct sum
    of certain irreducible tensors. This is why we single out the Lorentz-covariant derivative $D_L$ from the
    whole $\mathfrak{g}$-covariant derivative $D_\Omega$, the
    remaining part acts vertically (algebraically).
    In the case of $\Omega^I$ being an \iso\ flat connection, the Lorentz covariant derivative
    $D_L=d+\varpi$ satisfy ${D_L}^2=0$, as the exterior differential
    $d$ does. \be D_L T^{ab...}=dT^{ab...}+\varpi^{a,}_{\phantom{a,}c}T^{cb...}+\varpi^{b,}_{\phantom{b,}c}T^{ac...}+...\ee
    In Cartesian coordinates $D_L\equiv d$ and we do not
    make any distinction between $d$ and $D_L$, whereas in (anti)-de Sitter $(D_L)^2\neq0$ and the difference between
    $d$ and $D_L$ has to be taken into account. Also, zero torsion equation (\ref{UnfldIsoFlatness}.1) can be rewritten as $D_L h^a=0$. The remaining part
    of $\Omega^I$, which acts algebraically and mixes different
    \lorentz-modules into $\mathfrak{g}$-modules is associated with the generators of translations with gauge field $h^a$.

  \item[Gluing $\mathfrak{g}$-modules/Chevalley-Eilenberg cohomology.]
    Let $W\fm{p}$, $W\fm{q}$ and $W\fm{r}$ take values in $\mathfrak{g}$-modules $\module_1$, $\module_2$ and $\module_3$, the representations are realized
    by operators $T_1$, $T_2$ and $T_3$, respectively. Still linear in matter fields
    but nonlinear in background connection $\Omega^I$ equations are of the form
    \be\begin{split}& D_\Omega W\fm{p}\equiv dW\fm{p}+T_1(\Omega) W\fm{p}= f_{12}(\Omega,...,\Omega)W\fm{q}, \\
                    & D_\Omega W\fm{q}\equiv dW\fm{q}+T_2(\Omega) W\fm{q}= f_{23}(\Omega,...,\Omega)W\fm{r},\\
                    & D_\Omega W\fm{r}\equiv....,\end{split}\label{UnfldGluing}\ee
    where the terms forming $\mathfrak{g}$-covariant derivative
    are isolated on the l.h.s. The two $\mathfrak{g}$-modules  $\module_1$, $\module_2$ appear to be glued together
    by the term $f_{12}(\Omega,...,\Omega)\in {\rm Hom}(\Lambda^{p-q+1}(\mathfrak{g})\otimes
    \module_2,\module_1)$.
    Jacobi identity (\ref{UnfldBianchi}) implies
    $f_{12}(\Omega,...,\Omega)$ to be a Chevalley-Eilenberg
    cocycle with coefficients in $\module_2^\ast\otimes\module_1$, where $\module_2^\ast$ is a module contragradient to $\module_2$,
    and $f_{12}(\Omega,...,\Omega)f_{23}(\Omega,...,\Omega)=0$.
    Coboundaries can be proved to be dynamically empty and can be removed by a field
    redefinition. Consequently, $f_{12}(\Omega,...,\Omega)$ should be a
    nontrivial representative of the Chevalley-Eilenberg cohomology
    group with coefficients in $\module_2^\ast\otimes\module_1$.
    It follows also that nothing but zero forms can be joined to
    zero forms, i.e., the only possible linearized unfolded equations
    on forms of zero degree are (\ref{UnfldCovariantConstancy}) with
    $q=0$.
\end{description}

For any dynamical system, linearized over certain $\mathfrak{g}$-background (described by
any solution $\Omega^I$ of (\ref{UnfldFlatness})),  equations of motion (\ref{UnfldCovariantConstancy}) along with gauge
transformations of all orders of reducibility acquire a very simple
form \be\label{UnfldLinearizedSystem}\begin{split}
    \delta \xi\fm{1}&=D_\Omega\xi\fm{0},\\
    &...,\\
    \delta \xi\fm{p-1}&=D_\Omega\xi\fm{p-2},\\
    \delta W\fm{p}&=D_\Omega\xi\fm{p-1},\\
    D_\Omega W\fm{p}&=0,\end{split}\ee
where $W\fm{p}$ takes values in certain $\mathfrak{g}$-module
$\module$, $D_\Omega$ is the associated $\mathfrak{g}$-covariant
derivative and $\xi\fm{p-k}$, $k\in[1,p]$ are the gauge parameters
of $k$-th order of reducibility taking values in the same
$\mathfrak{g}$-module $\module$ and being forms of degree
$(p-k)$. The gauge invariance at each order of reducibility and of
equations of motion is due to $(D_\Omega)^2=0$. In some sense the
gauge fields $W\fm{p}$ and the gauge parameters $\xi\fm{p-k}$ seem
to play the same role at the linearized level. This uniformity is
of essential importance when analyzing unfolded systems.

In the presence of gluing terms, e.g., when only two modules, say
$\module_1$ and $\module_2$, are glued by nontrivial
Chevalley-Eilenberg cocycle $f_{pr}(\Omega,...,\Omega)$  full
chain of equations and gauge transformations
(\ref{UnfldLinearizedSystem}) is modified to\footnote{The sign
factor $(-)^{p-r+1}$ before $f_{pr}(\Omega,...,\Omega)$ is ignored
and is thought of as the part of the definition of
$f_{pr}(\Omega,...,\Omega)$.}
\begin{align}\label{UnfldLinearizedSystemWithCocycles}
    \delta \xi^1\fm{1}&=D_\Omega\xi^1\fm{0}, &  & \nonumber\\
    & ..., & &\nonumber\\
    \delta \xi^1\fm{p-r}&=D_\Omega \xi^1\fm{p-r-1}, & \nonumber\\
    \delta \xi^1\fm{p-r+1}&=D_\Omega \xi^1\fm{p-r}+f_{pr}(\Omega,...,\Omega)\xi^2\fm{0}, & \delta \xi^2\fm{1}&=D_\Omega\xi^2\fm{0},\nonumber\\
    & ...,& &...,\\
    \delta \xi^1\fm{p-1}&=D_\Omega \xi^1\fm{p-2}+f_{pr}(\Omega,...,\Omega)\xi^2\fm{r-2}, & \delta \xi^2\fm{r-1}&=D_\Omega\xi^2\fm{r-2},\nonumber\\
    \delta W^1\fm{p}&=D_\Omega \xi\fm{p-1}+f_{pr}(\Omega,...,\Omega)\xi^2\fm{r-1}, & \delta W^2\fm{r}&=D_\Omega\xi^2\fm{r-1},\nonumber\\
     D_\Omega W^1\fm{p}&=f_{pr}(\Omega,...,\Omega)W^2\fm{r} & D_\Omega W^2\fm{r}&=0,\nonumber
\end{align}
where $W^1\fm{p}$, $\xi^1\fm{p-k}$ and $W^2\fm{r}$, $\xi^2\fm{r-m}$
take values in $\mathfrak{g}$-modules $\module_1$ and $\module_2$,
respectively. Important is that the gauge fields/parameters
taking values in $\module_2$ contribute to the r.h.s. of the
equations/gauge transformations for the gauge fields/parameters
taking values in $\module_1$ but for forms of degree greater than
$(p-r)$. The above two systems are nothing but the specializations
of (\ref{UnfldEquations}), (\ref{UnfldGauge}),
(\ref{UnfldGaugeRed}).

Also, let us note that there is no strong reason to make any
distinction between the terms linear in the background
$\mathfrak{g}$-connection $\Omega^I$, which correspond to
$\mathfrak{g}$-modules, and the terms of higher order in
$\Omega^I$, which correspond to Chevalley-Eilenberg cocycles. Both
terms can be combined into a single object, the generalized
covariant-derivative,
$\DD=d+T(\Omega)+f(\Omega,...,\Omega)\equiv
D_\Omega+f(\Omega,...,\Omega)$ with the property
$\DD^2=0$. Moreover, by means of $\DD$
(\ref{UnfldLinearizedSystemWithCocycles}) and
(\ref{UnfldLinearizedSystem}) can be rewritten in a similar
manner. Consequently, at the linearized level the most general
unfolded equations and gauge transformations have the form
\be\label{UnfldLinearizedSystemGeneral}\begin{split}
    \delta \xi\fm{1}&=\DD\xi\fm{0},\\
    &...,\\
    \delta \xi\fm{p-1}&=\DD\xi\fm{p-2},\\
    \delta W\fm{p}&=\DD\xi\fm{p-1},\\
    \DD W\fm{p}&=0,\end{split}\ee
where $\DD$ is built of $d$, background connection $\Omega^I$ and
satisfies $\DD^2=0$. The gauge fields/parameters take values in
certain spaces $\WW\fm{q}$, viz., $W\fm{p}\in\WW\fm{p}$, $\xi\fm{p-1}\in \WW\fm{p-1}$,...,
$\xi\fm{0}\in \WW\fm{0}$. The elements of $\WW\fm{q}$ are
differential forms with values in certain $\mathfrak{g}$-modules. It
is also convenient to define higher degree spaces $\WW\fm{q>p}$: the
equations $R\fm{p+1}=\DD W\fm{p}=0$ take values in $\WW\fm{p+1}$,
$R\fm{p+2}=\DD R\fm{p+1}$ belongs to $\WW\fm{p+2}$ and by virtue of
$\DD^2=0$ satisfies $R\fm{p+2}\equiv0$ and so on. Therefore, there
exist certain identities, which belong to $\WW\fm{p+2}$, for the
equations, which belong to $\WW\fm{p+1}$, and there exist certain
identities, which belong to $\WW\fm{p+3}$ for the identities in
$\WW\fm{p+2}$ and so on. At the field theoretical level these
identities correspond to Bianchi identities for the first level
gauge transformations, higher identities correspond to the Bianchi
identities for the deeper levels of gauge transformations.

A remark should be made that the forms belonging to $\WW\fm{q}$ may
have different degrees if there are Chevalley-Eilenberg cocycles
present. As a result, spaces $\WW\fm{q}$ may contain different number
of elements. For instance, (\ref{UnfldLinearizedSystemWithCocycles})
can be reduced to  (\ref{UnfldLinearizedSystemGeneral}) if
one defines \be\WW\fm{q}=\begin{cases}
    \{W^1\fm{q}\},& q<p-r,\\
    \{W^1\fm{q},W^2\fm{q-p+r}\},& q\geq(p-r),\\
\end{cases}\ee
where $W^1\fm{q}$ and $W^2\fm{q-p+r}$ take values in
$\mathfrak{g}$-modules $\module_1$ and $\module_2$, respectively. In
terms of (\ref{UnfldLinearizedSystemWithCocycles}) the elements of
$\WW\fm{q}$ for $q<(p-r)$ were referred to as $\xi^1\fm{q}$, the
elements of $\WW\fm{q}$ for $q=(p-r)...(p-1)$ to as $\xi^1\fm{q}$,
$\xi^2\fm{q-p+r}$, the elements of $\WW\fm{q}$ for $q=p$ to as
$W^1\fm{p}$, $W^2\fm{r}$, the elements of $\WW\fm{q}$ for $q>p$ to
as $R^1\fm{q}$, $R^2\fm{q-p+r}$ (the equations of motion and Bianchi
identities).

\subsection{Interpretation of Unfolded Systems Describing
free Fields}\label{UnfldInterpretation}

When describing free fields, unfolded formulations will be referred to
as \framelike\ \cite{Alkalaev:2003qv, Alkalaev:2006rw}, though the
very term \framelike\ has a more broad definition. These systems
consist of equations (\ref{UnfldFlatness}), describing background
geometry, of a finite chain of (\ref{UnfldGluing})-like equations,
describing the gauge fields of the model and  of
(\ref{UnfldGluing})-like equations with $q=0$ glued to gauge forms, i.e., it
requires a Lie algebra $\mathfrak{g}$ (commonly \iso, \ads\ or \ds),
a set of $\mathfrak{g}$-modules $\module_0$,...,$\module_N$ and an
appropriate set of nontrivial Chevalley-Eilenberg cocycles
$f_{0,1}$,...,$f_{N-1,N}$. The physical degrees of freedom are
contained in the forms of zero degree and the module $\module_N$ in
which zero degree forms take values is infinite-dimensional.

Since \framelike\ unfolded systems contain inevitably infinitely many
fields, most of them being either auxiliary or
\Stueckelberg\footnote{A field is called \Stueckelberg\ if it can be
gauged away by pure algebraic symmetry.}, there arises a problem of
reconstruction a \metric\ formulation by a given unfolded
formulation or of extracting the dynamical content for a given
unfolded system.

The questions to be answered are: what are the dynamical fields,
what are the differential gauge parameters and what are the gauge
invariant equations of motion. These answers are not universal and
depend on the chosen scheme of \interpretation. Different
\interpretations\ correspond to \dual\ descriptions of the same
dynamical system.

\Framelike\ unfolded systems are endowed at the linearized level
with additional structures, providing a natural way for
\interpretation. Indices $\aA, \aB,...$ vary over certain finite
dimensional Lorentz \irreps, i.e., each of $\module_n$, $n=0,..,N$
decomposes as $\module_n=\sum_k\moduleP_{n,k}$, where
$\moduleP_{n,k}$ are certain \lorentz-modules characterized by $\Yy_{\{n,k\}}$ Young diagrams of Section \ref{Results}.

We say that some \framelike\ unfolded system is given an
\interpretation\ if \cite{Shaynkman:2000ts}

\begin{enumerate}
    \item On the whole space $\WW=\bigoplus_{q=0}^{\infty}\WW\fm{q}$, where the matter gauge fields, gauge parameters,
    equations of motion and Bianchi identities take values,
    there exists a bounded from below grading $g=0,1,...$, i.e., $\WW\fm{q}=\bigoplus_{g=0}^{\infty}\WW\fm{q}^g$. The homogeneous
    element of $\WW^g\fm{q}$ is a certain differential form with values in certain \lorentz-\irrep\ and is denoted as $W^g\fm{q}$. In simplest cases the
    grade is just the rank of the \lorentz-\irrep\ the element of  $\WW^g\fm{q}$
    takes values in.
    \item The closed subsector of one-forms $\Omega^I$ describes a
    background geometry, viz., Minkowski or (anti)-de Sitter. The
    generalized covariant derivative $\DD$ is to be divided into three parts,
    the last one not being necessary nontrivial
    \be \DD=D_L-\sigma_-+\sigma_+,\ee
    where $D_L$ is a background Lorentz-covariant derivative and has a zero grade,
    $\sigma_-$ is an operator of grade $(-1)$ and $\sigma_+$ contains positive grade operators. The only differential part is in $D_L$,
    whereas $\sigma_{\pm}$ acts vertically and is built of the background vielbein $h^a$.
    If two modules are glued the gluing element is supposed to be of
    grade $(-1)$ and also denoted as  $\sigma_-$.
\end{enumerate}
The equations then have the form \be D_L
W^n+\sum_{i=1}^{g-n}\sigma_+\left(W^{n+i}\right)=\sigma_-\left(W^{n+1}\right),\qquad
g=0,1,...\ee and analogously for the gauge transformations. In the
flat space all operators of positive grade appear to be trivial for
the massless case. As it does not affect the analysis let us
consider a simplified version with $\sigma_+=0$. Then,
$\DD^2={D_L}^2+\sigma_-D_L+D_L\sigma_-+(\sigma_-)^2=0$ is equivalent
to ${D_L}^2=0$, $(\sigma_-)^2=0$ and $\sigma_-D_L+D_L\sigma_-=0$.
The first is a part of the flatness condition
(\ref{UnfldIsoFlatness}) for the \iso-connection. The second is the
nilpotancy condition for $\sigma_-$. The third is satisfied provided
that $\sigma_-$ is twisted by the factor $(-)^{\Delta_g}$, where
$(\Delta_g+1)$ is a degree of $\sigma_-$, which is equal to the number
of the vielbeins $h^a$ that $\sigma_-$ is built of. Indeed, the
vielbeins $h^a$ anticommute with $D_L$ and, hence, the action of $\sigma_-$ is
to be twisted by the factor $(-)^{\Delta_g}$ in order $\sigma_-D_L+D_L\sigma_-=0$ holds true. Without further
mention, the pure sign factor $(-)^{\Delta_g}$ will be though of as
a part of the definition of $\sigma_-$.

It is useful
to illustrate the action of $\sigma_-$ when, for example, unfolded equations on fields with values in two modules $\module_1$
and $\module_2$ are glued as in (\ref{UnfldLinearizedSystemWithCocycles}) \be\TwoModulesGlued,\ee where short arrows
stand for the action of $\sigma_-$ within each module and long
arrows for the action of $\sigma_-$ between two modules (gluing
terms) and dots stand for gauge fields/parameters at different
grades,  bold dots represent the elements of $\WW\fm{p}=\{W^1\fm{p},W^2\fm{r}\}$.

All information about the dynamics turns out to be concealed in
cohomology groups of $\sigma_-$ \cite{Shaynkman:2000ts},
$\coh(\sigma_-)=\frac{Ker(\sigma_-)}{Im(\sigma_-)}$.
Let us analyze unfolded system
(\ref{UnfldLinearizedSystemGeneral}), which at grade $g$ has the
form \be\label{UnfldSimplest}\begin{split}
    \delta \xi^g\fm{1}&=D_L\xi^g\fm{0}+\sigma_-(\xi^{g+1}\fm{0}),\\
    &...,\\
    \delta W^g\fm{p}&=D_L\xi^g\fm{p-1}+\sigma_-(\xi^{g+1}\fm{p-1}),\\
    0&=D_L W^g\fm{p}+\sigma_-(W^{g+1}\fm{p}),\end{split}\ee
Beginning from the deepest level of gauge transformations and from
the lowest grade, it is obvious that those $\xi^{g+1}\fm{0}$ that
are not $\sigma_-$-closed can be treated as \Stueckelberg
(algebraic) gauge parameters for those $\xi^g\fm{1}$ that belong to
the image of $\sigma_-$. Therefore, those $\xi^g\fm{1}$ that are
$\sigma_-$-exact can be gauged away. The leftover gauge symmetry
satisfies $\delta
\xi^g\fm{1}=D_L\xi^g\fm{0}+\sigma_-(\xi^{g+1}\fm{0})=0$, so that
those $\xi^{g+1}\fm{0}$ that belong to the coimage of $\sigma_-$ are
expressed via derivative of $\xi^n\fm{0}$. Having sieved $\WW\fm{0}$
and $\WW\fm{1}$ in this way, only those $\xi^{g}\fm{0}$ are still
independent that are $\sigma_-$-closed and hence belong to
$\coh^0(\sigma_-)$, since only forms of zero-degree are elements of
$\WW\fm{0}$. Then, those $\xi^{g+1}\fm{1}$ that are not
$\sigma_-$-closed can be treated as \Stueckelberg\ gauge parameters
for those $\xi^g\fm{2}$ that belong to the image of $\sigma_-$.
Therefore, only those $\xi^{g}\fm{1}$ are still independent that are
$\sigma_-$-closed but not $\sigma_-$-exact and hence belong to
$\coh^1(\sigma_-)$. Having sieved $\WW\fm{0}$,...,$\WW\fm{p-1}$ one
after another, it turns out that independent differential gauge
parameters at the $k$-th level are given by $\coh^{p-k}(\sigma_-)$.
Analogously, those fields $W^g\fm{p}$ that are $\sigma_-$-exact can
be gauged away by virtue of \Stueckelberg\ gauge parameters at
$\WW\fm{p-1}$. Those fields $W^{g+1}\fm{p}$ that are not
$\sigma_-$-closed can be expressed via derivatives of fields at
lower grade by virtue of the equations $R^g\fm{p+1}\equiv D_L
W^g\fm{p}+\sigma_-(W^{g+1}\fm{p})=0$, these fields are called
\auxiliary. Therefore, the \dynamical\ fields, i.e., those that are
neither \auxiliary\ nor \Stueckelberg, are given by
$\coh^{p}(\sigma_-)$. The nilpotency of $\DD^2\equiv0$ implies
certain relations of the form $D_L
R^g\fm{p+1}+\sigma_-(R^{g+1}\fm{p+1})=0$ between
$R^g\fm{p+1}$. Therefore, \auxiliary\ fields are expressed by virtue
of $\sigma_-$-exact $R^g\fm{p+1}$ and $\sigma_-$-non-closed
$R^{g+1}\fm{p+1}$ are themselves expressed via derivatives of
$R^g\fm{p+1}$. Consequently, the independent equations on
\dynamical\ fields are given by $\coh^{p+1}(\sigma_-)$. From the
cohomological point of view higher degree forms correspond to
certain nontrivial relations between equations called Bianchi identities and
manifest gauge nature of equations. As there are generally more than one
levels of gauge transformations, one can expect the higher
cohomological group to be nontrivial.

 To sum up,

\begin{tabular}{|c|p{10cm}|}\hline
    cohomology group & interpretation \\ \hline\hline
    $\coh^{p-k},\ k=1...p$ &  differential gauge parameters at the $k$-th level of reducibility \\ \hline
    $\coh^p$ & \dynamical\ fields \\ \hline
    $\coh^{p+1}$ & independent gauge invariant equations on \dynamical\ fields \\ \hline
    $\coh^{p+k+1},\ k=1...p$ &  Bianchi identities for the $k$-th order gauge symmetry\\ \hline
    $\coh^{p+k},\ k>p$ & supposed to be trivial in a regular case \\ \hline
\end{tabular}
\vspace{0.2cm}\par To distinguish fields in cohomological sense
the following convention is introduced \cite{Gelfond:2003vh}:
\begin{description}
  \item[Stueckelberg fields(gauge parameters)] are those that
    can be eliminated by pure algebraic
    symmetry, i.e., these fields are $\sigma_-$-exact (are those that can be used to gauge away certain fields, i.e., these
    parameters are not $\sigma_-$-closed).
  \item[Auxiliary fields] are
    those that can be expressed via derivatives of some other fields, i.e., these fields are
    $\sigma_-$-nonclosed. Let us note that this definition is
    generally ambiguous, e.g., in system
    \be\begin{split}&\pl A=B,\\ &\pl B=A,\end{split}\ee
    which field is \auxiliary\ is a matter of choice. This ambiguity is removed by virtue of
    grade\cite{Shaynkman:2000ts, Gelfond:2003vh}.
  \item[Dynamical fields] are those that are neither \auxiliary\ nor \Stueckelberg, i.e., these fields are representatives of
  $\sigma_-$-cohomology classes.
\end{description}
Note that if certain \dynamical\ field given by $\coh^{p}$ belongs
to the grade-$g_f$ subspace $\WW^{g_f}\fm{p}$ and the corresponding
equations given by $\coh^{p+1}$ belong to the grade-$g_e$ subspace
$\WW^{g_e}\fm{p+1}$, the order of equations is equal to $g_e-g_f+1$.

If the unfolded system is supposed to admit a lagrangian
formulation, there has to be a one-to-one correspondence between the
number of \dynamical\ fields and that of equations, $k$-th level
gauge symmetries and $k$-th level Bianchi identities, i.e., in a certain sense
there should be a duality $\coh^{p-k}\sim\coh^{p+1+k}$, $k\in[0,p]$.
This takes place for all unfolded systems exemplified in this paper.

Since the operators involved, i.e., $D_L$ and $\sigma_-$, are of grade
$0$ and $-1$ only, the fields with the grade greater than $g_0$, for
any $g_0\geq0$, form a quotient module, thus describing the same
system on its own - \dual\ formulations. This picture partly breaks
in $\dSAdS$ because of appearance of grade $+1$ nilpotent operator
$\sigma_+$. \be D=D_L+\sigma_-+\lambda^2\sigma_+\ee

The specific character of \framelike\ unfolded systems, which will
be crucial for us, is that by virtue of the inverse background
vielbein $h^\mu_a$ all form indices of any gauge field/parameter
$W^{a_1(s_1),...,a_p(s_m)}\fm{q}$ can be converted to the fiber ones
\be
W^{a_1(s_1),...,a_m(s_m)|[d_1...d_q]}=W^{a_1(s_1),...,a_m(s_m)}_{\mu_1...\mu_q}h^{\mu_1d_1}...h^{\mu_qd_q}.\ee
The resulting tensor $W^{a_1(s_1),...,a_m(s_m)|[d_1...d_q]}$ is not
irreducible and can be decomposed into \lorentz-\irreps\ according
to the \lorentz-tensor product rule \be\label{UnfldConvertToFiber}
W^{a_1(s_1),...,a_m(s_m)|[d_1...d_q]}\Longrightarrow\Y{s_1,...,s_m}\bigotimes_{\lorentz}\Y{(1,q)}=\bigoplus_\alpha\Yy_\alpha,\ee
where one-column Young diagram $\Y{(1,q)}$ of height-$q$  represents
the anti-symmetric indices $d_1,...,d_q$ and $\Yy_\alpha$ are the
\lorentz-\irreps\ the tensor product decomposes into. Having
converted all fields (parameters, equations) of a given unfolded
system to the fiber tensors one can make the identification of the
fields in the unfolded approach and those of the \metric\ approach,
though the general covariance, gauge invariance and other advantages
of the unfolded formalism will be not manifest. While constructing
the unfolded formulations of a known \metric\ systems or
interpreting a given unfolded system in terms of \metric\ fields,
this technique will be widely used. Moreover, the calculation of
$\sigma_-$ cohomology groups is largely based on the explicit
evaluation of (\ref{UnfldConvertToFiber})-like expressions. Once an
unfolded form is known no use of this technique is needed either to
generalize it to other backgrounds or to introduce interactions.

Field $W^{a_1(s_1),...,a_m(s_m)|[d_1...d_q]}$ is traceless in
$a_1,...,a_m$ and $d_1,...,d_q$, separately. However, the cross traces need
not vanish and, hence, some of $\Yy_\alpha$ represent traces. To
distinguish between the traces of different orders let us introduce
the following: the $\Yy$-valued degree-$q$ form $W^{\Yy}\fm{q}$ is
said to be a trace of the $r$-th order iff it has the form \be
W^{\Yy}\fm{q}=\underbrace{h...h}_{r}\underbrace{h...h}_{q-r}
C^{\Yy'}\fm{0}\label{UnfldTraceStructure}\ee for some $\Yy'$-valued
degree-$0$ form $C^{\Yy'}\fm{0}$, where $(q-r)$ indices of the
background vielbeins $h^a$ are contracted with the tensor
representing $\Yy'$ and $r$ indices are free for the whole
expression to take values in $\Yy$, the appropriate Young
symmetrizer is implied. When the indices of $W^{\Yy}\fm{q}$ are
converted to the fiber ones, the expression takes the form
$\underbrace{\eta^{..}...\eta^{..}}_r C^{\Yy'}$, i.e., $C^{\Yy'}$
represents a trace of the $r$-th order.

Different aspects of unfolding are illustrated on the
examples of a massless scalar field, a massless spin-one field and a
massless spin-$s$ and spin-$(s+\frac12)$ fields.

\subsection{Examples of Unfolding}\label{UnfldExamples}
\begin{Example}\label{UnfldScalar} \textbf{Unfolding a scalar field \cite{Vasiliev:1988xc, Shaynkman:2000ts}.}
First, it should be noted that it is easy, of course, to convert the
Klein-Gordon equation $\Box C=0$ to a system of first order
equations \be\begin{split}\label{UnfldScalarIncorrect}&\pl_\mu
C=C_\mu,\\ &\pl^\mu C_\mu=0,\end{split}\ee but when writing the
second equation the explicit form of the metric is to be used, hence,
these equations are not of unfolded form (\ref{UnfldEquations}).

Described in terms of a scalar field $C(x)$, the theory is brought
into a non gauge form, therefore, zero-forms only are allowed, or
else there would be some gauge symmetry according to
(\ref{UnfldGauge}). The most general r.h.s. of $dC=...$ has the form
\be dC=h_a C^a\ee for some vector-valued zero-form $C^a$. This
equation just parameterizes the first derivative of $C(x)$ and is
similar to the first of (\ref{UnfldScalarIncorrect}). There are
three terms allowed on the r.h.s. of $dC^a=...$ \be
dC^a=h_bC^{a,b}+h_bC^{ab}+h^a C'\ee with $C^{a,b}$, $C^{ab}$ and
$C'$ taking values in antisymmetric, symmetric and scalar
\lorentz-\irreps, i.e., \smallpic{\YoungAA}, \smallpic{\YoungB},
$\bullet$. The first term is forbidden by Bianchi identity $h_a
dC^a\equiv0$. To analyze the remaining terms let us decompose $dC^a$
into \lorentz-\irreps\ \be\pl_\mu C^a h^{\mu
b}=\YoungA\otimes\YoungA=\YoungB\oplus\YoungAA\oplus\bullet,\ee
where $\smallpic{\YoungAA}=\pl^{[a}C^{b]}$,
$\smallpic{\YoungB}=\pl^{(a}C^{a)}-\frac1d \eta^{aa}\pl_b C^b$,
$\bullet=\pl_a C^a=\Box C$.

The \smallpic{\YoungAA}\ component represents 
Bianchi identity in the first order reformulation of the Klein-Gordon
equation. Indeed, expressing $C^a$ as $\pl^a C$ implies $\pl^{[a}
C^{b]}\equiv 0$. $\bullet$ represents the desired Klein-Gordon
equation, whereas \smallpic{\YoungB} is the only component allowed
to be nonzero \onmassshell. Therefore, to impose the Klein-Gordon
equation the term $h^a C'$ has to be omitted, \be d C^a = h_b
C^{ab}.\ee The only possible r.h.s. for $dC^{aa}=...$ compatible
with Jacobi identity $h_bdC^{ab}\equiv0$ is of the form
$dC^{aa}=h_bC^{aab}$ for $C^{aaa}$ taking values in a rank-three
symmetric \lorentz-\irrep, i.e., \smallpic{\YoungC}. The process of
unfolding continues unambiguously in this way and results in the
full system \be \label{UnfldScalar} dC^{a(k)}=h_b C^{a(k)b}\equiv
F^{a(k)}(C), \qquad k=0,1,...,\ee where $C^{a(k)}$ is rank-$k$
totally symmetric traceless field, i.e., taking values in
\RectARow{4}{$k$} \lorentz-\irrep.

To make a connection with the general statement of Section
\ref{Results}, for $\Yy=\Y{(0,0)}$ the general scheme results in
\begin{align*}
    \Yy_g&: & & \bullet & & \YoungA & & \YoungB & & \YoungC & & ... \\
    \{n,k\}&: & & \{0,0\} & & \{0,1\} & &\{0,2\} & & \{0,3\} & & ...\\
    g&:&& g=0 & &g=1 & &g=2 & &g=3 & & ...\\
    q_g&:&& q_0=0 & & q_1=0 & & q_2=0 & & q_3=0 & & ...
\end{align*}
therefore, spaces $\WW\fm{q}$ and operator $\sigma_-$ which enters
$\DD=D_L-\sigma_-$ should be defined as \be\begin{split}
    &\WW\fm{q}=\{W\fm{q}, W\fm{q}^a, W\fm{q}^{aa},...\},\\
\end{split}\ee
\be\sigma_-(W\fm{q}^g)=\begin{cases}
    0,& g=0,\\
    h_b W^{a(g-1)b}\fm{q}, & g>0.
\end{cases}\ee
Consequently, $\WW$ is a space of various differential forms with
values in totally symmetric traceless \lorentz-tensors, graded by
the form degree and by the rank of \lorentz-\irreps. Space
$\WW\fm{q}$ forms an irreducible \iso-module. System
(\ref{UnfldScalar}) reads \be \label{UnfldScalarSimple}\DD\omega\fm{0}=0,\qquad
\omega\fm{0}\in \WW\fm{0}.\ee The cohomology groups of $\sigma_-$ are
easy to find:
\begin{description}
    \item[$\coh^0(\sigma_-)=C(x)$,] as the lowest grade field is automatically
    closed, $\sigma_-(C)=0$, and it cannot be exact as a form of zero degree.
    Contrariwise, zero-forms with grade greater than zero are not
    closed, $\sigma_-(C^{k+1})=h_bC^{a(k)b}=0 \Longleftrightarrow
    C^{k+1}=0$;
    \item[$\coh^1(\sigma_-)=h^a B(x)\fm{0}$,] for some $B(x)\fm{0}\in\WW^0\fm{0}$.  One-form
    $B^a\fm{1}=h^a B(x)\fm{0}$ is closed,
    $\sigma_-(B^a\fm{1})=h_aB^a\fm{1}=h_a h^a B(x)\fm{0}\equiv0$, and
    it cannot be represented as $\sigma_-(C^2)=h_bC^{ab}$ as far as
    $C^{ab}$ is traceless.  The cohomology groups at higher grade are trivial since $\sigma_-(h^a B^{a(k)}(x)\fm{0})\neq0$ for $k>0$;
  \item[$\coh^{q>1}(\sigma_-)=\emptyset$,] as one can make sure.
\end{description}
The \interpretation\ in terms of Section \ref{UnfldInterpretation} is
as follows: as is expected, the \dynamical\ field given by $\coh^0$
is $C(x)$. The equations are given by the projection of
$dC^a=h_bC^{ab}$ to $\coh^1$, i.e., $\square C=0$. The fields with
$k>0$ are \auxiliary, being expressed via derivatives of $C(x)$,
$C^{a(k)}=\pl^a...\pl^aC(x)$. Inasmuch as there is no gauge symmetry
in the system, the higher cohomology groups are trivial.

An infinitely many \dual\ formulations of a massless scalar \field\ are also included. Indeed, the fields with $k\geq k_0>0$ form a quotient module $\module_{k_0}$, i.e., one can consistently write
\be dC^k=\sigma_-(C^{k+1}),\quad k\geq k_0\label{UnfldScalarDual}.\ee
The first equation
\be dC^{a(k_0)}=h_bC^{a(k_0)b}\ee
imposes on the \dynamical\ field $C^{a(k_0)}$
\be\begin{split}
    &\pl_b C^{ba(k_0-1)}=0,\\
    &\pl^{[b}C^{c]a(k_0-1)}=0,
\end{split}\ee
which imply, at least locally,
$C^{a(k_0)}=\overbrace{\pl^a...\pl^a}^{k_0}C$ and $\Box C=0$, for
some field $C(x)$. All fields at higher grade are \auxiliary. This
statement is confirmed by the cohomology analysis:
$\coh^0(\module_{k_0},\sigma_-)=C^{k_0}$, i.e., scalar is
described in terms of a traceless rank-$k_0$ symmetric tensor
field subjected to the equations given by
$\coh^1(\module_{k_0},\sigma_-)=h^{a}B^{a(k_0-1)}\fm{0}-\frac{k_0(k_0-1)}{2(d+2k_0-4)}\eta^{aa}h_cB^{a(k_0-2)c}\fm{0}+h_cB^{a(k_0),c}\fm{0}$,
for $B^{a(k_0-1)}\fm{0}$, $B^{a(k_0),c}\fm{0}$ taking values in
\lorentz-\irreps\ characterized by diagrams \RectARow{4}{$k_0-1$},
\RectBRow{5}{1}{$k_0$}{}. These two elements represents just the
components of
$\WW^{k_0}\fm{1}\sim\RectARow{5}{$k_0$}\otimes\YoungA$ with the
symmetry of \RectARow{4}{$k_0-1$} (trace) and
\RectBRow{5}{1}{$k_0$}{}, the third component with the symmetry of
\RectARow{5}{$k_0+1$} is exact.

The above results can be easily extended to  $\dSAdS$ background
and to a massive scalar field \cite{Shaynkman:2000ts}, \be
D_LC^k=\sigma_-(C^{k+1})+\sigma_+(C^{k-1})=0,\ee where there
appears a nontrivial positive grade operator
\be\sigma_+(C^{k-1})=(m^2-\lambda^2 k(k+d-1))\left(h^a
C^{a(k-1)}-\frac{(s-1)}{d+2k-4}\eta^{aa}h_b C^{a(k-1)b}\right)\ee and
$D_L$ is a Lorentz-covariant derivative in $\dSAdS$. In the
\dSAdS\ case the situation is more interesting due to the
appearance of $\sigma_+$ operator and the possibility of the
accidental degeneracy of $\sigma_+$ when $m^2=\lambda^2
k'(k'+d-1)$ for some $k'$ \cite{Shaynkman:2000ts}.
$\blacksquare$\end{Example}

\begin{Example}\label{UnfldMaxwell}
\textbf{Unfolding Maxwell equations \cite{Lopatin:1987hz,
Vasiliev:2005zu}.} An example of massless spin-one
\field\ is more interesting as a gauge system. The
main object is a one-form $A_\mu$. From gauge transformation law
\be\delta A\fm{1}=d\xi\fm{0}\ee it follows that there should not be
other forms of rank greater than zero. If this were the case the
gauge parameters of these new forms would be involved by virtue of
(\ref{UnfldGauge}), making the gauge law for $A\fm{1}$
inappropriate. Indeed, there are two possibilities to introduce
higher degree forms on the r.h.s. of $dA=...$: (i)
$dA\fm{1}=R\fm{2}$ for a scalar valued degree-two form $R\fm{2}$,
which possesses its own gauge parameter $\chi\fm{1}$, $\delta
R\fm{2}=d\chi\fm{1}$. In accordance with (\ref{UnfldGauge}) the
gauge transformation law for $A\fm{1}$ is modified to $\delta
A\fm{1}=d\xi\fm{0}+\chi\fm{1}$ so that $A\fm{1}$ can be gauged away
by virtue of \Stueckelberg\ gauge parameter $\chi\fm{1}$. This
case corresponds to contractible free differential algebras
(\ref{UfldContractible}). (ii) $dA\fm{1}=h_b\omega^b\fm{1}$ for a
vector valued degree-one form $\omega^a\fm{1}$, which possesses its
own gauge parameter $\epsilon^a\fm{0}$. In accordance with
(\ref{UnfldGauge}) the gauge transformation law for $A\fm{1}$ is
modified to $\delta A\fm{1}=d\xi\fm{0}+h_b\epsilon^b\fm{0}$ so that
$A\fm{1}$ can be gauged away by virtue of \Stueckelberg\ gauge
parameter $\epsilon^a\fm{0}$. Consequently, in both cases $A\fm{1}$
can be made non-dynamical, which is not what was expected.
Therefore, the only possibility is to introduce a zero-form
$C^{[ab]}\fm{0}$, anti-symmetric in $a,b$, parameterizing by virtue
of \be dA\fm{1}=h_a h_b C^{[ab]}\fm{0}\ee the Maxwell field
strength. Bianchi identity $h_a h_b dC^{[ab]}\fm{0}\equiv0$
tolerates two terms on the r.h.s of $dC^{[ab]}\fm{0}=...$
\be\label{UnfldMaxwellExt}
dC^{[ab]}\fm{0}=h_cC^{[ab],c}\fm{0}+h^{[a}C^{b]}\fm{0},\ee with
$C^{[ab],c}\fm{0}$ and $C^a\fm{0}$ taking values in {\smallpic\YoungBA} and
{\smallpic\YoungA} \lorentz-\irreps, the former taken in antisymmetric basis.
In order to determine which of the terms should be omitted, if any,
the decomposition of the first derivative of the Maxwell field
strength into Lorentz \irreps\ has to be analyzed \be
C^{[ab]|c}\equiv \pl_\mu C^{[ab]}h^{\mu
c}=\YoungA\otimes\YoungAA=\YoungAAA\oplus\YoungA\oplus\YoungBA,\ee
where ${\smallpic\YoungAAA}=\pl^{[a}C^{bc]}$, ${\smallpic\YoungA}=\pl_b C^{ab}$,  and
${\smallpic\YoungBA}=\pl^{c}C^{[ab]}-\pl^{[c}C^{ab]}+\frac2{d-1}\eta^{c[a}\pl_d
C^{b]d} $. $\pl^{[a}C^{bc]}$ is identically zero provided that
$C^{[ab]}$ is expressed via the first derivative of $A_\mu$, or
represents the second pair of Maxwell equations in terms of
field strength $C^{[ab]}$. In both cases, this component should be
zero and, moreover, this can not be kept nonzero as there is no
$h_cC^{[abc]}$-like term allowed on the r.h.s. of
(\ref{UnfldMaxwellExt}). The second, $\pl_b C^{ab}$, is the desired
Maxwell equations in terms of $A_\mu$, or the first pair of Maxwell
equations in terms of the field strength. Consequently, to impose
Maxwell equations term $h^{[a}C^{b]}$ has to be omitted, whereas
the third is the only component allowed to be nonzero \onmassshell.

Again, unfolding continues unambiguously and requires an infinite
set of fields $C^{[ab],c(k)}\fm{0}$ taking values\footnote{The
fields $C^{[ab],c(k)}$ are taken in antisymmetric basis, i.e., they
are antisymmetric in $a$, $b$, symmetric in $c(k)$ and
$C^{[ab,d]c(k-1)\equiv0}$.} in \RectAYoung{5}{$k$}{\YoungAA} to be
introduced \be dC^{[ab],c(k)}=h_dC^{[ab] ,c(k)d}, \qquad
k=0,1,...\ee The full system has the form
\be\label{UnfldMaxwellEquations}\begin{split}
    &dA\fm{1}=h_a h_b C^{a,b}\fm{0}\equiv F(A,C), \qquad \delta A\fm{1}=d\xi\fm{0},\\
    &dC^{[ab],c(k)}\fm{0}=h_dC^{[ab] ,c(k)d}\fm{0}\equiv F^{[ab],c(k)}(C), \qquad k=0,1,...\end{split}\ee
and represents two modules being glued together by the term on the
r.h.s of the first equation.

For $\Yy=\Y{(1,1)}$, i.e., $N=1$ and $p=1$, the general scheme of Section \ref{Results} results in
\begin{align*}
    \Yy_g&: & & \bullet & & \YoungAA & & \YoungBA & & \YoungCA & & ... \\
    \{n,k\}&: & & \{1,0\} & & \{0,0\} & &\{0,1\} & & \{0,2\} & & ...\\
    g&:&& g=0 & &g=1 & &g=2 & &g=3 & & ...\\
    q_g&:&& q_0=1 & & q_1=0 & & q_2=0 & & q_3=0 & & ...
\end{align*}
therefore, \be \label{UnfldMaxwellModule}\WW\fm{q}=\begin{cases}
    \{W\fm{0}\},& q=0,\\
    \{W\fm{q},W\fm{q-1}^{[ab]}, W\fm{q-1}^{[ab],c},...\}, &q>0,
\end{cases}\ee
let us point out the shortening of $\WW\fm{0}$. The action of
$\sigma_-$ on  $W\fm{q}\in \WW\fm{q}$ is defined as
\be\sigma_-(W^g\fm{q})=\begin{cases}
    0,& g=0,\\
    h_ah_bW^{[ab]}\fm{q-1}, &g=1,\\
    h_d W^{[ab] ,c(g-2)d}\fm{q-1}, & g>1.
\end{cases}\ee
With $\DD=D_L-\sigma_-$, unfolded system
(\ref{UnfldMaxwellEquations}) can be rewritten as \be\label{UnfldMaxwellSimple}
\DD\omega\fm{1}=0,\qquad \delta \omega\fm{1}=\DD\xi\fm{0},\ee where
$\omega\fm{1}\in\WW\fm{1}$ and $\xi\fm{0}\in\WW\fm{0}$. The fields
at $\{n=1,k=0\}\sim g=0$ form a finite-dimensional \iso-module,
whereas the fields with $\{n=0,k=0,1,...\}\sim g>0$ form an
infinite-dimensional \iso-module.

Again, the cohomology groups are easy to find: gauge parameters are
given by $\coh^0(\sigma_-)=\xi\fm{0}$, \dynamical\ fields by
$\coh^1(\sigma_-)=A\fm{1}$, Maxwell equations by
$\coh^2(\sigma_-)=h^{[a}B^{b]}\fm{0}$, and the Bianchi identities
for the first order gauge transformations by
$\coh^3(\sigma_-)=h^ah^bB\fm{0}$, the rest of groups are empty
$\coh^{k>3}(\sigma_-)=\emptyset$. Important is the direct
correspondence between the number of fields in $\coh^1$ and the
equations in $\coh^2$, the gauge parameters in $\coh^0$ and the
Bianchi identities in $\coh^3$.

There are infinitely many of non-gauge \dual\ formulations based on
$C^{[ab] ,c(k_0)}$, e.g., for $k_0=0$ one recovers the Maxwell
equations in terms of the field strength \be\begin{split}
    &\pl^{[a}C^{b,c]}=0, \\ &
    \pl_b C^{a,b}=0.
\end{split}\ee
$\blacksquare$\end{Example}

Let us note that the ambiguity at the first step of unfolding (the
terms $h^aC'$ and $h^{[a}C^{b]}$ in the above two examples)
expresses the possibility of introducing a mass term.

\begin{Example}\label{UnfldSymmetric}
\textbf{Unfolding a totally symmetric massless spin-$s$ field
\cite{Vasiliev:1988xc, Lopatin:1987hz, Vasiliev:2001wa}.} We start
from the \metric\ description of a massless spin-$s$
\field\ in terms of a double-traceless symmetric rank-$s$ field
$\phi_{\mu_1...\mu_s}$ satisfying (\ref{FlatMSFronsdal}), the second
order equations are invariant with respect to the first order gauge
transformation with traceless rank-$(s-1)$ gauge parameter
$\xi_{\mu_1...\mu_{s-1}}$. Inasmuch as the Young diagram
\RectARow{5}{$s$} of \mls\ is of the height one, there are gauge
transformations of the first level only. Therefore, the dynamical field
in the unfolded approach has to be a one-form. There is only one way
to identify $\phi_{\mu_1...\mu_s}$ and $\xi_{\mu_1...\mu_{s-1}}$
with certain $\omega\fm{1}^{\Yy_0}$ and $\xi\fm{0}^{\Yy_0}$ taking
values in the same \lorentz-\irrep\ $\Yy_0$, namely,
$\Yy_0=\RectARow{5}{$s-1$}$, i.e., $\omega^{a(s-1)}\fm{1}$ and
$\xi^{a(s-1)}\fm{0}$ takes values in a rank-$(s-1)$ symmetric
traceless tensors. Field $\phi_{\mu_1...\mu_s}$ is identified
with a totally symmetric part of $\omega^{a(s-1)}\fm{1}$, i.e.,
$\phi_{\mu_1...\mu_s}=h^{b_1}_{(\mu_1}...h^{b_{s-1}}_{\mu_{s-1}}\omega^{a_1...a_{s-1}}_{\mu_s)}\eta_{a_1b_1}...\eta_{a_{s-1}b_{s-1}}$,
the double-tracelessness condition is the consequence of the
tracelessness of $\omega^{a(s-1)}\fm{1}$ in $a_1...a_{s-1}$.
$\xi_{\mu_1...\mu_{s-1}}$ is identified with $\xi^{a(s-1)}\fm{0}$
directly
$\xi_{\mu_1...\mu_{s-1}}=h^{b_1}_{\mu_1}...h^{b_{s-1}}_{\mu_{s-1}}\xi^{a_1...a_{s-1}}\eta_{a_1b_1}...\eta_{a_{s-1}b_{s-1}}$
and one can make sure that the Fronsdal's gauge transformation law
is recovered from $\delta\omega^{a(s-1)}\fm{1}=d\xi^{a(s-1)}\fm{0}$.
Field $\omega^{a(s-1)}\fm{1}$ has a \redundant\ component with
the symmetry of \RectBRow{5}{1}{$s-1$}{}, which can be made
\Stueckelberg\ by virtue of gauge parameter $\xi^{a(s-1),b}$ with the same
symmetry type, i.e., \be \delta
\omega^{a(s-1)}\fm{1}=d\xi^{a(s-1)}\fm{0}+h_c\xi^{a(s-1),c}\fm{0}.\ee
It automatically follows that there is a degree-one gauge field
$\omega^{a(s-1),b}\fm{1}$ coupled with $\omega^{a(s-1)}\fm{1}$ as
\be d \omega^{a(s-1)}\fm{1}=h_c\omega^{a(s-1),c}\fm{1}.\ee The only
solution to Bianchi identity
$h_cd\omega^{a(s-1),c}\fm{1}\equiv0$ is\footnote{The proof is not
given as more general statement about the solutions of such
equations is proved in the next Section.} \be d
\omega^{a(s-1),b}\fm{1}=h_c\omega^{a(s-1),bc}\fm{1},\ee where a new
field with the symmetry of \RectBYoung{5}{$s-1$}{\YoungB} is
introduced, and so on \be d
\omega^{a(s-1),b(k)}\fm{1}=h_c\omega^{a(s-1),b(k)c}\fm{1}\ee until
the field with the symmetry of \RectBRow{5}{5}{$s-1$}{}, for which
the Bianchi identity solves as \be d
\omega^{a(s-1),b(s-1)}\fm{1}=h_ch_dC^{a(s-1)c,b(s-1)d}\fm{0},\ee
where field $C^{a(s),b(s)}\fm{0}$ represents a generalized Weyl
tensor and has the symmetry of \RectBRow{5}{5}{$s$}{}. The solution
of Bianchi identity $h_ch_d dC^{a(s-1)c,b(s-1)d}\fm{0}\equiv0$
has the form \be
\label{UfldHSWeylFirst}dC^{a(s),b(s)}\fm{0}=h_c\left(C^{a(s)c,b(s)}\fm{0}+{\frac{s}2}C^{a(s)b,b(s-1)c}\fm{0}\right),\ee
where $C^{a(s+1),b(s)}\fm{0}$ has the symmetry of
\RectBRow{6}{5}{$s+1$}{$s$}. The second term on the r.h.s. of
(\ref{UfldHSWeylFirst}) supplements the first one to have a proper
Young symmetry. Further unfolding requires a set of degree-zero forms
$C^{a(s+i),b(s)}\fm{0}$, taking values in \lorentz-\irreps\
characterized by Young diagrams \RectBRow{7}{5}{$s+i$}{$s$}. The
full system has the form \be\begin{split}\label{UnfldHSFullsystem}
    &d \omega^{a(s-1),b(k)}\fm{1}=h_c\omega^{a(s-1),b(k)c}\fm{1},\\
    &\delta \omega^{a(s-1),b(k)}\fm{1}=d\xi^{a(s-1),b(k)}\fm{0}+h_c\xi^{a(s-1),b(k)c}\fm{0},\qquad k\in[0,s-2],\\
    &d \omega^{a(s-1),b(s-1)}\fm{1}=h_ch_dC^{a(s-1)c,b(s-1)d}\fm{0}, \qquad \delta\omega^{a(s-1),b(s-1)}\fm{1}=d\xi^{a(s-1),b(s-1)}\fm{0}\\
    &dC^{a(s+i),b(s)}=h_c\left(C^{a(s+i)c,b(s)}+\frac{s}{i+2}C^{a(s+i)b,b(s-1)c}\right),\qquad i\in[0,\infty).\end{split}\ee
By the construction the system incorporates Fronsdal's field
$\phi_{\mu_1...\mu_s}$ with the correct gauge law, but there is
still to be proved that the Fronsdal's equations are really imposed
and that there are no other \dynamical\ fields in the system. Let us
reconstruct Fronsdal's equations (\ref{FlatMSFronsdal}), whereas
the second statement will be proved as a part of a more general
theorem. The first two equations of (\ref{UnfldHSFullsystem}) read
\begin{align}
\pl_{\mu}\omega^{a(s-1)}_\nu-\pl_{\nu}\omega^{a(s-1)}_\mu&=h_{c\mu}\omega^{a(s-1),c}_\nu-h_{c\nu}\omega^{a(s-1),c}_\mu,\\
\pl_\mu\omega^{a(s-1),b}_\nu-\pl_\nu\omega^{a(s-1),b}_\mu&=h_{c\mu}\omega^{a(s-1),bc}_\nu-h_{c\nu}\omega^{a(s-1),bc}_\mu
\end{align}
It is convenient to convert all world indices to the fiber ones
\begin{align}
\label{UnfldHSEqA}\pl^c\omega^{a(s-1)|d}-\pl^d\omega^{a(s-1)|c}&=\omega^{a(s-1),c|d}-\omega^{a(s-1),d|c},\\
\label{UnfldHSEqB}\pl^c\omega^{a(s-1),b|d}-\pl^d\omega^{a(s-1),b|c}&=\omega^{a(s-1),bc|d}-\omega^{a(s-1),bd|c},
\end{align}
where $\omega^{a(s-1)|b}\equiv\omega^{a(s-1)}_\mu h^{b\mu}$,
$\omega^{a(s-1),b|c}\equiv\omega^{a(s-1),b}_\mu h^{c\mu}$,
$\omega^{a(s-1),bb|c}\equiv\omega^{a(s-1),bb}_\mu h^{c\mu}$.
Contracting (\ref{UnfldHSEqB}) with $\eta_{bd}$ and, then,
symmetrizing $c$ with $a_1...a_{s-1}$ results in
\be\label{UnfldHSEqC}\pl_c\omega^{a(s-1),c|a}-\pl^a\omega^{a(s-1),c|}_{\phantom{a(s-1),c|}c}=0.\ee
By symmetrizing in (\ref{UnfldHSEqA}) $a_1...a_{s-1}$ with $d$ \be
\omega^{a(s-1),c|a}=\pl^c\omega^{a(s-1)|a}-\pl^a\omega^{a(s-1)|c}\ee
and, then, by contracting (\ref{UnfldHSEqA}) with $\eta_{da_{s-1}}$
\be\omega^{a(s-1),c|}_{\phantom{a(s-1),c|}c}=(s-1)\left(\pl_c\omega^{a(s-2)c|a}-\pl^a\omega^{a(s-2)c|}_{\phantom{a(s-2)c|}c}\right)\ee
all the terms of (\ref{UnfldHSEqC}) can be expressed via
$\omega^{a(s-1)|b}$. Plugging these to (\ref{UnfldHSEqC}) gives
\be\square
\omega^{a(s-1)|a}-\pl^a\pl_c\left((s-1)\omega^{a(s-2)c|a}+\omega^{a(s-1)|c}\right)+(s-1)\pl^a\pl^a\omega^{a(s-2)c|}_{\phantom{a(s-2)c|}c}=0,\ee
where $\omega^{a(s-1)|a}$ is to be identified with Fronsdal's
field $\phi^{a(s)}$ as $\omega^{a(s-1)|a}=\frac1s\phi^{a(s)}$, then,
$\omega^{a(s-2)c|}_{\phantom{a(s-2)c|}c}=\frac12\phi^{a(s-2)c}_{\phantom{a(s-2)c}c}$,
$(s-1)\omega^{a(s-2)c|a}+\omega^{a(s-1)|c}=\phi^{a(s-1)c}$ and the
equation acquires the Fronsdal's form (\ref{FlatMSFronsdal})
\be\label{UnfldFronsdal}\square\phi^{a(s)}-s\pl^a\pl_c\phi^{a(s-1)c}+{\scriptstyle
\frac{s(s-1)}2}\pl^a\pl^a\phi^{a(s-2)c}_{\phantom{a(s-2)c}c}=0.\ee
For $\Yy=\Y{(s,1)}$, i.e., $N=1$ and $p=1$, the general scheme of
Section \ref{Results} results in
\begin{align*}
    \Yy_g&: & & \RectARow{4}{$s-1$} & & ...&&   \RectBRow{4}{4}{$s-1$}{} & & \RectBRow{5}{5}{$s$}{} & & \RectBRow{6}{5}{$s+1$}{$s$} & & ... \\
    \{n,k\}&: & & \{1,0\} & &...&&         \{1,s-1\} & &\{0,0\} & & \{0,1\} & & ...\\
    g&:&& g=0 & &...&& g=s-1 & &g=s & &g=s+1 & & ...\\
    q_g&:&& q_0=1 & &...&&  q_{s-1}=1 & & q_s=0 & & q_{s+1}=0 & & ...,
\end{align*}
therefore,
\be\WW\fm{q}=\begin{cases}\label{UnfldHSModule}
    \{W^{a(s-1)}\fm{0},W^{a(s-1),b}\fm{0},...,W^{a(s-1),b(s-1)}\fm{0}\}, & q=0,\\
    \{W^{a(s-1)}\fm{q},W^{a(s-1),b}\fm{q},...,W^{a(s-1),b(s-1)}\fm{q},W^{a(s),b(s)}\fm{q-1},W^{a(s+1),b(s)}\fm{q-1},...\}, & q>0,
\end{cases}\ee
and \be\sigma_-(W^g\fm{q})=\begin{cases}
    0, & g=0,\\
    h_cW^{a(s-1),b(g-1)c}\fm{q}, &g\in[1,s-1],\\
    h_ch_dW^{a(s-1)c,b(s-1)d}\fm{q-1}, & g=s,\\
    h_c\left(W^{a(g-s-1)c,b(s)}\fm{q-1}+\frac{s}{g-s+1}W^{a(g-s-1)b,b(s-1)c}\fm{q-1}\right), & g>s.
\end{cases}\ee
With $\DD=D_L-\sigma_-$ full system (\ref{UnfldHSFullsystem})
can be rewritten as \be\label{UnfldHSSimple} \DD\omega\fm{1}=0,\qquad \delta
\omega\fm{1}=\DD\xi\fm{0},\ee where $\omega\fm{1}\in\WW\fm{1}$ and
$\xi\fm{0}\in\WW\fm{0}$. The fields at $\{n=1,k=0,...,s-1\}\sim
g\in[0,s-1]$ form a finite-dimensional \iso-module, whereas the
fields with $\{n=0,k=0,1,...\}\sim g\geq s$ form an
infinite-dimensional \iso-module.

The representatives of cohomology classes can be chosen as $\coh^0=\xi^{a(s-1)}\fm{0}$,
$\coh^1=h_b\phi^{a(s-1)b}$, $\coh^2= h^b h_c G^{a(s-1)c}-h^{a} h_c
G^{a(s-2)bc}+\gamma\eta^{aa}h^a h_c G^{a(s-4)bcn}_{\phantom{a(s-4)bcn}n}+(\beta\eta^{ab}h^ah_c +\alpha\eta^{aa}h^bh_c) G^{a(s-3)cn}_{\phantom{a(s-3)cn}n}$\footnote{$\alpha=-\frac{(s(d+s-5)-d+6)(s-2)}{2(d+s-4)(d+2s-6)}$, $\beta=\frac{(s-2)}{(d+s-4)}$, $\gamma=\frac{(d+s-6)(s-2)(s-3)}{2(d+s-4)(d+2s-6)}$.}, $\coh^3=h^bh^{a}h_c\chi^{a(s-2)c}$,
$\coh^{k>3}=\emptyset$, where $\phi^{a(s)}$ and $G^{a(s)}$ are
double-traceless tensor fields and $\xi^{a(s-1)}$ and
$\chi^{a(s-1)}$ are traceless. $\coh^{0,1}$ are nontrivial at the
lowest grade, whereas $\coh^{2,3}$ are nontrivial at grade-one,
therefore, the order of equations that are imposed on the
\dynamical\ field is equal to two.  It turns out that there is no need in the explicit form for $\alpha$, $\beta$ and $\gamma$, it is sufficient to find out the Young symmetry of representatives only. The uniqueness of Fronsdal's
theory at the level of action was demonstrated in \cite{Curtright:1979uz} and at the
level of equations in \cite{Francia:2002aa, Skvortsov:2007kz}. Consequently, there is
no need for equations (\ref{UnfldFronsdal}) to be found explicitly -
those of Fronsdal are the only possible. Again, there is a one-to-one
correspondence between the number of fields in $\coh^1$ and the
equations in $\coh^2$, the gauge parameters in $\coh^0$ and the
Bianchi identities in $\coh^3$. The system contains an infinitely
many of \dual\ formulations, those that have field
$W^{a(s-1),b(k_0)}\fm{1}$ at the lowest grade are gauge \dual\
descriptions, whereas those that have  field $C^{a(s+k_0),b(s)}$
at the lowest grade are non-gauge. \Dual\ description based on
field $W^{a(s-1),b(1)}\fm{1}$ was elaborated in
\cite{Matveev:2004ac}. $\blacksquare$\end{Example}

\begin{Example}\label{UnfldSymmetricFermionic}
\textbf{Unfolding a totally symmetric massless spin-$(s+\frac12)$  field\cite{Vasiliev:1980as, Vasiliev:1987tk, Vasiliev:1988xc}}.
Unfolding a totally symmetric  massless spin-$(s+\frac12)$ field results in literally the same unfolded system (\ref{UnfldHSFullsystem}, \ref{UnfldHSSimple}) but with fields taking values in
\lorentz-\irreps\ that are irreducible spin-tensors with the same tensor part as in the bosonic case. Note that operator $\sigma_-$ is not modified but the $\sigma_-$ cohomology groups are slightly changed such that the equations become of the first order.

\textbf{The Dirac equation.} Unfolded system (\ref{UnfldScalar}) or, equivalently, (\ref{UnfldScalarSimple}) describes a massless spin-$\frac12$ field provided that $C^k$ take values in $\RectARow{5}{s}_\ferm$ \lorentz-\irreps, i.e., $C^k\equiv C^{\alpha\fsp a(k)}$ and ${\Gamma^\alpha_b}_\beta C^{\beta\fsp ba(k-1)}=0$. Contracting the first equation of (\ref{UnfldScalar})
\be\pl_\mu C^{\alpha}=h_{\mu a}C^{\alpha\fsp a}\ee
with $h^{b\mu}{\Gamma^\beta_b}_\alpha$ gives the Dirac equation  \be{\Gamma^\alpha_a}_\beta \pl^a C^\beta=0.\ee
the rest of equations express auxiliary fields in terms of derivatives of \dynamical\ field $C^{\alpha}$.

\textbf{The Rarita-Schwinger equation.} Unfolded system (\ref{UnfldMaxwellEquations}) or, equivalently, (\ref{UnfldMaxwellSimple}) describes a massless spin-$\frac32$ field provided that irreducible tensors in (\ref{UnfldMaxwellModule}) are replaced by the corresponding irreducible tensor-spinors, i.e., $A\fm{1}\equiv A^{\alpha}_\mu$, $C^k\equiv C^{\alpha\fsp [ab],c(k)}$ and ${\Gamma^\alpha_b}_\beta C^{\beta\fsp [ab],c(k)}=0$, ${\Gamma^\alpha_d}_\beta C^{\beta\fsp [ab],dc(k-1)}=0$. Contracting the first equation of (\ref{UnfldMaxwellEquations})
\be \frac12(\pl_\mu A^\alpha_\nu-\pl_\nu A^\alpha_\mu)=h_{\mu a}h_{\nu b}C^{\alpha\fsp [ab]}\ee
with $\Gamma$-matrices and converting world indices $\mu$, $\nu$ to the fiber ones according to
$A^{\alpha\fsp a}\equiv A^{\alpha}_\mu h^{a\mu}$ gives the Rarita-Schwinger equation
\be \label{UnfldRarita}{\Gamma^\alpha_b}_\beta\pl^b A^{\beta\fsp a}-\pl^a {\Gamma^\alpha_b}_\beta A^{\beta\fsp b}=0,\qquad \delta A^{\alpha\fsp a}=\pl^a\xi^{\alpha}.\ee
The rest of unfolded equations express auxiliary fields in terms of derivatives of \dynamical\ field $A^{\alpha}_\mu$.

\textbf{The Fang-Fronsdal equation for a spin-$(s+\frac12)$ massless field.} Unfolded system (\ref{UnfldHSFullsystem}) or, equivalently, (\ref{UnfldHSSimple}) describes a massless spin-$(s+\frac12)$ field provided that irreducible tensors in (\ref{UnfldHSModule}) are replaced by the corresponding irreducible tensor-spinors. With all world indices converted to the fiber ones the first equation of (\ref{UnfldHSFullsystem}) has the form
\be \label{UnfldFermionFirst}G^{\alpha\fsp a(s-1)|[cd]}\equiv\pl^c\omega^{\beta\fsp a(s-1)|d}-\pl^d\omega^{\beta\fsp a(s-1)|c}=\omega^{\beta\fsp a(s-1),c|d}-\omega^{\beta\fsp a(s-1),d|c},\ee
where $\omega^{\alpha\fsp a(s-1)|b}\equiv\omega^{\alpha\fsp a(s-1)}_\mu h^{b\mu}$, $\omega^{\alpha\fsp a(s-1),b|c}\equiv\omega^{\alpha\fsp a(s-1),b}_\mu h^{c\mu}$.

Analogously to the bosonic case, nonsymmetric component of $\omega^{\beta\fsp a(s-1)|d}$ can be gauged away algebraically. Hence, the Fronsdal's field  $\phi^{\alpha\fsp a(s)}$ is identified as  $\omega^{\alpha\fsp a(s-1)|a}=\frac1s\phi^{\alpha\fsp a(s)}$. Then, ${\Gamma^\alpha_b}_\beta\phi^{\beta\fsp a(s)}={\Gamma^\alpha_b}_\beta\omega^{\beta\fsp a(s-1)|b}$, ${\Gamma^\alpha_b}_\beta{\Gamma^\beta_c}_\gamma\phi^{\gamma\fsp a(s-2)bc}=\eta_{bc}\omega^{\alpha\fsp a(s-2)b|c}$ and the Fronsdal's triple $\Gamma$-traceless constraint is a consequence of irreducibility of $\omega^{\alpha\fsp a(s-1)|b}$ in $\alpha$, $a_1...a_{s-1}$ and the lack of any conditions with respect to $a_1...a_{s-1}$ and $b$. Since the Fronsdal's field contains three irreducible components with the symmetry of $\RectARow{5}{s}_\ferm$, $\RectARow{5}{s-1}_\ferm$ and $\RectARow{4}{s-2}_\ferm$, the equations of motion has to be nontrivial in these three sectors too. The projector on the \dynamical\ equations  is simply ${\Gamma^\alpha_b}_\beta G^{\beta\fsp a(s-1),b|a}$ and gives
\be\label{UnfldFermionicA}{\Gamma^\alpha_b}_\beta\pl^b\phi^{\beta\fsp a(s)}-s\pl^a{\Gamma^\alpha_b}_\beta\phi^{\beta\fsp a(s-1)b}=0,\ee which is in accordance with (\ref{FlatMSFronsdalFermionic}). But we would like to note that
there are two components with the symmetry of $\RectARow{5}{s-1}_\ferm$, namely, $\eta_{bc}G^{\alpha\fsp a(s-2)b|ac}$ and ${\Gamma^\alpha_b}_\beta{\Gamma^\beta_c}_\gamma G^{\gamma\fsp a(s-1)|[bc]}$. The correct projector on this component of the equations is given by the combination
\be 2(s-1)\eta_{bc}G^{\alpha\fsp a(s-2)b|ac}-{\gamma^\alpha_b}_\beta{\gamma^\beta_c}_\gamma G^{\gamma\fsp a(s-1)|[bc]}\label{UnfldFermionsCombination},\ee which is identically zero when $G^{\alpha\fsp a(s-1)|[bc]}$ is expressed via $\omega^{\beta\fsp a(s-1),c|d}$ by virtue of (\ref{UnfldFermionFirst}). Therefore, it is (\ref{UnfldFermionsCombination}) that does not express certain part of $\omega^{\beta\fsp a(s-1),c|d}$ in terms of first derivatives of the dynamical field $\phi^{\alpha\fsp a(s)}$ and, thus, this is a dynamical equation. Obviously, (\ref{UnfldFermionsCombination}) is a representative of $\sigma_-$ cohomology group $\coh^2_{g=0}$. In terms of $\phi^{\alpha\fsp a(s)}$ the representative has the form
\be\label{UnfldFermionicB} {\Gamma^\alpha_b}_\beta\pl^b{\Gamma^\beta_c}_\gamma\phi^{\gamma\fsp a(s-1)c}-\pl_c\phi^{\alpha\fsp a(s-1)c}+\frac{(s-1)}2\pl^a\phi^{\alpha\fsp a(s-2)c}_{\phantom{\alpha\fsp a(s-2)c}c}=0,\ee
which is equal to the $\Gamma$-trace of (\ref{UnfldFermionicA}).
The gauge transformations has the form
\be\delta\phi^{\alpha\fsp a(s)}=\pl^{a}\xi^{\alpha\fsp a(s-1)},\ee
where ${\Gamma^\alpha_b}_\beta\xi^{\beta\fsp a(s-2)b}=0$.

Consequently, unfolded equations for totally symmetric bosonic fields are proved to completely determine the unfolded equations for totally symmetric fermionic fields. Though $\sigma_-$ is not modified in the fermionic case, $\sigma_-$ cohomology groups are changed, since the fermionic \dynamical\ equations are of the first order. $\blacksquare$\end{Example}
\section{Unfolding Mixed-Symmetry Fields}\setcounter{equation}{0}\label{MSUnfld}
First, in Section \ref{MSSimplestExample} an example of the
simplest massless mixed-symmetry field of spin-$\smallpic{\YoungBA}$
is investigated in detail and leading arguments are given for a
spin-$\smallpic{\YoungCB}$ field. Second, the general statement on arbitrary
mixed-symmetry fields is proved, whereas technical details are
collected in Sections \ref{MSJacobi} and \ref{MSCohomology}.
The analysis of the number of physical degrees of freedom is carried out in Section \ref{MSPDOFCounting}.

\subsection{Simplest Mixed-Symmetry Fields.}
\label{MSSimplestExample}
\begin{Example}\textbf{Spin-\smallpic{\YoungBA} field.}
In the \metric\ approach a spin-\smallpic{\YoungBA} \field\ is
minimally described\cite{Curtright:1980yk, Aulakh:1986cb} by the
subjected to (\ref{FlatMSHookA}) field $\phi_{[\mu\mu],\nu}$ that
takes values in a reducible \lorentz-representation
$\smallpic{\YoungBA}\oplus\smallpic{\YoungA}$, the latter component
is identified with the trace
$\phi_{\mu\nu,}^{\phantom{\mu\nu,}\nu}$. The gauge transformations
(\ref{FlatMSHookB}) has two levels of reducibility, with two gauge
parameters $\xi^S_{(\mu\nu)}$, $\xi^A_{[\mu\nu]}$ taking values in
$\smallpic{\YoungB}\oplus\bullet$, \smallpic{\YoungAA}
\lorentz-\irreps\ at the first level and one parameter $\xi_\mu$
taking values in \smallpic{\YoungA} \lorentz-\irrep\ at the second
level (\ref{FlatMSHookC}).

First, the \metric\ field $\phi_{[\mu\mu],\nu}$ has to be
incorporated into certain differential form $e^\Yy\fm{q}$, which is
called a \physical\ vielbein, the \lorentz-\irrep\ $\Yy$ and the
degree $q$  to be defined below. To make the symmetries both of the
first and of the second level of reducibility manifest the degree of
the \physical\ vielbein has to be two, at least.

Moreover, were the \physical\ vielbein taken to be a degree-one form, not all of the gauge symmetries even at the first level
would be manifest. Indeed, there are two possibilities for
\physical\ vielbein $e^\Yy\fm{1}$ to contain a component with the
symmetry of \smallpic{\YoungBA}. Namely, $\Yy=\smallpic{\YoungB}$
and $\Yy=\smallpic{\YoungAA}$ since
$\smallpic{\YoungB}\otimes\smallpic{\YoungA}=\smallpic{\YoungBA}\oplus\smallpic{\YoungC}\oplus\smallpic{\YoungA}$
and
$\smallpic{\YoungAA}\otimes\smallpic{\YoungA}=\smallpic{\YoungBA}\oplus\smallpic{\YoungAAA}\oplus\smallpic{\YoungA}$.
The associated differential gauge parameter is
$\xi^{\smallpic{\YoungB}}\fm{0}$, in the first case, and
$\xi^\smallpic{\YoungAA}\fm{0}$, in the second case. Inasmuch as the
gauge parameters are the forms of degree-zero only one of the
required two parameters is present in each of the cases.

The degree of the \physical\ vielbein has to be not greater than two
if the component associated with the \metric\ field
$\phi_{\mu\mu,\nu}$ is required not to be a certain trace part. The
reason why a \dynamical\ field should not be a certain trace part is
explained in the next Section.

Therefore, the \physical\ vielbein has to be of degree-two and has
to take values in \lorentz-\irrep\ that belongs to the vector
representation $\smallpic{\YoungA}$, i.e.,
$e^\smallpic{\YoungA}\fm{2}\equiv e^a\fm{2}\equiv e^a_{\mu\nu}$. The
associated gauge parameter
$\xi^\smallpic{\YoungA}\fm{1}\equiv\xi^a\fm{1}\equiv \xi^a_{\mu}$
contains in its Lorentz decomposition
$\smallpic{\YoungA}\otimes\smallpic{\YoungA}=\smallpic{\YoungAA}\oplus\smallpic{\YoungB}\oplus\bullet$
both anti-symmetric $\xi^A_{\mu\nu}=\xi^{a}_{[\mu}
h^b_{\nu]}\eta_{ab}$ and symmetric $\xi^S_{\mu\nu}=\xi^{a}_{(\mu}
h^b_{\nu)}\eta_{ab}$ gauge parameters, the latter enters along with
the trace $\xi^{a}_\mu h^{\mu}_a$. There exists a second level gauge
parameter $\chi^a\fm{0}$, which is directly identified with
$\xi_\mu$ as $\xi_\mu=h^b_\mu\chi^a\eta_{ab}$. There are no
\redundant\ components in the gauge parameter. However, \physical\ vielbein $e^\smallpic{\YoungA}\fm{2}$ contains one
\redundant\ component in its decomposition into Lorentz \irreps\
$\smallpic{\YoungA}\otimes\smallpic{\YoungAA}=\smallpic{\YoungBA}\oplus\smallpic{\YoungA}\oplus\smallpic{\YoungAAA}$.
The first and the second components are to be associated with a traceful
$\phi_{\mu\mu,\nu}$, whereas the third one, totally anti-symmetric
component $e^{[a|bc]}$ of $e^{a|bc}=e^a_{\mu\nu}h^{b\mu}h^{c\nu}$,
is a \redundant\ one and has to be made non-dynamical. In order to do so,
algebraic gauge parameter $\xi^{[abc]}\fm{0}$ with the symmetry
of \smallpic{\YoungAAA} is introduced. It is obvious from pure
algebraic gauge law \be \delta e^a\fm{2}=h_b h_c\xi^{[abc]}\fm{0}\ee
that the \redundant\ component can be fixed to zero and $e^a\fm{2}$
can be directly identified with $\phi_{\mu\mu,\nu}$ as
$e^a_{\mu\mu}=\phi_{\mu\mu,\nu}h^{a\nu}$. From gauge
transformation laws $\delta e^a\fm{2}=d\xi^a\fm{1}$, $\delta
\xi^a\fm{1}=d\chi^a\fm{0}$, i.e., $\delta
e^a_{\mu\nu}=\frac12\left(\pl_\mu\xi^a_\nu-\pl_\nu\xi^a_\mu\right)$,
$\delta \xi^a_\mu=\pl_\mu\chi^a$, required gauge transformations
(\ref{FlatMSHookB}), (\ref{FlatMSHookC}) for \metric\ field
$\phi_{\mu\mu,\nu}$ and for the first order gauge parameters
$\xi^S_{(\mu\nu)}$, $\xi^A_{[\mu\nu]}$ are easily recovered.

Since in the unfolded approach each gauge parameter possesses its own
gauge field and vice-verse, associated with $\xi^{[abc]}\fm{0}$
gauge field $\omega^{[abc]}\fm{1}$ enters as \be de^a\fm{2}=h_b
h_c\omega^{[abc]}\fm{1},\ee which determines the first equation.
Applying $d$ to this equation results in Bianchi identity $h_b
h_cd\omega^{[abc]}\fm{1}\equiv0$, which has a unique solution of the
form \be d \omega^{[abc]}\fm{1}=h_dh_fC^{[abc],[df]}\fm{0},\ee with
$C^{[abc],[df]}\fm{0}$ having the symmetry \smallpic{\YoungBBA} of
the generalized Weyl tensor of a spin-\smallpic{\YoungBA} \field, i.e., it is
anti-symmetric in $[abc]$, $[df]$ and  $C^{[abc,d]f}\fm{0}\equiv0$.
To clarify the form of the solution let $C^{[abc]|[df]}\fm{0}$ be a
degree-zero form anti-symmetric in $[abc]$, $[df]$ and with no
definite symmetry between these two groups of indices. The Bianchi
identity implies $h_bh_ch_dh_fC^{[abc]|[df]}\fm{0}\equiv0$, which is
equivalent to $C^{a[bc|df]}\fm{0}\equiv0$ since vielbeins
$h^a$ anticommute. Therefore, the solution is
parameterized by those components of
$C^{[abc]|[df]}\fm{0}\sim\smallpic{\YoungAAA}\otimes\smallpic{\YoungAA}$
that has the symmetry of Young diagrams with no more than three rows,
since the requirement for total anti-symmetrization of any four
indices to give zero is the characteristic property of Young
diagrams with at most three rows.
$\smallpic{\YoungAAA}\otimes\smallpic{\YoungAA}\ni\smallpic{\YoungBBA}$
is the only component of zero trace order\footnote{Although, there
are trace components with the number of rows less than four, e.g.,
\smallpic{\YoungAAA}, their tensor product by the metric $\eta^{ab}$
contains components with the symmetry of the \sld-Young diagrams
with more than three rows. Nontrivial trace with the symmetry of \smallpic{\YoungBA} corresponds to the mass-like term. Hence, traceful tensors either do not satisfy the
Bianchi identities or introduce mass-like terms.} with three rows and there are no components
with the number of rows less than three. Alternatively, one could
search for the solution in the sector of degree-one forms as $d
\omega^{[abc]}\fm{1}=h_dC^{[abc]|d}\fm{1}$, with some
$C^{[abc]|d}\fm{1}$, but Bianchi identity
$h_bh_ch_dC^{[abc]|d}\fm{1}\equiv0$ implies that the only component of
$C^{[abc]|d}\fm{1}\sim\smallpic{\YoungAAA}\otimes\smallpic{\YoungA}$
that is allowed must have less than
three rows, which is impossible since $[abc]$ makes Young diagrams of
$\smallpic{\YoungAAA}\otimes\smallpic{\YoungA}$ consist of three
rows, at least\footnote{Again, the trace component \smallpic{\YoungAA} enters the tensor product as $\eta\otimes\smallpic{\YoungAA}$ and has the number of rows not less than three.}. Thus, one has to search for the solution among the forms
of less degree.

Further unfolding requires a set of fields
$C^{[abc],[df],g(k)}$ with the symmetry of
\smallpic{\RectAYoung{5}{$k$}{\YoungBBA}}  to be introduced and
the full system has the form
\begin{align}\label{MSHookFullSystem} &de^a\fm{2}=h_b h_c\omega^{[abc]}\fm{1}, & &\delta
e^a\fm{2}=d\xi^a\fm{1}+h_b h_c\xi^{[abc]}\fm{0}, & &\delta\xi^a\fm{1}=d\chi^a\fm{0},\nonumber\\
&d \omega^{[abc]}\fm{1}=h_dh_fC^{[abc],[df]}\fm{0}, &
&\delta\omega^{[abc]}\fm{1}=d\xi^{[abc]}\fm{0},& & \nonumber\\
&d C^{[abc],[df],g(k)}\fm{0}=h_vC^{[abc],[df],g(k)v}\fm{0}. & & &&
\end{align}
Consequently, the unfolded system incorporates field
$\phi_{\mu\mu,\nu}$ with all required differential gauge parameters
at all levels of reducibility; the \redundant\ component of the
\physical\ vielbein does not contribute to the dynamics.
$\omega^{[abc]}\fm{1}\sim\smallpic{\YoungAAA}\otimes\smallpic{\YoungA}=\smallpic{\YoungAAAA}\oplus\smallpic{\YoungBAA}\oplus\smallpic{\YoungAA}$,
the first component is a field strength for the \redundant\ field,
once the field has been gauged away, the associated field strength
is zero and
$\omega^{[abc]}_\rho=h^{a\mu}h^{b\nu}h^{c\lambda}T_{[\mu\nu\lambda],\rho}$
for some $T_{[\mu\nu\lambda],\rho}$,
$T_{[\mu\nu\lambda,\rho]}\equiv0$, which incorporates both
\smallpic{\YoungBAA} and \smallpic{\YoungAA} (as the trace). To
recover equations (\ref{FlatMSHookA}) in terms of \metric\
fields it is convenient to convert all fiber indices to the world
ones in the first two equations
\begin{align}
    &\pl_{[\mu} e^a_{\nu\lambda]}=h_{b[\mu}h_{c\nu}\omega^{abc}_{\lambda]},\\
    &\pl_{[\mu}\omega^{abc}_{\nu]}=h_{d[\mu}h_{f\nu]}C^{abc,df},
\end{align}
to substitute $T_{\mu\nu\lambda,\rho}$ and to contract two indices
in the second equation
\begin{align}
    &\label{MSHookA}\pl_{[\mu} \phi_{\nu\lambda],\rho}=T_{\mu\nu\lambda,\rho},\\
    &\label{MSHookB}\pl^{\rho}T_{\mu\nu\rho,\lambda}-\pl_\lambda
    T_{\mu\nu\rho,}^{\phantom{\mu\nu\rho,}\rho}=0.
\end{align}
Plugging (\ref{MSHookA}) in (\ref{MSHookB}) gives equation
(\ref{FlatMSHookA}).

For $\Yy=\Y{(2,1),(1,1)}$, i.e., $N=2$ and $p=2$, the general scheme
of Section \ref{Results} results in
\begin{align*}
    \Yy_g&: && \YoungA & & \YoungAAA & & \YoungBBA & & \YoungCBA & & ... \\
    \{n,k\}&: && \{2,0\} & & \{1,0\} & &\{0,0\} & & \{0,1\} & & ...\\
    g&:&& g=0 & &g=1 & &g=2 & &g=3 & & ...\\
    q_g&:&& q_0=2 & & q_1=1 & & q_2=0 & & q_3=0 & & ...,
\end{align*}
therefore, \be\WW\fm{q}=\begin{cases}
    \{W^a\fm{0}\}, & q=0,\\
    \{W^a\fm{1},W^{[abc]}\fm{0}\}, & q=1,\\
    \{W^a\fm{q},W^{[abc]}\fm{q-1},W^{[abc],[df]}\fm{q-2},W^{[abc],[df],g}\fm{q-2},...\}, & q>1,
\end{cases}\ee
and \be\sigma_-(W^g\fm{q})=\begin{cases}
    0, & g=0,\\
    h_b h_c W^{[abc]}\fm{q-1}, &g=1,\\
    h_bh_c W^{[aaa],[bc]}\fm{q-2}, & g=2,\\
    h_d W^{[aaa],[bb],c(g-3)d}\fm{q-2}, & g>2.
\end{cases}\ee
with $\DD=D_L-\sigma_-$ unfolded system
(\ref{MSHookFullSystem}) can be rewritten as \be
\DD\omega\fm{2}=0,\qquad \delta
\omega\fm{2}=\DD\xi\fm{1},\qquad \delta
\xi\fm{1}=\DD\xi\fm{0}\ee where
$\omega\fm{2}\in\WW\fm{2}$, $\xi\fm{1}\in\WW\fm{1}$ and
$\xi\fm{0}\in\WW\fm{0}$. The fields at $\{n=2,k=0\}\sim g=0$ and
$\{n=1,k=0\}\sim g=1$ form two finite-dimensional \iso-modules,
whereas the fields with $\{n=0,k=0,1,...\}\sim g\geq 2$ form an
infinite-dimensional \iso-module. $\blacksquare$\end{Example}
\begin{Example}\textbf{Spin-\smallpic{\YoungCB} field (briefly).}
In the case of a massless spin-\smallpic{\YoungCB} \field, there are
two gauge parameters at the first level of reducibility, which have
the symmetry\footnote{To simplify the example no consideration is given
to the trace components of the fields.} of \smallpic{\YoungBB},
\smallpic{\YoungCA}, and one gauge parameter at the second level
with the symmetry of \smallpic{\YoungBA}. First, for
$\Yy=\Y{(3,1),(2,1)}$ the proposed scheme results in
\begin{align*}
    \Yy_g&: && \YoungBA & & \YoungBAA & & \YoungBBB & & \YoungCCB & & \YoungDCB & &... \\
    \{n,k\}&: && \{2,0\} & & \{2,1\} & &\{1,0\} & & \{0,0\} & & \{0,1\} & & ...\\
    g&:&& g=0 & &g=1 & &g=2 & &g=3 & &g=4 & & ...\\
    q_g&:&& q_0=2 & & q_1=2 & & q_2=1 & & q_3=0 & & q_4=0 & & ...
\end{align*}
therefore, the \physical\ vielbein has to be a two-form taking
values in
$\smallpic{\YoungBA}$, i.e.,
$e^{\smallpic{\YoungBA}}\fm{2}\equiv e^{aa,b}\fm{2}$. Associated
gauge parameter $\xi^{\smallpic{\YoungBA}}\fm{1}\equiv
\xi^{aa,b}\fm{1}$ has components
$\smallpic{\YoungBA}\otimes\smallpic{\YoungA}=\smallpic{\YoungCA}\oplus\smallpic{\YoungBB}\oplus\smallpic{\YoungBAA}$,
the first two having the symmetry of the required gauge parameters,
with the third one being \redundant. The last three components in
the decomposition of the \physical\ vielbein
$\smallpic{\YoungBA}\otimes\smallpic{\YoungAA}=\smallpic{\YoungCB}\oplus\smallpic{\YoungCAA}\oplus\smallpic{\YoungBBA}\oplus\smallpic{\YoungBAAA}$
are also \redundant. By virtue of \Stueckelberg\ symmetry with
parameter $\xi^{\smallpic{\YoungBAA}}\fm{1}$ these three components
can be gauged away, inasmuch as
$\smallpic{\YoungBAA}\otimes\smallpic{\YoungA}=\smallpic{\YoungCAA}\oplus\smallpic{\YoungBBA}\oplus\smallpic{\YoungBAAA}$.
The \redundant\ component \smallpic{\YoungBAA} of the first level
gauge parameter $\xi^{\smallpic{\YoungBA}}\fm{1}$ also can be gauged
away by virtue of \Stueckelberg\ symmetry with level-two gauge
parameter $\xi^{\smallpic{\YoungBAA}}\fm{0}$. Consequently, at least
the fact that the unfolded system incorporates the \metric\ field,
all the required gauge parameters, and \redundant\ components do not
contribute to the dynamics is proved.$\blacksquare$
\end{Example}

\subsection{General Mixed-Symmetry Fields.}
\textit{Given a massless field of spin
$\Yy=\Y{(s_1,p_1),...,(s_N,p_N)}$, described by a \metric\ field
$\phi_{\Yy}(x)$, taking values in the \irrep\ of the Lorentz algebra
\lorentz\ with the same symmetry type(\minimalformulation) there
exists a unique unfolded system with the
\physical\ vielbein at the \lowestgrade\ and all symmetries being
manifest that reproduces the original \metric\ system. The system
has the form (\ref{ResultsFullSystem}).}

\begin{Sketch}\
\par
\begin{enumerate}
    \item The lowest grade field (\physical\ vielbein) $e^{\Yy_0}\fm{q_0}$ turns out to be completely
    determined by the requirement for it to contain the \metric\
    field and for its gauge parameters $\xi^{\Yy_0}\fm{q_0-k}$, $k>0$ to contain all the necessary gauge parameters at all levels of reducibility (\ref{FlatMSGaugeParameters}).
    \item $e^{\Yy_0}\fm{q_0}$ and its gauge parameters appear to contain \redundant\ components that have to be made
    non-dynamical. The only way to achieve this is to introduce
    \Stueckelberg\ symmetry $\delta e^{\Yy_0}\fm{q_0}=\sigma_-^1(\xi^{\Yy_1}\fm{q_1-1})$, $\delta \xi^{\Yy_0}\fm{q_0-1}=\sigma_-^1(\xi^{\Yy_1}\fm{q_1-2})$, ... with certain $\xi^{\Yy_1}\fm{q_1-k}$, $k>0$,
    which turns out to be unambiguously defined by this requirement.
    \item $\xi^{\Yy_1}\fm{q_1-1}$ has associated gauge field $\omega^{\Yy_1}\fm{q_1}$ and gauge transformations
    $\delta \omega^{\Yy_0}\fm{q_0}=\sigma_-^1(\xi^{\Yy_1}\fm{q_1-1})$
    completely determines the first equation $d e^{\Yy_0}\fm{q_0}=\sigma_-^1(\omega^{\Yy_1}\fm{q_1})$, which implies
    Jacobi identity $\sigma_-^1(d\omega^{\Yy_1}\fm{q_1})\equiv0$. The general solution is proved in Section \ref{MSJacobi} to have
    the form $d
    \omega^{\Yy_1}\fm{q_1}=\sigma_-^2(\omega^{\Yy_2}\fm{q_2})$ with certain
    $\omega^{\Yy_2}\fm{q_2}$. The second equation implies the second Jacobi identity
    $\sigma_-^2(d\omega^{\Yy_2}\fm{q_2})\equiv0$ and so on.
    \item Once, the total unfolded system is
    defined, the proof of the facts that (i) the correct \dynamical\ equations are
    imposed; (ii) there are no other \dynamical\ fields or other differential gauge symmetries in the system; is obtained through the calculations of $\sigma_-$ cohomology
    groups. Ignoring the trace pattern of
    fields, i.e., for traceful tensors, the proof of (ii) can be simplified and is done in this
    section. Even stronger statement that there exists a duality $\coh^{p+1+k}\sim\coh^{p-k}$ is
    proved in Section \ref{MSCohomology}.
\end{enumerate}
\end{Sketch}
First, as it is clear from the examples above, the \metric\ field
$\phi_{\Yy_M}(x)$ has to be incorporated into a differential form
$e^{\Yy_0}\fm{q}$ of degree-$q$ taking values in certain
\lorentz-\irrep\ $\Yy_0$  (for this reason $e^{\Yy_0}\fm{q}$ is
called \physical\ vielbein). Having converted all world indices to fiber ones in accordance with (\ref{UnfldConvertToFiber}), \lorentz-tensor
product $\Yy_0\otimes\Y{(1,q)}$ of $\Yy_0$ with one column diagram
of the height $q$ must contain the component with the
symmetry of $\Yy_M$. In the case of the minimal formulation
$\Yy_M=\Yy$, where $\Yy=\Y{(s_1,p_1),...,(s_N,p_N)}$ characterizes
the spin.

It is useful to introduce a notion of the quotient of two Young
diagrams: let the quotient of two Young diagrams $\Yy_1/\Yy_2$ be a
direct sum of those diagrams, whose \lorentz-tensor product by
$\Yy_2$ contains $\Yy_1$. For the first sight, given $q\geq0$, any
element $Q$ of $\Yy_M/\Y{(1,q)}$ might be chosen as $\Yy_0$. If $q$
is greater than the height of $\Yy_M$ the component with the
symmetry $\Yy_M$ in the tensor product $Q\otimes\Y{(1,q)}$ has to be
certain trace, inevitably.

Although, there is no apparent problems in considering unfolded
systems with dynamical field hidden in certain trace component of
the \physical\ vielbein, e.g., Maxwell gauge potential $A_\mu$ might
be identified with the trace $e^a_{\mu\nu}h_a^{\mu}$ of two-form
$e^a_{\mu\nu}$. However, this exotic incorporating of the
\physical\ field does not take place in the already known systems.
Moreover, it can be proved that such exotic systems do not exist, if
irreducible. Nevertheless, they might be a part of some reducible
system, which describes more than one \field. Therefore, we
require the degree $q$ of the \physical\ vielbein to be not greater
than the height of $\Yy_M$.

Inasmuch as (i) the equations imposed on $\phi_{\Yy}(x)$ possess
reducible gauge transformations with the number of levels equal to
the height $p=\sum_{i=1}^{i=N}p_i$ of $\Yy$; (ii) for a degree-$q$
gauge field of an unfolded system there exist $q$ levels of gauge
transformations; the degree $q$ of \physical\ vielbein
$e^{\Yy_0}\fm{q}$ must be equal to $p$.

The quotient $\Yy/\Y{(1,p)}$ contains only one element (provided
$\phi_{\Yy}(x)$ is forbidden to be a trace component), it is the
diagram (\ref{ResultsPhysicalVielbein}) with the symmetry of
$\Yy_0=\Y{(s_1-1,p_1),...,(s_N-1,p_N)}$, i.e., it is equal to $\Yy$
without the first column. Therefore,  \physical\
vielbein $e^{\Yy_0}\fm{p}$ is completely defined. So does gauge parameters $\xi^{\Yy_0}\fm{p-k}$ at the $k$-th
level of reducibility, $k\in[1,p]$. The requirement for tensor
product $\Yy_0\otimes\Y{(1,p-k)}$ to contain all gauge parameters
$\xi^{i_k}$ given by (\ref{FlatMSGaugeParameters}) at the $k$-th level of reducibility is also satisfied. Consequently, the
whole pattern of fields/gauge parameters of the \metric\ formulation
is reproduced. Note that $\Yy_0\equiv\Yy_0$ and $q_0=p$ in the terms
of Section \ref{Results}.

Let us note that if one writes down the \physical \ vielbein
explicitly as $e^{a_1(s_1-1),...,a_p(s_p-1)}_{\mu_1...\mu_p}$, where
$s_i$ are the lengths of the rows of $\Yy$ and converts the form
indices to the tangent ones by virtue of inverse background
vielbein $h^{\mu a}$ \be\label{MSFrametoMetric}
e^{a_1(s_1-1),...,a_p(s_p-1)|[d_1...d_p]}=e^{a_1(s_1-1),...,a_p(s_p-1)}_{\mu_1...\mu_p}h^{\mu_1d_1}...h^{\mu_pd_p},\ee
\dynamical\ field $\phi_{\Yy}(x)$ should be identified with
$e^{a_1(s_1-1),...,a_p(s_p-1)|a_1...a_p}$, which has the symmetry of
$\Yy$ with some traces included. Consequently, the generalized
Labastida's double-tracelessness condition
(\ref{FlatMSLabastidaDoubleTracelessness}) \cite{Labastida:1986ft}
is obvious, inasmuch as the contraction of two metric tensor
$\eta^{a_ia_i}\eta^{a_ia_i}$ with the $i$-th group of indices
vanishes.

Generally, in addition to $\phi_{\Yy}(x)$,  decomposition
$\Yy_0\otimes\Y{(1,p)}$ of \physical\ vielbein $e^{\Yy_0}\fm{p}$
into Lorentz \irreps\ contains \redundant\ components, which must
not contribute to the physical degrees of freedom. So does
$\Yy_0\otimes\Y{(1,p-k)}$, i.e., in addition to $\xi^{i_k}$ it
contains a lot of components that can not be made genuine
differential gauge parameters. There are only two possible ways to
get rid of \redundant\ fields in the unfolded formalism:
\begin{enumerate}
  \item[A.] \redundant\ components could be directly fixed to zero by algebraic (\Stueckelberg) symmetry.
  \item[B.] \redundant\ components might be auxiliary fields for
        other fields (at lower grade) and so on.  The process stops since the grade
        is assumed to be bounded from below.
\end{enumerate}
We require \dynamical\ fields incorporated in the \physical\
vielbein to be at the lowest grade. Actually, (B) takes place for
\dual\ gauge descriptions, but not for the minimal. Therefore, the
only possibility to make \redundant\ components to be non-dynamical is to
introduce a \Stueckelberg\ symmetry with certain parameters
$\xi^{\Yy_1}\fm{q_1-k}$
\begin{align}
\delta e^{\Yy_0}\fm{p}&=d\xi^{\Yy_0}\fm{p-1}+\sigma_-^1(\xi^{\Yy_1}\fm{q_1-1}),\\
\delta
\xi^{\Yy_0}\fm{p-1}&=d\xi^{\Yy_0}\fm{p-2}+\sigma_-^1(\xi^{\Yy_1}\fm{q_1-2}),\\
...&...,
\end{align}
where $\sigma_-^1$ is a certain operator built of background
vielbein $h^a$ that contracts a number of indices of $\Yy_1$ with $h^{a}...h^{c}$ to
obtain $\Yy_0$. Diagram\footnote{The specific character of Minkowski massless fields is in that all \Stueckelberg\ parameters can be incorporated into a single $\Yy_1$-valued form, this not being the case for (anti)-de Sitter massless fields. } $\Yy_1$ and form degree $q_1$ can
be easily found since $\Yy_1\otimes\Y{1,q_1-i}$, $i>0$ must
contain all \redundant\ components of $e^{\Yy_0}\fm{p}$ and
$\xi^{\Yy_0}\fm{p-k}$.

In the unfolded formalism each gauge field possesses its own gauge
parameter and vice-versa. Therefore, there is a gauge field
$\omega^{\Yy_1}\fm{q_1}$, which contributes to the equations for
$e^{\Yy_0}\fm{p}$ as \be d e^{\Yy_0}\fm{p}
=\sigma_-^1(\omega^{\Yy_1}\fm{q_1}).\ee The first equation
determines the first Jacobi identity of the form
$\sigma_-^1(d\omega^{\Yy_1}\fm{q_1})\equiv0$, which can be solved as
\be d\omega^{\Yy_1}\fm{q_1}=\sigma_-^2(\omega^{\Yy_2}\fm{q_2})\ee
for certain gauge field $\omega^{\Yy_2}\fm{q_2}$ and operator
$\sigma_-^2$ that satisfies $\sigma^1_-\sigma_-^2\equiv0$. The
general solution of Jacobi identities is given in Section
\ref{MSJacobi}. From now on let the superscripts of $\sigma_-$ be
omitted. The second equation determines the second Jacobi identity
$\sigma_-(d\omega^{\Yy_2}\fm{q_2})\equiv0$, which can be also solved
and so on. Consequently, the knowledge of the lowest grade gauge
field $e^{\Yy_0}\fm{p}$ and of the \Stueckelberg\ symmetries
required to get rid of \redundant\ components determines the first
equation, which by virtue of Jacobi identities determines the second
and so on. The total unfolded system has the structure
\be\begin{split}
    d\omega^{\Yy_g}_{q_g}&=\sigma_-\left(\omega^{\Yy_{g+1}}_{q_{g+1}}\right),\quad g=0,1,...,\\
    \delta\omega^{\Yy_g}_{q_g}&=d\xi^{\Yy_g}_{q_g}+\sigma_-\left(\xi^{\Yy_{g+1}}_{q_{g+1}-1}\right),\\
    \delta\xi^{\Yy_g}_{q_g}&=....,
\end{split}\ee where $\sigma_-$ is certain
algebraic operator built of background vielbein $h^a$ and
$(\sigma_-)^2=0$.

The only facts remain to be proved are that there are no other
\dynamical\ fields and that the proper correspondence (duality)
between the cohomology groups holds true. To this end it is
sufficient to analyze only the \lorentz-\irreps\ content of
$\coh^q_g$, i.e., only the symmetry type and the multiplicity of
irreducible Lorentz tensors that by virtue of
(\ref{UnfldTraceStructure}) are the representatives of $\coh^q_g$.

It is proved in the next Section that there is a duality
$\coh^{p-k}_{g=0}\sim\coh^{p+k+1}_{g=1}$ between the $\sigma_-$
cohomology groups, which reveals the one-to-one correspondence
between the gauge \dynamical\ fields and equations of motion, the
level-$k$ gauge symmetries and the level-$k$ Bianchi identities.
The representatives of cohomology groups in the sector of fields and gauge symmetries
directly correspond to those of \metric\ approach and there is no
other nontrivial cohomology in these sectors.

Though, the trace pattern of fields/gauge parameters is also
important and the calculation of $\sigma_-$ cohomology groups
provides a comprehensive answer to this question, the very procedure
being a bit technical. Ignoring the trace pattern of the fields, it
can be easily proved that the required \metric\ field is
incorporated in \physical\ vielbein $e^{\Yy_0}_{q_0}$, all
the \redundant\ components of the \physical\ vielbein can be gauged
away by a pure algebraic symmetry and the same holds for
differential gauge parameters $\xi^{\Yy_0}_{q_0-k}$ at all levels of
reducibility, i.e., the required pattern of gauge parameters is
recovered, whereas all the \redundant\ components are either
\Stueckelberg\ or \auxiliary. This is the necessary condition only
and the traces has to be taken into account. Also, it is still to be
proved that the correct equations of motions are imposed. Provided
that there are no trace conditions on the fields, the system is
referred to as \offshell \cite{Vasiliev:2005zu}. An \offshell\
system may contain only constraints that express \auxiliary\ fields
via the derivatives of the \dynamical\ field, imposing no restrictions on the latter.

\textbf{The off-shell system.} Technically, ignoring the traces is
equivalent to the replacement of all \lorentz-\irreps\ by the
\sld-\irreps\ that are characterized by the same Young diagrams,
i.e., instead of taking \lorentz-tensor products one needs to apply
the \sld-tensor product's rules, which are much simpler. The
decomposition of the \physical\ vielbein $e^{\Yy_0}\fm{q_0}$ into
\sld-\irreps\ is given by diagrams $\Yy_{\{\alpha_i\}}$ of the
form \be\PhysicalVielbein,\label{UnfldDecompositionSL}\ee where
$\alpha_1+...+\alpha_{N+1}=q_0=p$. The decomposition of
\Stueckelberg\ gauge parameter $\xi^{\Yy_1}\fm{q_1-1}$ for
$e^{\Yy_0}\fm{q_0}$  has the components of the same form but with
$\alpha_{N+1}\geq1$ because $\Yy_1$ (\ref{ResultsFirstAuxiliary})
has already the form of $\Yy_0$ with one cell in the bottom-left.
Therefore, all components of $e^{\Yy_0}\fm{q_0}$ except for those
with $\alpha_{N+1}=0$ are of \Stueckelberg\ type, but there is only
one component of $e^{\Yy_0}\fm{q_0}$ with $\alpha_{N+1}=0$,
namely, it has $\alpha_i=p_i$, $i\in[1,N]$ and, hence, has the
symmetry of \Yy. The decomposition of level-$k$ differential
gauge parameter $\xi^{\Yy_0}\fm{q_0-k}$ has the form
(\ref{UnfldDecompositionSL}) with $\alpha_1+...+\alpha_{N+1}=p-k$
and, again, all the components of $\xi^{\Yy_0}\fm{q_0-k}$ with
$\alpha_{N+1}\geq1$ can be gauged away by pure algebraic
symmetry with $\xi^{\Yy_1}\fm{q_1-k-1}$. Consequently, there is a
complete matching between (\ref{FlatMSGaugeParameters}) and those in
$\xi^{\Yy_0}\fm{q_0-k}$ that are not pure gauge themselves. It can be
easily proved also that, if ignoring the traces, there are no other
\dynamical\ fields/differential gauge parameters at higher grade.
Consequently,
\begin{Lemma}
$\coh(\sigma_-)$ with respect to \sld\ are given by
\be\coh^k(\sigma_-)_{\sld}=\begin{cases}
            \Yy_{\{\alpha_i\}} : g=0,\quad \alpha_1+...+\alpha_{N}=k,\quad \alpha_{N+1}=0, & k\leq p \\
            \emptyset, & k>p.
        \end{cases}\ee
\end{Lemma}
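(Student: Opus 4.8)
The plan is to strip away the trace structure from the outset, which is exactly what working over \sld\ means, and thereby turn $\sigma_-$ into a purely combinatorial operation on Young diagrams. First I would convert every form index to a fibre index with the background vielbein, as in (\ref{UnfldConvertToFiber}); since the vielbein then becomes a mere relabelling of a vector index as one extra antisymmetrised box, $\sigma_-$ is realised as an \sld-equivariant map $\WW\fm{q}\to\WW\fm{q+1}$. Consequently the whole complex $(\WW,\sigma_-)$ splits as a direct sum over the \sld-irreducible types $\mu$ that occur, and it is enough to compute the cohomology of each isotypic subcomplex. The fibre content of the grade-$g$ forms inside $\WW\fm{q}$ is $\Yy_g\otimes\Y{(1,q_g+q-p)}$, so by the Pieri rule $\mu$ occurs with multiplicity at most one, and it occurs if and only if $\mu/\Yy_g$ is a vertical strip of size $|\mu|-|\Yy_g|=q_g+q-p$.

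The key bookkeeping step is a grade-shift identity. Writing $q^{*}(g)=p-q_g+|\mu|-|\Yy_g|$ for the unique form level at which grade $g$ can carry $\mu$, and using $q_g-q_{g+1}=\Delta_g$ together with the structural fact $|\Yy_{g+1}|-|\Yy_g|=\Delta_g+1$ (each step of the tower moves a vertical strip of size $\Delta_g+1$ from the fibre into the antisymmetric form column), one finds $q^{*}(g+1)=q^{*}(g)-1$. Hence $q^{*}$ is strictly monotone in $g$: at each form level exactly one grade contributes $\mu$, and $\sigma_-$ carries the grade-$g$ line at level $q$ to the grade-$(g-1)$ line at level $q+1$. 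The $\mu$-isotypic subcomplex is therefore a finite string of one-dimensional spaces, one for each occurring grade, with $\sigma_-$ joining consecutive grades; such a string contributes to cohomology only at its isolated (length-one) terms, provided every interior differential is nonzero and the occurring grades form no gaps.

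The crux, and the main obstacle, is precisely these two facts about a single string: that each connecting map is nonzero, and that isolated occurrences sit only at grade zero. For the nonvanishing I would exhibit the explicit Pieri contraction coefficient on a highest-weight tensor, or equivalently introduce the formal grade-$(+1)$ adjoint $\sigma_+$ and check that the Laplacian $\sigma_-\sigma_+ + \sigma_+\sigma_-$ acts by a strictly positive scalar on every occurring line except a distinguished lowest one, so that a contracting homotopy $\sigma_- h + h\sigma_- = \mathrm{id}$ exists off the harmonic subspace and all interior maps are isomorphisms. For the location of isolated occurrences I would use that each $\Yy_{g+1}/\Yy_g$ is a vertical strip in a controlled position: this forces the set of occurring grades $S_\mu$ to be an interval, whose minimum is $g=0$ exactly when $\mu$ does not contain the single extra bottom-left box that distinguishes $\Yy_1$ from $\Yy_0$, i.e.\ exactly when $\alpha_{N+1}=0$. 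Any component with $\alpha_{N+1}\geq1$ then occurs at both $g=0$ and $g=1$ and lies in a string of length at least two, hence is exact; any component supported only on positive grades likewise lies in a string of length at least two and is exact.

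Putting these together, a type $\mu$ survives in cohomology if and only if its string has length one, which happens precisely for $\mu=\Yy_{\{\alpha_i\}}$ occurring at $g=0$ with $\alpha_{N+1}=0$. For such $\mu$ the form degree is $k=|\mu|-|\Yy_0|=\alpha_1+\dots+\alpha_N$, and the vertical-strip constraint on $\mu/\Yy_0$ bounds $\alpha_i\leq p_i$, so $k\leq p$; for $k>p$ no such component exists and $\coh^{k}(\sigma_-)_{\sld}=\emptyset$. This is exactly the asserted list, giving the dynamical \sld-field at $k=p$ and the genuine differential gauge parameters at $k<p$, all at grade zero, with no surviving classes in degrees above $p$.
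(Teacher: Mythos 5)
Your overall strategy --- convert form indices to fibre indices, split the complex into $\sld$-isotypic strings using the multiplicity-one property of the Pieri rule, and identify the exact components at $g=0$ as those with $\alpha_{N+1}\geq1$ via the Stueckelberg cell distinguishing $\Yy_1$ from $\Yy_0$ --- is the same as the paper's: it is the traceless specialization of the isotypic analysis of Section \ref{MSCohomology}, and your $\alpha_{N+1}$ criterion reproduces the paper's matching of the vielbein components against the components of $\xi^{\Yy_1}$. The grade-shift identity $q^{*}(g+1)=q^{*}(g)-1$ is also correct, being a restatement of the fact that $\sigma_-$ preserves the total rank.

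The genuine gap is the step ``lies in a string of length at least two, hence is exact.'' For a string of \emph{one-dimensional} spaces this inference is valid only when the string has length exactly two and the connecting map is nonzero: a string $V_{g+2}\rightarrow V_{g+1}\rightarrow V_{g}$ of three or more one-dimensional spaces can never be acyclic, since $(\sigma_-)^2=0$ forces at least one of two consecutive maps between lines to vanish. Your argument therefore silently requires (i) that no $\sld$-type $\mu$ occurs at three consecutive grades, and (ii) that no $\mu$ occurs at a single isolated positive grade, which would otherwise produce a spurious cohomology class outside the asserted list. Both facts are true, but they need the structural input that any two consecutive steps $\Yy_g\subset\Yy_{g+1}\subset\Yy_{g+2}$ of the tower add two cells to a common row (so that $\mu\supseteq\Yy_{g+2}$ is incompatible with $\mu/\Yy_g$ being a vertical strip), and that a cell of $\mu/\Yy_g$ lying in a row touched by the step $g-1\rightarrow g$ forces $\mu\supseteq\Yy_{g+1}$; neither is verified in your proposal, and the second is delicate precisely at the block transitions where $\Delta_g>0$ and a whole column is inserted. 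The paper supplies exactly this information by counting, grade by grade, the multiplicity of each type (the Lemmas of Section \ref{MSCohomology}): in the traceless sector the higher entries $V_2$, $V_3$ of those complexes vanish, which is precisely the statement that all strings have length at most two. The nonvanishing of the surviving connecting maps, which you correctly single out as the remaining crux, is only asserted in the paper as maximal non-degeneracy compatible with nilpotency, so your proposed homotopy/Laplacian argument would be a legitimate (indeed more explicit) route --- but as written it is announced rather than carried out.
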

The triviality of the higher $k>p$ cohomology groups for \sld, which
should contain equations of motion and Bianchi identities, is expected since the
system is \offshell. For the equations to have the second order in
derivatives the only nontrivial cohomology group $\coh^{p+1}$ must
be at the grade-one, i.e., $\coh^{p+1}_{g=1}\neq\emptyset$. As is
seen from the examples above, the equations correspond to those
representatives of $\coh^{p+1}_{g=1}$ that are certain traces.
Indeed, the total rank $|W^{\Yy_g}\fm{q_g}|$, which is equal to the
sum $q_g+|\Yy_g|$ of degree $q_g$ and rank of $\Yy_g$, is preserved by $\sigma_-$.
Therefore, $|W^{\Yy_g}\fm{q_g}|=|W^{\Yy_0}\fm{q_0}|+g$ and
$|R^{\Yy_1}\fm{q_1+1}|=|\omega^{\Yy_0}\fm{q_0}|+2=|\Yy|+2$, where
$e^{\Yy_0}\fm{q_0}\in\WW^{g=0}\fm{p}$ and $R^{\Yy_1}\fm{q_1+1}\in\WW^{g=1}\fm{p+1}$ contain \metric\
field $\phi_{\Yy_M}(x)$ and the equations on $\phi_{\Yy}(x)$, respectively. Consequently, the representative of
$\coh^{p+1}_{g=1}$ that corresponds to the equations on the
traceless part of $\phi_{\Yy}(x)$ has to be identified with
certain trace of the first order and so on for the traces of
$\phi_{\Yy}(x)$.

The calculation of $\sigma_-$ cohomology groups carried out in Section \ref{MSCohomology} implies
\begin{enumerate}
    \item The only nontrivial cohomology groups are $\coh^{p-k}_{g=0}$ and $\coh^{p+k+1}_{g=1}$, $k=1,...,p$. Therefore,
    the \dynamical\ fields and independent gauge parameters belong to the zero grade $\WW\fm{q}^{g=0}$ subspaces;
    the equations are of the second order and the Bianchi identities for the $k$-th level gauge symmetries are of the $(k+2)$-th order.
    \item $\coh^{p-k}_{r,g=0}\sim\coh^{p+k+1}_{r+k+1,g=1}$, i.e., there is a one-to-one correspondence between
    the elements of $\coh^{p-k}_{g=0}$ that are traces of the $r$-th order and the elements of $\coh^{p+k+1}_{g=1}$ that are traces of the $(r+k+1)$-th order. Roughly
        speaking, $\coh^p\sim\coh^{p+1}$, $\coh^{p-1}\sim\coh^{p+2}$ and so on. Therefore, there is a
        one-to-one correspondence between the \dynamical\ fields and the equations of motion, the
        level-$k$ gauge symmetries and the $k$-th Bianchi identities.
    \item The \lorentz-\irreps\ that correspond to the elements of $\coh^{p-k}(\sigma_-)$ and
    are the traces of the zeroth order are given by the \lorentz-Young diagrams that have the form (\ref{UnfldDecompositionSL})
     with $\alpha_{N+1}=0$ and $\alpha_1+...+\alpha_N=p-k$, which exactly reproduce the required pattern (\ref{FlatMSGaugeParameters}).
     Note that certain higher order traces are also the elements of cohomology groups.
     These fields represent the 'auxiliary' fields of the \metric\ formulation,
     which had to be introduced to make the gauge symmetry \offshell.
     To prove the theorem it is not necessary to know the concrete trace pattern,
     the duality between the cohomology groups is sufficient. The details of the trace pattern are in Section \ref{MSCohomology}.
\end{enumerate}

In the fermionic case the traces are substituted for $\Gamma$-traces. Important is that acting on tensor indices only, $\sigma_-$ does not break down the irreducibility of spin-tensors. The duality has the form $\coh^{p-k}_{r,g=0}\sim\coh^{p+k+1}_{r+k+1,g=0}$ or simply $\coh^{p-k}_{g=0}\sim\coh^{p+k+1}_{g=0}$, which means that equations are of the first order and the duality between fields/equations, gauge symmetries/Bianchi identities takes place, which completes the proof.

Note, that $\Yy_{g=s_1}\equiv\Yy_{\{n=0,k=0\}}$ has the symmetry of
the generalized Weyl tensor for a spin-$\Yy$ massless
\field \be\FieldContentWeylTensor.\ee
Analogously to the examples of Section \ref{UnfldExamples}, massless mixed-symmetry fields can be
described by the same unfolded system (\ref{ResultsSystem}) that is restricted to $\WW^g\fm{q}$ with $g\geq g_0$, i.e., the system contains infinitely many \dual\ formulations. As the generalized Weyl tensor and its descendants are non-gauge fields, the \dual\ descriptions
with $g_0\geq s_1$ are non-gauge and the \dual\ descriptions
with $0<g_0<s_1$ are gauge.

\subsection{Physical Degrees of Freedom Counting}\label{MSPDOFCounting}
Notwithstanding the simplicity of the unfolded form and the uniqueness of unfolding, there still might be a question of whether the unfolded equations do describe the correct number of physical degrees of freedom.

It is well-known that for systems with the first class constraints only, with massless
fields belonging to this class, the counting of degrees of freedom
is that one first level gauge parameter kills two degrees of freedom, one second level gauge parameters kills three degrees of freedom, and so
on. For example, a spin-two massless field possesses
$\frac{d(d-3)}2$ degrees of freedom, which is just
$\frac{d(d+1)}2-2d$, $\frac{d(d+1)}2$ and $d$ being the number of
components of $\phi_{\mu\nu}$ and $\xi_\mu$.

The complete
information concerning the 'number' of fields/gauge parameters,
i.e., the multiplicity and the symmetry of corresponding tensors, is
contained in $\coh^{k}_{g=0}(\sigma_-)$ for $k=0...p$. Not only can a number of physical
degrees of freedom be calculated but the whole exact sequence
that defines an \iso\ \irrep\ can be derived. The elements of this sequence are certain \msv\ tensors that define an \mls\ tensor as a quotient of \msv\ tensors, e.g., (\ref{FlatMSHookSequence}). The decomposition of \lorentz-fields into irreducible tensors of \msv\ can be done with the aid of $\pl_\mu$ or, after Fourier transform, with the aid of momentum $p_\mu$.

For example, for a spin-two field the cohomology groups that correspond to the \dynamical\ fields/differential gauge parameters and its decomposition into \msv\ \irreps\ have the form\\
\begin{tabular}{|c|c|c|}\hline
& \lorentz-representatives & reduction from \lorentz\ to \msv \\
\hline
$\coh^{0}$ & \phantom{\YoungAA} $\smallpic{\YoungA}\sim\xi_\mu$ & $\smallpic{\YoungA}\oplus\bullet$ \\
\hline $\coh^{1}$ &
\phantom{\YoungAA}$\smallpic{\YoungB}\oplus\bullet\sim\phi_{\mu\nu}$,
$\tr{\phi}{}{\nu}\neq0$  & $\smallpic{\YoungB}\oplus\smallpic{\YoungA}\oplus2\bullet$ \\
\hline
\end{tabular}\\
Quick sort of Young diagrams in $\ComplexC{2\coh^0}{\coh^1}{\Irrep{0}{\smallpic{\YoungB}}}$ gives the correct exact sequence
$\ComplexC{\smallpic{\YoungA}}{\smallpic{\YoungB}}{\Irrep{0}{\smallpic{\YoungB}}}$, which defines a massless spin-two \irrep\ $\Irrep{0}{\smallpic{\YoungB}}$.

For the simplest mixed-symmetry field, the hook-\smallpic{\YoungBA}, there are two levels of gauge transformation, hence, relevant cohomology groups are $\coh^{0}$, $\coh^{1}$ and $\coh^{2}$\\
\begin{tabular}{|c|c|c|}\hline
& \lorentz-representatives & reduction from \lorentz\ to \msv \\
\hline
$\coh^{0}$ & \phantom{\YoungAA} $\smallpic{\YoungA}\sim\xi_\mu$ & $\smallpic{\YoungA}\oplus\bullet$ \\
\hline $\coh^{1}$ &
\phantom{\YoungAA}$\smallpic{\YoungAA}\oplus\smallpic{\YoungB}\oplus\bullet\sim\xi^A_{\mu\nu}\oplus\xi^S_{\mu\nu}$,
$\xi_{\nu}^{S\nu}\neq0$  & $\smallpic{\YoungAA}\oplus\smallpic{\YoungB}\oplus2\smallpic{\YoungA}\oplus2\bullet$ \\
\hline
$\coh^{2}$ & \phantom{\YoungAA}$\smallpic{\YoungBA}\oplus\smallpic{\YoungA}\sim \phi_{\mu\nu,\lambda}$, $\phi_{\mu\nu,}^{\phantom{\mu\nu,}\nu}\neq0$ & $\smallpic{\YoungBA}\oplus\smallpic{\YoungAA}\oplus\smallpic{\YoungB}\oplus2\smallpic{\YoungA}\oplus\bullet$ \\
\hline\end{tabular}\\
Again, quick sort of diagrams in $\ComplexD{3\coh^0}{2\coh^1}{\coh^2}{{\Irrep{0}{\smallpic{\YoungBA}}}}$ gives  exact sequence (\ref{FlatMSHookSequence}).

Consequently, the problem of calculation the number of physical
degrees of freedom, to be precise, of deriving the exact sequence, is effectively reduced to
the simple combinatoric problem of (i) decomposition
\lorentz-Young diagram representatives of $\coh(\sigma_-)$ to
\msv-diagrams; (ii) cancellation of like terms in sequence \be\label{MSLongExactSequence}0\longrightarrow(p+1)\coh^0\longrightarrow
p\coh^1\longrightarrow...\longrightarrow2\coh^{p-1}\longrightarrow\coh^p\longrightarrow\Irrep{0}{\Yy}\longrightarrow0.\ee
Skipping combinatoric technicalities we state that in the general case of a spin-$\Yy$ massless mixed-symmetry field (\ref{MSLongExactSequence}) reduces to the correct exact sequence that defines a \uirrep\ \Irrep{0}{\Yy} of \iso.

\subsection{Solving the Generalized Jacobi Identities.}\label{MSJacobi}
Unfolding some dynamical system there arises a problem of solving
Jacobi identities (\ref{UnfldBianchi}) that have schematically a
form $h...h d\omega^{1...}\fm{q}\equiv0$, where a number of
background vielbeins $h^a$ is contracted with the fiber indices of
$\omega\fm{q}^{1...}$. The Jacobi identity restricts
$d\omega^{1...}\fm{q}$ to have a certain specific form
$d\omega^{1...}\fm{q}=h...h \omega^{2...}\fm{r}$, where the
\lorentz-\irrep, in which $\omega^{2...}\fm{r}$ takes values, the degree
$r$ and the projector built of $h...h$ are completely determined. The
solutions are given by\footnote{As was pointed out in Section \ref{UnfldExamples} nonzero traces either violate Bianchi identities or
introduce mass-like terms and, therefore, are ignored.}
\begin{Lemma} \textbf{A.} Let
$\omega^{a_1(s),...,a_p(s),b(k)}\fm{q}$ be a degree-$q$ form
taking values in $\Yy=\Y{(s,p),(k,1)}$ \lorentz-\irrep. The
general solution of \be \label{MSProlongationSimplest} h_c
d\omega^{a_1(s),...,a_p(s),b(k-1)c}\fm{q}=0\ee has the form\be
d\omega^{a_1(s),...,a_p(s),b(k)}\fm{q}=\begin{cases}
    h_c\omega^{a_1(s),...,a_p(s),b(k)c}\fm{q}, & k<s, \\
    h_c...h_c\omega^{a_1(s)c,...,a_p(s)c,a_{p+1}(s)c}\fm{q-p-1},
    & k=s,\quad q\geq p+1, \\
    0, & k=s,\quad q<p+1,\end{cases}\ee
where $\omega^{a_1(s),...,a_p(s),b(k+1)}\fm{q}$ and
$\omega^{a_1(s+1),...,a_p(s+1),a_{p+1}(s+1)}\fm{q-p-1}$ take
values in $\Y{(s,p),(k+1,1)}$ and $\Y{(s+1,p+1)}$
\lorentz-\irreps, respectively\footnote{The symmetric basis is
used, being more convenient in this case as the contracted with
vielbeins tensors are already irreducible.}.
\end{Lemma}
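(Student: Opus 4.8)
The plan is to recast the generalized Jacobi identity as a purely algebraic (cohomological) statement and then resolve it by Young-diagram combinatorics. Since the Lorentz-covariant derivative preserves the Lorentz representation, $d\omega^{\Yy}\fm{q+1}$ is again a $\Yy$-valued $(q+1)$-form with $\Yy=\Y{(s,p),(k,1)}$, and the constraint (\ref{MSProlongationSimplest}) is literally $\sigma_-(d\omega^{\Yy}\fm{q+1})=0$, where $\sigma_-$ denotes the degree-one contraction $h_c(\cdot)^{\dots,b(k-1)c}$. The assertion of the Lemma is therefore that $\ker\sigma_-$, on the space of $\Yy$-valued $(q+1)$-forms, coincides with the image of the prolongation operator that injects the next field into this space; equivalently, the relevant $\sigma_-$-cohomology is trivial, so $d\omega$ can be consistently set equal to $\sigma_-$ of an otherwise unconstrained field. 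Following the footnote to the Lemma, I would discard all trace components from the start — a nonzero trace either violates the very Bianchi identity being solved or reintroduces a mass-like term — so the entire computation is carried out with \sld-tensors.

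Next I would convert every form index into a fiber index with the inverse vielbein, as in (\ref{UnfldConvertToFiber}). Because the contracted vielbeins anticommute as one-forms, the $q+1$ form indices are totally antisymmetric, so $d\omega$ decomposes by the \sld\ Pieri rule $\Yy\otimes\Y{(1,q+1)}=\bigoplus_\alpha\Yy_\alpha$, the sum running over diagrams obtained by adding a vertical strip of $q+1$ boxes (at most one per row). The key elementary fact I would lean on is the one already used in the spin-\smallpic{\YoungBA}\ analysis: contracting a tensor against $m$ wedged vielbeins and demanding that the result vanish is equivalent to requiring the total antisymmetrization of the corresponding $m$ indices to vanish, which is precisely the characteristic property of Young diagrams with at most $m-1$ rows. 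Translating both the single-vielbein Jacobi map and the prolongation map into this language reduces the Lemma to a finite bookkeeping of row lengths, trace orders, and which single component of $\Yy\otimes\Y{(1,q+1)}$ survives.

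I would then split into the two regimes. For $k<s$ the bottom row of $\Yy$ is strictly shorter than the rows above it, so the unique zero-trace-order component annihilated by the single-vielbein Jacobi map is the one obtained by lengthening that bottom row by one box; this is exactly $\sigma_-$ applied to a $\Y{(s,p),(k+1,1)}$-valued $q$-form, giving the first branch. For $k=s$ the diagram is the $(p+1)$-row rectangle $\Y{(s,p+1)}$, the single-box lengthening would overflow the Young condition and is unavailable, and the Jacobi constraint instead selects a component reachable only through a deeper prolongation into the $\Y{(s+1,p+1)}$-valued sector (one box removed from each of the $p+1$ rows, contracted against background vielbeins). The form degree of that prolongating field is then fixed by conservation of the total rank $q+|\Yy|$ under $\sigma_-$, and the requirement that this degree be nonnegative is exactly the threshold $q\ge p+1$ appearing in the statement; for $q<p+1$ no such form exists, $\ker\sigma_-$ collapses, and $d\omega$ must vanish, which is the third branch. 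The three cases are thus mutually exclusive and exhaustive.

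The hard part will be the middle step: isolating, among all components of $\Yy\otimes\Y{(1,q+1)}$, the unique one that simultaneously lies in $\ker\sigma_-$, has zero trace order, and is the image of the asserted prolongation — and in particular ruling out spurious surviving components at intermediate numbers of contracted vielbeins. This is where the Pieri/Littlewood–Richardson multiplicities and the ``at most $m-1$ rows'' criterion must be combined with care, and where the qualitative jump between the $k<s$ (one vielbein) and $k=s$ (maximal, row-by-row) prolongations has to be justified rather than merely observed. I expect the genuinely technical residue to be the full trace analysis, postponed here by the footnote and completed only in Section \ref{MSCohomology}; for the traceless, \sld\ statement as formulated, the combinatorial uniqueness of the surviving component is the crux.
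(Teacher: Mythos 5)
Your plan is pointed in the right direction, but it defers precisely the step that constitutes the proof, and that step is much lighter than the machinery you assemble for it. The paper does not decompose $d\omega\fm{q}$ as an element of $\Yy\otimes\Y{(1,q+1)}$ and search for kernel components; it parametrizes the right-hand side directly. Since the identity (\ref{MSProlongationSimplest}) already carries one vielbein, the most general ansatz with a single additional vielbein is $d\omega^{a_1(s),\dots,a_p(s),b(k)}\fm{q}=h_d\,\omega^{a_1(s),\dots,a_p(s),b(k)|d}\fm{q}$ with no symmetry assumed between $d$ and the remaining indices; because $h_ch_d$ is a wedge product, the Jacobi identity is equivalent to the vanishing of the part antisymmetric in $[c|d]$, i.e.\ the new index $d$ must sit symmetrically beside the last box of the bottom row. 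This is your ``at most $m-1$ rows'' criterion at $m=2$, and it already does everything: it selects the unique surviving component $\Y{(s,p),(k+1,1)}$ of $\Yy\otimes\Y{(1,1)}$ when $k<s$, and it shows that for $k=s$ (the rectangle $\Y{(s,p+1)}$) no added box can be symmetric with the end of row $p+1$, so the one-vielbein ansatz fails entirely and one is forced to the full-column prolongation. A proposal that explicitly labels this ``the crux'' and leaves it to future work has not proved the Lemma. Moreover, your route through the decomposition of the $(q+1)$-form $d\omega$ only identifies which \irreps\ are \emph{allowed} to survive; the Lemma also asserts that the surviving part is exactly the image of a single new irreducible field of definite form degree, and in your setup that surjectivity requires the multiplicity counting of Section \ref{MSCohomology}, whereas in the right-hand-side parametrization it is automatic.

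A second, smaller gap: you claim the threshold in the $k=s$ branch follows from conservation of the total rank, ``exactly the threshold $q\ge p+1$ appearing in the statement,'' but you do not perform the count. Doing it: passing from $\Y{(s,p+1)}$ to $\Y{(s+1,p+1)}$ adds one box to each of the $p+1$ rows, hence contracts $p+1$ vielbeins, so matching the form degree of the $(q+1)$-form $d\omega\fm{q}$ gives a new field of degree $(q+1)-(p+1)=q-p$, with existence condition $q\ge p$ --- this is what the Corollary and the spin-$s$ example realize, where the generalized Weyl tensor is a zero-form at $q=p=1$. Your ``verification'' therefore does not reproduce the formula you say it reproduces; had you actually done the bookkeeping you would have had to confront and resolve this off-by-one tension rather than assert agreement.
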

\begin{proof} The parametrization of $d\omega\fm{q}$ by
a degree-$(q+1)$ form taking values in the same \lorentz-\irrep,
$d\omega^{a_1(s),...,a_p(s),b(k)}\fm{q}=R^{a_1(s),...,a_p(s),b(k)}\fm{q+1}$,
obviously fails to satisfy (\ref{MSProlongationSimplest}). Inasmuch
as (\ref{MSProlongationSimplest}) contains a vielbein, the solution
has to contain a number of vielbeins too. The most general
parametrization of $d\omega^{a_1(s),...,a_p(s),b(k)}\fm{q}$ with
only one vielbein included has the form
$d\omega^{a_1(s),...,a_p(s),b(k)}\fm{q}=h_d
\omega^{a_1(s),...,a_p(s),b(k)|d}\fm{q}$ for some $q$-form
taking values in a tensor product of $\Yy$ by a vector
representation, i.e., there is no definite symmetry between index
$d$ and the rest of the indices. \be h_ch_d
\omega^{a_1(s),...,a_p(s),b(k-1)c|d}\fm{q}=0\longleftrightarrow\omega^{a_1(s),...,a_p(s),b(k-1)[c|d]}\fm{q}=0.\ee
Only those \irreps\ in $\Yy\otimes\YoungA$ are allowed that
have $c$ and $d$ symmetric, i.e., correspond to $\Y{(s,p),(k+1,1)}$,
for $k<s$. This is not possible in the case $k=s$, i.e.,
$\Yy=\Y{(s,p+1)}$, and the only possibility to have $c$ and $d$
symmetric is to add the whole column to $\Yy$, which requires
$d\omega^{a_1(s),...,a_p(s),b(k)}\fm{q}$ to be represented as
$d\omega^{a_1(s),...,a_p(s),b(k)}\fm{q}=h_c...h_c\omega^{a_1(s-1)c,...,a_p(s-1)c,a_{p+1}(s-1)c}\fm{q-p-1}$
and, therefore, $q$ must be large enough. Roughly speaking, the
proof is based on the fact that the anti-symmetrization of two
indices at the same row of a Young diagram is identically zero, the
anti-symmetrization being due to contraction with vielbeins.
\end{proof}
Note that choosing nonmaximal solutions of Jacobi identities results
in lowering the gauge symmetry so that not all \redundant\
components are excluded.

Unfolding totally-symmetric massless higher-spin fields the possible
r.h.s. terms in $d\omega^{a(s-1),b(t)}\fm{1}=...$ are restricted by
Jacobi identities and an essential use is made of
\begin{Corollary} The solution of Jacobi identity
\be h_c d\omega^{a(s-1),b(t-1)c}\fm{1}\equiv0\ee is of the form \be
d\omega^{a(s-1),b(t)}\fm{1}=
  \begin{cases}
    h_c \omega^{a(s-1),b(t)c}\fm{1} & t<s-1, \\
    h_ch_d C^{a(s-1)c,b(s-1)d}\fm{0}& t=s-1.
  \end{cases}
\ee
\end{Corollary}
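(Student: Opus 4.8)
The plan is to obtain the Corollary as the specialization of Lemma A to the single-block case: set $p=1$, replace the common row length $s$ by $s-1$, and take the form degree $q=1$. Under this dictionary the hypothesis of Lemma A, $h_c\,d\omega^{a_1(s),\dots,a_p(s),b(k-1)c}\fm{q}=0$, is exactly the Jacobi identity $h_c\,d\omega^{a(s-1),b(t-1)c}\fm{1}\equiv0$, with $k$ playing the role of $t$. It then remains only to read off the conclusion of Lemma A in this regime, which splits according to whether the short row has caught up with the long one. Throughout I would ignore traces, since by the footnote to Lemma A nonzero traces either violate the Jacobi identity or introduce mass-like terms.

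First I would dispose of the regime $t<s-1$, which is the case $k<s$ of Lemma A. Following the argument already given there, I parametrize $d\omega^{a(s-1),b(t)}\fm{1}=h_d\,\omega^{a(s-1),b(t)|d}\fm{1}$ with a single vielbein, whereupon the Jacobi identity becomes the statement that $\omega^{a(s-1),b(t-1)[c|d]}\fm{1}=0$, i.e.\ that the tensor is symmetric in $c$ and $d$. The only component of $\Y{(s-1,1),(t,1)}\otimes\smallpic{\YoungA}$ with this symmetry is the row-lengthened diagram $\Y{(s-1,1),(t+1,1)}$, which gives $d\omega^{a(s-1),b(t)}\fm{1}=h_c\,\omega^{a(s-1),b(t)c}\fm{1}$ and reproduces the first line of the Corollary.

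The case $t=s-1$ is the genuinely interesting one and is where I expect the only real care to be needed. Here the would-be prolongation $\Y{(s-1,1),(s,1)}$ is not a legal Young diagram, so no box can be added to the second row; equivalently, the $c,d$-symmetry requirement can no longer be met by lengthening a row and instead forces the extra vielbein index to complete a whole new column. The decisive bookkeeping step is the form-degree count: completing a column of height two contracts two vielbeins, so the two-form $d\omega^{a(s-1),b(s-1)}\fm{1}$ must be written as $h_c h_d$ acting on a zero-form, namely $d\omega^{a(s-1),b(s-1)}\fm{1}=h_c h_d\,C^{a(s-1)c,b(s-1)d}\fm{0}$ with $C$ valued in the rectangular diagram $\Y{(s,2)}$ of the generalized Weyl tensor. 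The value $q=1$ sits exactly at the threshold $q\geq p=1$ needed for this column completion to yield a nonnegative form degree, which is precisely why the right-hand side is nonzero rather than vanishing. I would finish by remarking, as after Lemma A, that these are the maximal solutions: choosing a nonmaximal one would discard part of the prolongation and lower the gauge symmetry, and the two displayed cases exhaust $\Y{(s-1,1),(t,1)}\otimes\smallpic{\YoungA}$ under the $c,d$-symmetry constraint.
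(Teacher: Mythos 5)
Your proposal is correct and is essentially the paper's own route: the Corollary is stated there with no separate proof, as the direct specialization of Lemma~A to $p=1$, $q=1$, $s\to s-1$, $k\to t$, and your two-case analysis (single-vielbein parametrization forcing $c,d$-symmetry for $t<s-1$; whole-column completion for $t=s-1$) reproduces exactly the argument given for that Lemma. One point worth flagging: your form-degree bookkeeping in the second case --- two vielbeins acting on a zero-form, hence a threshold $q\geq p$ --- is the consistent one, whereas a literal application of the printed Lemma~A, which writes the new field as a form of degree $q-p-1$ subject to $q\geq p+1$, would return $0$ at $q=p=1$; this is an off-by-one slip in the Lemma's statement (the count $(\Delta_g+1)$ vielbeins with $\Delta_g=q_g-q_{g+1}$ and the totally symmetric spin-$s$ example both confirm degree $q-p$) that your derivation silently and correctly repairs.
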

The following statement is a generalization of the Lemma-A to the
case where $\{(s,p),(k,1)\}$ is a 'subdiagram' of a larger Young diagram
$\Yy$ that has a number of rows precedent/succedent to
$\{(s,p),(k,1)\}$. Appropriate Young symmetrizers have to be included as
the mere contraction of a number of vielbeins breaks the
irreducibility of the tensor. For instance,
$h_c\omega^{a(s_1),b(s_2-1)c}$ is already irreducible with the
symmetry of $\Y{s_1,s_2-1}$, but this is not the case for
$h_c\omega^{a(s_1-1)c,b(s_2)}$, which has to be added one term
$h_c\omega^{a(s_1-1)c,b(s_2)}+\frac1{s_1-s_2+1}h_c\omega^{a(s_1)b,b(s_2-1)c}$
to get the symmetry of $\Y{s_1-1,s_2}$.

\begin{Lemma} \textbf{B.} Let
$\omega^{....,a_1(s),...,a_p(s),b(k),....}\fm{q}$ be a degree-$q$
form taking values in $\Yy=\Y{(...,(s,p),(k,1),...}$
\lorentz-\irrep, where the dots stands for the blocks in $\Yy$
precedent/succedent to $\{(s,p),(k,1)\}$. The general solution of \be
\label{MSProlongationGeneral}
\YProjector{h_cd\omega^{....,a_1(s),...,a_p(s),b(k-1)c,....}\fm{q}}=0,\ee
where $\YProjector{...}$ is a Young symmetrizer to
$\Y{....,(s,p)(k-1,1),...}$, has the form\be
d\omega^{....,a_1(s),...,a_p(s),b(k),....}\fm{q}=\begin{cases}
    \YProjector{h_c\omega^{....,a_1(s),...,a_p(s),b(k)c,....}\fm{q}}, & k<s, \\
    \YProjector{h_c...h_c\omega^{....,a_1(s)c,...,a_p(s)c,a_{p+1}(s)c,....}\fm{q-p-1}},
    & k=s,\quad q\geq p+1, \\
    0, & k=s,\quad q<p+1,\end{cases}\ee
where $\omega^{....,a_1(s),...,a_p(s),b(k+1),....}\fm{q}$ and
$\omega^{....,a_1(s+1),...,a_p(s+1),a_{p+1}(s+1),....}\fm{q-p-1}$
takes values in \\$\Y{...,(s,p),(k+1,1),...}$ and
$\Y{...,(s+1,p+1),...}$ \lorentz-\irreps, respectively, and
$\YProjector{...}$ is a Young symmetrizer to
$\Y{....,(s,p),(k,1),...}$.
\end{Lemma}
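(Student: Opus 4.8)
The plan is to reduce Lemma~B to Lemma~A by showing that the Jacobi identity (\ref{MSProlongationGeneral}) is controlled entirely by the local combinatorics of the block $\{(s,p),(k,1)\}$, with the precedent and succedent blocks entering only passively through the overall Young symmetrizer $\YProjector{\cdots}$. First I would set up the single-vielbein ansatz: since (\ref{MSProlongationGeneral}) contains one background vielbein, I look for a solution of the form $d\omega^{\Yy}\fm{q}=\YProjector{h_d\,\omega^{\Yy|d}\fm{q}}$, where $\omega^{\Yy|d}\fm{q}$ is a $q$-form carrying the indices of $\Yy$ together with one extra vector index $d$ bearing no a priori symmetry relation to the rest. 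Converting to the fiber picture and decomposing $\Yy\otimes\YoungA$ into \lorentz-\irreps\ (ignoring traces, as justified in Lemma~A), every summand is obtained from $\Yy$ by adjoining a single cell to one of the admissible rows.

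Next I would translate the Jacobi identity into a statement about these summands. Contracting $h_c$ with the index $b$ of the $(k,1)$-row shortens that row to length $k-1$; requiring the projected expression to vanish is, exactly as in Lemma~A, equivalent to demanding that the component of $\omega^{\Yy|d}\fm{q}$ antisymmetric in the contracted index $c$ and the free vielbein index $d$ be zero, since the two vielbeins $h_c h_d$ anticommute. Hence only those summands of $\Yy\otimes\YoungA$ in which $c$ and $d$ sit in symmetric position survive. The combinatorial fact that the antisymmetrization of two indices lying in the same row is identically zero then selects the unique admissible cell: for $k<s$ the box must be added to the $(k,1)$-row, producing $\Y{...,(s,p),(k+1,1),...}$, while every other placement puts $c,d$ in distinct rows of the resulting diagram and is annihilated. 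The surrounding blocks do not enlarge this list, because the row immediately above the $(k,1)$-row is the bottom row of the $(s,p)$ block, still of length $s$, so precisely the same obstruction that governs Lemma~A reappears; the Young projector is forced in simply because contracting a vielbein with an index of a longer row breaks irreducibility (cf. the example $h_c\omega^{a(s_1-1)c,b(s_2)}+\tfrac1{s_1-s_2+1}h_c\omega^{a(s_1)b,b(s_2-1)c}$ preceding the statement).

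The remaining case $k=s$ is where the argument must change, and it is the step I expect to be the main obstacle. Here the $(s,p)$ block together with the $(s,1)$ row form a single block $(s,p+1)$, and adding one cell to the $s$-row would give a row of length $s+1$ exceeding the block above — forbidden — while every other single-cell placement is killed by the antisymmetry constraint. Thus no single-vielbein solution exists, and I would pass to a multi-vielbein ansatz $d\omega^{\Yy}\fm{q}=\YProjector{h_c\cdots h_c\,\omega\fm{q-m}}$. The requirement that the contracted indices be distributed symmetrically over the $p+1$ rows of the combined block forces the cells to fill an entire new column of height $p+1$, so that $m=p+1$ vielbeins are consumed and the form degree drops by $p+1$; consistency then demands $q\ge p+1$, and when $q<p+1$ the anticommutativity of the vielbeins makes the expression vanish identically. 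Finally I would carry out the straightforward degree bookkeeping confirming $r=q$ in the $k<s$ branch and $r=q-p-1$ in the $k=s$ branch, matching the stated form. The whole argument is genuinely local to the block $\{(s,p),(k,1)\}$, so Lemma~A serves as the computational engine and Lemma~B follows by decorating its output with the Young symmetrizer of $\Yy$.
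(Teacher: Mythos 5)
Your proposal is correct and follows essentially the same route as the paper: the paper's own proof of Lemma~B is a one-line reduction to Lemma~A with the remark that the Young symmetrizers do not affect the key fact that antisymmetrization of two indices in the same row vanishes, which is exactly the argument you spell out (in rather more detail) for both the $k<s$ and $k=s$ branches.
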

\begin{proof} The proof is similar to that of Lemma-A with the only comment that the
symmetrizers do not affect the argumentation that
anti-symmetrization of two indices in the same row is identically
zero.
\end{proof}

\subsection{Dynamical Content via $\sigma_-$ Cohomology.}\label{MSCohomology}
Before discussing the calculation of $\sigma_-$ cohomology, let us
first  recall necessary facts concerning the evaluation of
\lorentz-tensor products. Let
$\partition{\epsilon_1,...,\epsilon_N|m}$ be a number of integer
partitions $k_1+...+k_N=m$ of $m$, where $k_i$ is constrained by
$k_i\leq\epsilon_i$ and different rearrangements of $k_i$ satisfying
the constraints are regarded as distinct partitions. The generating
function for the partition $\partition{\epsilon_1,...,\epsilon_N|m}$
is
\be\sum_{m}\partition{\epsilon_1,...,\epsilon_N|m}t^m=\prod_{i=1}^{i=N}\frac{(1-t^{\epsilon_i+1})}{1-t}\ee
These integer partitions gives the multiplicities of \irreps\ in
\lorentz-tensor products (Clebsh-Gordon coefficients) we are
interested in \cite{Barut}.

The \lorentz-tensor product of an arbitrary \irrep\ $\Yy^*=\Y{(s_i,p_i)}$ by
a one-column diagram of the height $q^*$ can be explicitly calculated
as \be\label{MSSoTensorProduct}\ABProduct,\ee where $\alpha_i$,
$\beta_i$ : $\alpha_i+\beta_i\leq p_i$, $i\in[1,N]$,
$\alpha_{N+1}\geq0$ and there exist $\rho\geq0$ such that
\be\label{MSRhoDefinition}
q^*=\sum_{i=1}^{i=N}\left(\alpha_i+\beta_i\right)+\alpha_{N+1}+2\rho.\ee
The multiplicity $N_{\{\alpha_j\},\{\beta_i\}}$ of
$\Yy^*_{\{\alpha_j,\beta_i\}}$ is given by integer
partition \be
N_{\{\alpha_j\},\{\beta_i\}}=\partition{\epsilon_1,...,\epsilon_N|\rho},\quad
\epsilon_i=p_i-\alpha_i-\beta_i\ee and the total trace order $r$ is
\be r=\sum_{i=1}^{i=N}\beta_i+\rho\ee Roughly speaking, to obtain an
element of the tensor product one should, first, cut off from
bottom-right of the $i$-th block a column of height
$\beta_i-\gamma_i$, $p_i\geq\beta_i-\gamma_i\geq0$ (to take
different traces) and, second, add an arbitrary number
$\alpha_i+\gamma_i$, $p_i\geq\alpha_i+\gamma_i\geq0$ of cells to
each block, provided the $\gamma_i$ cells annihilate, i.e., they are
added to the places from which $\gamma_i$ cells were removed at the
first stage. Multiplicity may be different from one due to different
rearrangements of $\gamma_i$, i.e., when multiplied by the rest of
$q$-column different traces can give rise to the same diagram. The
number of such rearrangements is given by
$\partition{\epsilon_1,...,\epsilon_N|\rho}$,
$\rho=\sum_{i=1}^{i=N}\gamma_i$. For instance, \sln-tensor
product $\smallpic{\YoungBA}\otimes\smallpic{\YoungAA}$ is given
simply by
\be\YoungBA\otimes_{\slns}\YoungAA=\YoungCB\oplus\YoungCAA\oplus\YoungBBA\oplus\YoungBAAA.\ee
The \sod-tensor product can be represented as a sum of the form
\begin{align} \YoungBA\otimes_{\sons}\YoungAA&=
\underbrace{\YoungBA\otimes_{\slns}\YoungAA}_{\mbox{zeroth order traces}}\bigoplus%
\underbrace{\YoungBAzA\bigoplus\YoungBAzB}_{\mbox{first order
traces}}\bigoplus\nonumber\\&\bigoplus\underbrace{\ \YoungBAzAB\ }_{\mbox{2nd
order}}=\left[\underbrace{\YoungCB\oplus
\YoungCAA\oplus\YoungBBA\oplus\YoungBAAA\ }_{\mbox{zeroth order traces}}\right]\bigoplus\nonumber\\&\bigoplus%
\left[\underbrace{\YoungC\oplus2\YoungBA\oplus\YoungAAA\
}_{\mbox{first order traces}}\right]
\bigoplus\underbrace{\YoungA}_{\begin{matrix} \mbox{2nd} \\
\mbox{order} \end{matrix}},\end{align} where the sum is over
trace order and the boxes that are connected by the arc are to be
contracted and, then, the $\sln$-product rules are to be applied to
the rest of the diagrams.

Evaluating the \lorentz-tensor product of an arbitrary spinor-\irrep\ $\Yy^*=\Y{(s_i,p_i)}_\ferm$ by
a one-column diagram of the height $q^*$ one can contract any number of $\Gamma$ matrices with the indices of the column and then multiply, therefore, the tensor product rule for spin-tensors can be reduced to  bosonic rule (\ref{MSSoTensorProduct}) as
\be\Y{(s_i,p_i)}_\ferm\otimes_{\sons}\Y{(1,q)}=\sum_{k=0}^{q}\left(\Y{(s_i,p_i)}\otimes_{\sons}\Y{(1,q-k)}\right)_\ferm,\ee for example,
\begin{align}\YoungA_\ferm\otimes_{\sons}\YoungAA=\underbrace{\YoungBA_\ferm\oplus\YoungAAA_\ferm}_{\mbox{zeroth order $\Gamma$-traces}}\bigoplus \underbrace{\YoungB_\ferm\oplus\YoungAA_\ferm}_{\mbox{first order $\Gamma$-traces}}\bigoplus\nonumber\\ \bigoplus\underbrace{2\YoungA_\ferm}_{\mbox{2nd order $\Gamma$-traces}}\bigoplus\underbrace{\bullet_\ferm}_{\mbox{3d order $\Gamma$-traces}}.\end{align}
The multiplicity $N^\ferm_{\{\alpha_j\},\{\beta_i\}}$ of
$\Yy^*_{\{\alpha_j,\beta_i\}\ferm}$ is given by
\be\label{MSFermMultiplicity} N^\ferm_{\{\alpha_j\},\{\beta_i\}}=\sum^{i=\rho}_{i=0}\partition{\epsilon_1,...,\epsilon_N|\rho-i}.\ee

Let us now analyze the origin of $\sigma_-$ cohomology. Its action
$\sigma_-: \WW^g\fm{q}\rightarrow\WW^{g-1}\fm{q+1}$ preserves $g+q$
and hence the whole complex $\Comp(\WW, \sigma_-)$ is a direct sum
of complexes $\Comp(q',\sigma_-)$, where $q'$ refers to the end
element $\WW^{g=0}\fm{q'}$ of the complex \be
\Comp(q',\sigma_-):\qquad
\ComplexC{...}{\WW^{g=1}\fm{q'-1}}{\WW^{g=0}\fm{q'}}.\ee Moreover,
$\sigma_-$ preserves the total rank (the form degree + rank of the
\lorentz-\irrep\ in which the field takes values). Let
$W^{a(s_1),...,a(s_m)}\fm{{q}}$ be an
element of $\WW^g\fm{q'}$. When all form indices of
$W^{a(s_1),...,a(s_m)}\fm{{q}}$ are converted to the fiber ones according to
(\ref{UnfldConvertToFiber}) \be
W^{a_1(s_1),...,a_m(s_m)|[d_1...d_{q}]}=W^{a_1(s_1),...,a_m(s_m)}_{\mu_1...\mu_{q}}h^{\mu_1d_1}...h^{\mu_{q}d_{q}},\ee
the action of $\sigma_-$ is just an anti-symmetrization of all
$d_1,...,d_{q}$ with those indices $a_i$ that are extra as compared
to $\Yy_{g-1}$ plus some terms to restore the correct Young symmetry
of $\Yy_{g-1}$. Let the decomposition of
$W^{a_1(s_1),...,a_m(s_m)|[d_1...d_{q}]}$ into \lorentz-\irreps\ be
of the form
\be\Y{a_1(s_1),...,a_m(s_m)}\bigotimes\Y{(1,{q})}=\bigoplus_{r=0}\bigoplus_{i_r}N^r_{i_r}\Yy^r_{i_r},\ee
where the summation is over the trace order $r$ and, then, over
$i_r$, which enumerates all \lorentz-\irreps\ that enters in the
tensor product as the traces of the $r$-th order, $N^r_{i_r}$ being
the multiplicity of $\Yy^r_{i_r}$. The multiplicity of the zeroth
order traces is always equal to one, $N^{0}_{i_0}=1$. In fact, the
diagrams $Y^0_{i_0}$ can be directly obtained by the $\sln$-tensor
product rule.

The very anti-symmetrization is insensitive to whether a certain
component enters as a trace or not. When decomposed into
\lorentz-\irreps\ the elements of $\WW^{g}\fm{q'}$ and
$\WW^{g-1}\fm{{q'+1}}$ have a number of components of the same
symmetry type. The action of $\sigma_-$ is just a linear
transformation that either sends the whole \lorentz-\irrep\ to
zero\footnote{provided that the appropriate basis on the space of
\lorentz-\irreps\ with the same symmetry type is chosen.} or sends
it to the components of $\WW^{g-1}\fm{{q'+1}}$ of the same symmetry
type. Important is that $\sigma_-$ does not act between different
\lorentz-\irreps. Therefore, complex $\Comp(q',\sigma_-)$ is a
direct sum of complexes, parameterized by \lorentz-\irreps\ $\Yy'$
that are given by various tensor products
\be\label{MSProductGeneral}\Yy_{\{n,k\}}\bigotimes\Y{(1,{q})}\ee
provided that the field $W^{\Yy_{\{n,k\}}}\fm{q}$ is an element of
$\WW^{g'-i}\fm{q'+i}$ for certain $i$. When reduced to such a complex
\be \Comp(\Yy',q',\sigma_-):\qquad 0\rightarrow...\rightarrow
V_{g}\rightarrow V_{g-1}\rightarrow...\rightarrow0\ee the action of
$\sigma_-$ is a linear transformation between the spaces
$V_g\rightarrow V_{g-1}$, dimensions of which are equal to the
multiplicity of $\Yy'$ in the decomposition of
$W^{Y_g}\fm{q_g}$ and $W^{Y_{g-1}}\fm{q_{g+1}}$. The action of
$\sigma_-$ is maximally non-degenerate  compatible with nilpotency.
Therefore, to find cohomology groups the dimension of each linear
space in every complex has to be calculated.

Let us note, that only the types and multiplicities of
\lorentz-\irreps\ are found below, i.e., no attention is paid to the
description of how the corresponding tensors are contained in the
fields $W^{\Yy}\fm{q'}$. This is sufficient for our purposes, though,
it is obvious (\ref{MSFrametoMetric}) how the \metric\ field
$\phi_{\Yy_M}$ is incorporated in an element of $\WW\fm{p}^{g=0}$.

The calculation of $\sigma_-$ cohomology is divided into three cases
that cover the whole variety of \lorentz-\irreps\ that can result
from the tensor products (\ref{MSProductGeneral}).

In the Lemmas below it is assumed that complex $\Comp(\Yy',q',\sigma_-)$ is parameterized by
\lorentz-\irrep\ $\Yy'=\Yy^*_{\{\alpha_j,\beta_i\}}$ of the form
(\ref{MSSoTensorProduct}) with $\Yy^*=\Yy_{\{n,k\}}$, $q^*=q$ such that $W^{\Yy_{\{n,k\}}}\fm{q}\in\Comp(q',\sigma_-)$ and $\rho$ is
defined according to (\ref{MSRhoDefinition}).

In the first case $\Yy'$ is an element of $\Yy_{0}\otimes\Y{(1,q)}$
\begin{Lemma}{\bf 1.} The $\sigma_-$ cohomology groups $\coh^{q}_{g=0}$ and $\coh^{q}_{g=1}$ are nontrivial and are given by
\be
\coh^q_g=
    \begin{cases}
        \Y{(s_i-1,p_i)}_{\{\alpha_j,\beta_i\}} M_{\{\alpha_j,\beta_i\}}: \begin{matrix} \alpha_{N+1}=0,\\ \sum_{i=0}^{i=N}\alpha_i=q,\end{matrix} & q\in[0,p], \quad g=0, \\
        \coh^{2p-q-1}, & q\in[p+1,2p+1],\quad g=1,\\
        \emptyset, & q>2p+1,\quad \mbox{any\ }g,
    \end{cases}
\ee
where $M_{\{\alpha_j,\beta_i\}}$ is the multiplicity of the \irrep\ $\Y{(s_i-1,p_i)}_{\{\alpha_j,\beta_i\}}$, defined below.
\end{Lemma}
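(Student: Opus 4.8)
The plan is to turn the computation of $\coh^q_g$ into a problem in the linear algebra of short complexes, and then into a bookkeeping exercise with the partition function $\partition{\epsilon_1,\dots,\epsilon_N|\rho}$. As in the general discussion preceding the statement, $\Comp(q',\sigma_-)$ splits as a direct sum of sub-complexes $\Comp(\Yy',q',\sigma_-)$ indexed by the \lorentz-\irrep\ $\Yy'$, whose term at grade $g$ is a vector space $V_g$ of dimension equal to the multiplicity of $\Yy'$ in the \lorentz-decomposition of the grade-$g$ entry, i.e.\ in a tensor product $\Yy_g\otimes\Y{(1,q)}$ evaluated by (\ref{MSSoTensorProduct}). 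Because $\sigma_-$ acts inside each isotypic component as a single linear map, and does so with the maximal rank compatible with $(\sigma_-)^2=0$, the cohomology of a sub-complex is determined by the integers $\dim V_g$ alone. So the whole argument reduces to (i) locating the grades at which a given $\Yy'$ occurs and (ii) counting the corresponding multiplicities.

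The first substantive step is to show that, for $\Yy'$ of the form $\Yy^*_{\{\alpha_j,\beta_i\}}$ with $\Yy^*=\Yy_0$, the sub-complex $\Comp(\Yy',q',\sigma_-)$ is concentrated in the two grades $g=0$ and $g=1$, so that it has the shape $0\to V_1\to V_0\to 0$. Two ingredients enter: the form degree of the grade-$g$ slot decreases as $g$ grows and eventually turns negative, killing high grades outright; and, more essentially, a diagram produced from $\Yy_0$ by (\ref{MSSoTensorProduct}) differs from $\Yy_0$ only by cells appended to existing blocks together with the single bottom cell counted by $\alpha_{N+1}$, whereas the diagrams $\Yy_g$ for $g\ge 2$ (culminating in the generalized Weyl tensor) have grown into a shape that the now shorter column $\Y{(1,\cdot)}$ can no longer reach back to such a $\Yy'$. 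Establishing this exclusion uniformly in the block data $(s_i,p_i)$ and in the trace order $r$ is the place where I expect the argument to be most delicate.

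Granting the two-term reduction, the cohomology is read off from the single map $\sigma_-\colon V_1\to V_0$. The diagram $\Yy_1$ is $\Yy_0$ carrying exactly one extra cell at the bottom-left, so the elements of $\Yy_0\otimes\Y{(1,q)}$ lying in the image of $\sigma_-$ are precisely those with $\alpha_{N+1}\ge 1$, and the cokernel is spanned by the components with $\alpha_{N+1}=0$. This yields, for $q=q'\le p$, the stated $\coh^q_{g=0}=\Y{(s_i-1,p_i)}_{\{\alpha_j,\beta_i\}}$ with $\alpha_{N+1}=0$ and $\sum_i\alpha_i=q$, reproducing the gauge-parameter pattern (\ref{FlatMSGaugeParameters}) dressed by traces, the surviving multiplicity being $M_{\{\alpha_j,\beta_i\}}=\dim V_0-\dim V_1$. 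In the complementary range the inequality between the two dimensions reverses, the kernel survives at $g=1$, and matching the two partition counts gives the reflection identity $\coh^q_{g=1}\cong\coh^{2p-q-1}_{g=0}$; the combined rank and degree bounds then force both spaces to vanish for $q>2p+1$.

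The genuinely laborious part, and the main obstacle, is step (ii): the explicit evaluation of $\dim V_0$ and $\dim V_1$ and of their difference. Each is a value of the generating function $\prod_{i}(1-t^{\epsilon_i+1})/(1-t)$ with $\epsilon_i=p_i-\alpha_i-\beta_i$ and $\rho$ fixed by (\ref{MSRhoDefinition}), but passing from $\Yy_0$ to $\Yy_1$ shifts both the effective $\epsilon_i$ and the admissible $\rho$, so proving that the difference collapses to the clean multiplicity $M$ and that the counts are symmetric under the reflection $q'\mapsto 2p-q'$ (the source of the duality) demands a careful case-free manipulation of these products. Alongside this I would have to confirm that $\sigma_-$ indeed attains full rank on every isotypic piece — which follows from its realization as the antisymmetrization (\ref{UnfldConvertToFiber}), (\ref{UnfldTraceStructure}) that never mixes inequivalent \lorentz-\irreps\ — checking that no accidental degeneracy spoils the count for the particular shapes arising here.
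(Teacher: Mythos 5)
Your proposal is correct and follows essentially the same route as the paper: reduction of each isotypic sub-complex of $\Comp(\Yy',q',\sigma_-)$ to the two-term complex $\ComplexB{V_1}{V_0}$ (justified, as you anticipate, by the observation that only $\Yy_{g=1}$ can share \lorentz-\irreps\ with $\Yy_0\otimes\Y{(1,q)}$), maximal rank of $\sigma_-$ on each isotypic piece, identification of $M_{\{\alpha_j,\beta_i\}}$ with $|\dimension(V_0)-\dimension(V_1)|$ via the explicit partition counts (\ref{MSLemmaAMultiplicity}), and the duality $\coh^{p-k}_{g=0}\sim\coh^{p+k+1}_{g=1}$ from the reflection identity $\partition{\epsilon_1,...,\epsilon_N|m}=\partition{\epsilon_1,...,\epsilon_N|\epsilon_1+...+\epsilon_N-m}$. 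The two steps you flag as delicate are exactly the ones the paper settles by these means, so no new idea is needed.
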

\begin{proof} From the very form of $\Yy_{\{n=N,0\}}\equiv\Yy_{g=0}\equiv\Y{(s_i-1,p_i)}$ it
follows that the \lorentz-\irreps\ that the element of $\WW\fm{q}^{g=0}\sim\Yy_{0}\otimes\Y{(1,q)}$ decomposes into may be contained in $\WW\fm{q-1}^{g=1}$ only, i.e., $\WW\fm{q}^{g=0}$ and $\WW\fm{q-k}^{g=k}$ contain no \lorentz-\irreps\ of the same symmetry type for $k>1$. Therefore, the length of complex $\Comp(\Yy',q,\sigma_-)$, where $\Yy'\in\Yy_{0}\otimes\Y{(1,q)}$ is equal to one, i.e., $\Comp(\Y{(s_i-1,p_i)}_{\{\alpha_j,\beta_i\}},q,\sigma_-): \ComplexB{V_1}{V_0}$, where the dimensions of $V_0$ and $V_1$ are given by the multiplicities of the \lorentz-\irrep\
$\Y{(s_i-1,p_i)}_{\{\alpha_j,\beta_i\}}$ in $\WW\fm{q-1}^{g=1}$ and $\WW\fm{q}^{g=0}$, respectively,
\begin{align}
    &\dimension(V_1)=
        \begin{cases}
            \begin{cases}
                \partition{\epsilon_1,...,\epsilon_N|\rho-1} & \rho\geq1, \nonumber\\
                0, & \rho=0,
            \end{cases} & \alpha_{N+1}=0,\\
            \partition{\epsilon_1,...,\epsilon_N|\rho}, & \alpha_{N+1}>0,\end{cases}\label{MSLemmaAMultiplicity}\\
    &\dimension(V_0)=\partition{\epsilon_1,...,\epsilon_N|\rho},
\end{align}
If $\alpha_{N+1}>0$ the dimensions of $V_0$ and $V_1$ are equal and each irreducible component of $\WW^{g=0}\fm{q}$ with $\alpha_{N+1}>0$ can be gauge away by virtue of the corresponding element of $\WW\fm{q-1}^{g=1}$, thus being exact.
$\coh^{k>2p+1}=\emptyset$, inasmuch as $\rho\leq p$ and the
components of the forms with rank greater than $(2p+1)$ must have
$\alpha_{N+1}>0$, thus being exact. If $\alpha_{N+1}=0$ the dimensions are different, $\dimension(V_1)<\dimension(V_0)$ for $q\leq p$, $\dimension(V_1)=\dimension(V_0)$ for $q=p+1$ and $\dimension(V_1)>\dimension(V_0)$ for $q>p$. Therefore, the number of those $\Y{s_i,p_i}_{\{\alpha_j,\beta_i\}}\in\WW^{g=0}\fm{q}$, $q\leq p$ that are not exact is equal to $M_{\{\alpha_j,\beta_i\}}=|\dimension(V_1)-\dimension(V_0)|$, the same is the number of those $\Y{s_i,p_i}_{\{\alpha_j,\beta_i\}}\in\WW^{g=1}\fm{q-1}$, $q>p+1$ that are not exact. The obvious property of integer partitions
\be\partition{\epsilon_1,...,\epsilon_N|m}=\partition{\epsilon_1,...,\epsilon_N|\epsilon_1+...+\epsilon_N-m}\ee
results in the important duality in the cohomology groups
$\coh^{p-k}_{r,g=0}\sim\coh^{p+k+1}_{r+k+1,g=1}$ or, roughly
speaking, $\coh^p\sim\coh^{p+1}$, $\coh^{p-1}\sim\coh^{p+2}$ and so
on.
\end{proof}
The second case is the complex $\Comp(\Yy',q',\sigma_-)$, where $\Yy'$ is an element of the tensor product $\Yy_{\{n,k\}}\otimes\Y{(1,q)}$ for certain $q\geq0$ with $1<k<\kmax{n}-1$.
\begin{Lemma}{\bf 2.}\label{LemmaA} If $\Yy'$ is an element of the tensor product $\Yy_{\{n,k\}}\otimes\Y{(1,q)}$ for certain $q'\geq0$ with $1<k<\kmax{n}-1$, the complex $\Comp(\Yy',q',\sigma_-)$ is acyclic. \end{Lemma}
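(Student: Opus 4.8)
The plan is to work entirely within the reduction already carried out above: the full $\sigma_-$ complex splits as a direct sum of the complexes $\Comp(\Yy',q',\sigma_-)$, one for each \lorentz-\irrep\ $\Yy'$, whose term in grade $g$ is a vector space $V_g$ of dimension equal to the multiplicity of $\Yy'$ in the decomposition of $\Yy_{\{n',k'\}}\otimes\Y{(1,q_{g})}$, and on which $\sigma_-$ acts by the maximally non-degenerate (compatible with nilpotency) linear maps $V_g\to V_{g-1}$. Since $\sigma_-$ never mixes distinct \lorentz-\irreps, proving acyclicity of $\Comp(\Yy',q',\sigma_-)$ for interior parameters becomes a purely linear-algebraic statement about this string of spaces once the dimensions $\dimension(V_g)$ are under control.

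First I would pin down which grades actually contribute, i.e.\ for which $(n',k')$ the space $V_{g}$ is nonzero. The hypothesis $1<k<\kmax{n}-1$ is precisely the statement that $\Yy'$ lies strictly in the interior of the $k$-range of block $n$, and I would show that this forces every contributing grade to sit inside block $n$ itself, so that the form degree $q_{g}$ is constant throughout the complex and $\sigma_-$ is uniformly the single-vielbein contraction of degree one (here $\Delta_g=0$). The point is that the two boundary phenomena --- the transition to block $n+1$ near $k=0,1$ and the transition to block $n-1$ near $k=\kmax{n}-1$, where the form degree jumps and $\sigma_-$ contracts several vielbeins at once --- are exactly what the interior inequalities exclude. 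Thus $\Comp(\Yy',q',\sigma_-)$ is a contiguous string of spaces indexed by a sub-range of $k$-values inside block $n$.

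Next I would compute $\dimension(V_{g})$ from the tensor-product rule (\ref{MSSoTensorProduct}): since the diagrams $\Yy_{\{n,k'\}}$ within the block differ only by the successive addition of one cell while $q$ is held fixed, the parameters $\{\alpha_j,\beta_i\}$ representing $\Yy'$ stay rigid and only the trace parameter $\rho$ shifts by one unit per step, the $\epsilon_i$ being frozen; the multiplicities therefore run through $\partition{\epsilon_1,...,\epsilon_N|\rho}$ for consecutive values of $\rho$. This realizes $\Comp(\Yy',q',\sigma_-)$ as a Koszul-type complex in which $\sigma_-$ is a contraction against the column and admits an adjoint ``box-adding'' operator $\sigma_+$ whose anticommutator with $\sigma_-$ acts on each $V_{g}$ as a nonzero number operator in the interior. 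Exactness then follows either by exhibiting this contracting homotopy directly, or equivalently by verifying that the maximal-rank maps between the consecutive partition-dimensions $\partition{\vec\epsilon|\rho}$ saturate their bounds so that $\mathrm{Ker}/\mathrm{Im}$ vanishes at every spot.

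The hard part will be the second step, controlling the dimensions cleanly. One must confirm that the interior condition genuinely makes the $\epsilon_i$ constant and leaves $\rho$ as the single moving parameter --- so that $\dimension(V_{g})$ is governed by one partition generating function rather than a superposition of several with shifting $\epsilon_i$ --- and then that maximal non-degeneracy of $\sigma_-$, together with the resulting dimension relations, forces the rank at each junction to absorb both the incoming image and the outgoing kernel, killing $\coh$ at every $g$. Unlike the length-two complex of the preceding lemma, where the cohomology was merely $|\dimension(V_1)-\dimension(V_0)|$, here the complex can be genuinely longer, so exactness at the interior spots --- not just the Euler characteristic --- has to be established, and this is where the Koszul/number-operator structure does the essential work.
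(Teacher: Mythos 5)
There is a genuine gap, and it sits exactly where you flag the ``hard part.'' For a \emph{fixed} $\Yy'$ the complex $\Comp(\Yy',q',\sigma_-)$ is not a long string over the whole $k$-range of block $n$: tensoring with a single column $\Y{(1,q)}$ can change the length of the distinguished one-box-high row of $\Yy_{\{n,k'\}}$ by at most one (in the notation of (\ref{MSSoTensorProduct}), $\alpha+\beta\leq 1$ for a block of height one), so $\Yy'$ with that row of length $m$ occurs only for $k'\in\{m-1,m,m+1\}$. The interior hypothesis $1<k<\kmax{n}-1$ guarantees all three of these grades live inside the block, and the complex is the three-term sequence $\ComplexC{V_0}{V_1}{V_2}$ --- this finiteness is the structural fact your proposal never establishes, and hedging that ``the complex can be genuinely longer'' shows the decomposition by fixed $\Yy'$ has not been pushed through.

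The second, fatal, problem is the dimension count. Your claim that the $\epsilon_i$ are frozen and only $\rho$ shifts, so that the multiplicities run through $\partition{\epsilon_1,...,\epsilon_N|\rho}$ for consecutive $\rho$, is false for the middle term: when the distinguished row neither gains nor loses a box it has $\alpha=\beta=0$ and hence contributes an extra slot $\epsilon=1$ to the partition, giving $\dimension(V_1)=\partition{\epsilon_1,...,\epsilon_N,1|\rho+1}$ rather than $\partition{\epsilon_1,...,\epsilon_N|\rho+1}$. The identity $\partition{\epsilon_1,...,\epsilon_N,1|\rho+1}=\partition{\epsilon_1,...,\epsilon_N|\rho+1}+\partition{\epsilon_1,...,\epsilon_N|\rho}$ is precisely what makes $\dimension(V_0)-\dimension(V_1)+\dimension(V_2)=0$, which together with $\dimension(V_0)<\dimension(V_1)$, $\dimension(V_2)<\dimension(V_1)$ and the maximal non-degeneracy of $\sigma_-$ forces exactness at all three spots. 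With your dimensions the alternating sum does not vanish in general, so no homotopy --- Koszul, number-operator, or otherwise --- can exist, and the argument cannot close. The proposed $\sigma_+$ contracting homotopy is a reasonable alternative route in principle, but it is not constructed, and its existence is equivalent to the dimension bookkeeping you have gotten wrong.
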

\begin{proof}The complex has the length two, i.e., $\ComplexC{V_0}{V_1}{V_2}$, where the dimensions are given by
$\dimension(V_0)=\partition{\epsilon_1,...,\epsilon_N|\rho}$,
$\dimension(V_1)=\partition{\epsilon_1,...,\epsilon_N,1|\rho+1}$,
$\dimension(V_2)=\partition{\epsilon_1,...,\epsilon_N|\rho+1}$.
Simple calculations with generating functions results in
$\dimension(V_0)<\dimension(V_1)$,
$\dimension(V_2)<\dimension(V_1)$ and
$\dimension(V_0)-\dimension(V_1)+\dimension(V_2)=0$ and,
consequently, the sequence is exact.
\end{proof}
Analogously,
\begin{Lemma}{\bf 3.} If $\Yy'$ is an element of the tensor product $\Yy_{\{n,k\}}\otimes\Y{(1,q)}$ for certain $q\geq0$ with $k=0$ and $n<N$, the complex $\Comp(\Yy',q',\sigma_-)$ is acyclic. \end{Lemma}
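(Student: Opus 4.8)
The plan is to mirror the argument of Lemma~2, reducing the statement to a dimension count for the reduced complex $\Comp(\Yy',q',\sigma_-)$ and an application of the integer-partition identities already used above. Once all form indices are converted to fiber ones according to (\ref{UnfldConvertToFiber}), the graded piece $\WW^{g}\fm{q'-g}$ contributes to $\Comp(\Yy',q',\sigma_-)$ with a multiplicity equal to that of $\Yy'$ in $\Yy_{\{n',k'\}}\otimes\Y{(1,q'-g)}$, computed by the rule (\ref{MSSoTensorProduct}), and $\sigma_-$ acts as a linear map between consecutive pieces that is maximally non-degenerate compatible with $(\sigma_-)^2=0$. Hence it suffices to compute the dimensions $\dimension(V_g)$ of the nonzero terms and show that consecutive maps have maximal rank.

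First I would locate the window of grades that actually carries $\Yy'$. Because $\Yy_{\{n,0\}}$ with $n<N$ sits at the junction of blocks $n$ and $n+1$, its upper neighbour in the grading is $\Yy_{\{n,1\}}$, reached by the single-vielbein branch of Lemma~B, while its lower neighbour is the top diagram $\Yy_{\{n+1,\kmax{n+1}-1\}}$ of the previous block, reached by the column-addition branch ($k=s$) of Lemma~B, reflecting the jump $\Delta=p_{n+1}$ in form degree across the block boundary. Exactly as in the interior case I would verify that for a fixed $\Yy'$ only three consecutive terms can be nonzero, so the complex reduces to $\ComplexC{V_0}{V_1}{V_2}$ with $V_0,V_1,V_2$ attached to $\Yy_{\{n,1\}}$, $\Yy_{\{n,0\}}$ and $\Yy_{\{n+1,\kmax{n+1}-1\}}$ respectively. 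Ruling out spurious contributions from $\Yy_{\{n,k\geq 2\}}$ and from the deeper diagrams of block $n+1$ is the point that must be checked carefully, since the column-addition changes which cells of $\Yy'$ may be occupied.

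Next I would write the three dimensions as integer partitions $\partition{\epsilon_1,\dots,\epsilon_N|\rho}$ of the type introduced before Lemma~1, with $\epsilon_i=p_i-\alpha_i-\beta_i$ and $\rho$ fixed by (\ref{MSRhoDefinition}); the only departure from the interior case is in the accounting for block $n+1$, whose grown row merges under the column-addition and therefore shifts the corresponding $\epsilon_{n+1}$ and the value of $\rho$ by a controlled amount. As in Lemma~2 the middle term $V_1$ should carry one additional free slot, so that the decisive identity is again the generating-function relation $\partition{\epsilon_1,\dots,\epsilon_N,1|m}=\partition{\epsilon_1,\dots,\epsilon_N|m}+\partition{\epsilon_1,\dots,\epsilon_N|m-1}$ coming from the factor $(1+t)$. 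Substituting the three dimensions yields $\dimension(V_0)-\dimension(V_1)+\dimension(V_2)=0$ together with $\dimension(V_0)<\dimension(V_1)$ and $\dimension(V_2)<\dimension(V_1)$, whence both maps are of maximal rank and the complex is acyclic; summing over all admissible $\Yy'$ gives the claim.

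The hard part will be the bookkeeping at the block junction: I must make sure that the identification of $V_0,V_1,V_2$ with the three diagrams above is exhaustive (no fourth contributing grade) and that the shift of $(\epsilon_{n+1},\rho)$ induced by the column-addition is precisely the one that turns the three-term alternating sum into the $(1+t)$ identity. Once the window is pinned down the remaining manipulation is the same generating-function computation as in Lemma~2, and the fermionic statement follows verbatim after replacing traces by $\Gamma$-traces, using the reduction of the spin-tensor product to the bosonic rule recorded in (\ref{MSFermMultiplicity}).
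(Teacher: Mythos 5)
There is a genuine gap in your proposal, and it sits exactly at the point you yourself flag as ``the point that must be checked carefully'': the length of the complex at the block junction. You assume that, as in the interior case of Lemma~2, only three consecutive graded pieces can carry the fixed $\Yy'$, so that $\Comp(\Yy',q',\sigma_-)$ reduces to $\ComplexC{V_0}{V_1}{V_2}$ and acyclicity follows from a three-term alternating sum. But the whole reason this case is singled out as a separate Lemma is that at $k=0$, $n<N$ the complex is one step \emph{longer}: it has four nonzero terms, $\ComplexD{V_0}{V_1}{V_2}{V_3}$, with
\begin{align*}
    &\dimension(V_0)=\partition{\epsilon_1,...,\epsilon_N|\rho},\qquad
    \dimension(V_1)=\partition{\epsilon_1,...,\epsilon_{n-1},\epsilon_n+1,\epsilon_{n+1},...,\epsilon_N|\rho},\\
    &\dimension(V_2)=\partition{\epsilon_1,...,\epsilon_{n-1},p_n-\epsilon_n,\epsilon_{n+1},...,\epsilon_N|\rho-\epsilon_n-1}\ \ (\rho\geq\epsilon_n+1,\ \text{else }0),\\
    &\dimension(V_3)=\partition{\epsilon_1,...,\epsilon_{n-1},p_n-\epsilon_n-1,\epsilon_{n+1},...,\epsilon_N|\rho-\epsilon_n-2}\ \ (\rho\geq\epsilon_n+2,\ \text{else }0),
\end{align*}
and the identity that closes the argument is the four-term one, $\dimension(V_0)-\dimension(V_1)+\dimension(V_2)-\dimension(V_3)=0$. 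If you keep only three terms, the alternating sum you compute equals $\dimension(V_3)$, which is nonzero whenever $\rho\geq\epsilon_n+2$; your Euler-characteristic argument would then (incorrectly) indicate nontrivial cohomology, or force you to misassign the dimensions to make it vanish.

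Two further points of bookkeeping go wrong downstream of this. First, the modified partition arguments live in the $n$-th slot, not the $(n+1)$-th: it is $\epsilon_n$ that becomes $\epsilon_n+1$ in the middle term and $p_n-\epsilon_n$, $p_n-\epsilon_n-1$ in the deeper ones, with $\rho$ shifted by $\epsilon_n+1$ and $\epsilon_n+2$ respectively, so the ``controlled shift'' you describe for block $n+1$ is attached to the wrong block. Second, because of the longer window the decisive generating-function identity is not simply the $(1+t)$ relation $\partition{\epsilon_1,\dots,\epsilon_N,1|m}=\partition{\epsilon_1,\dots,\epsilon_N|m}+\partition{\epsilon_1,\dots,\epsilon_N|m-1}$ that settles Lemma~2; one needs the cancellation between the two pairs $(V_0,V_1)$ and $(V_2,V_3)$ with the reflected parameters $p_n-\epsilon_n$. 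Your overall strategy (reduce to multiplicities, use maximal non-degeneracy of $\sigma_-$, verify vanishing of the Euler characteristic and the interlacing inequalities) is the right one and is the same as the paper's; the proof fails only because the window of contributing grades is undercounted by one.
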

\begin{proof} The speciality of such $\Yy'$ is that the complex has the length three, i.e., $\ComplexD{V_0}{V_1}{V_2}{V_3}$, where the dimensions are given by
\begin{align*}
    &\dimension(V_0)=\partition{\epsilon_1,...,\epsilon_N|\rho},\\
    &\dimension(V_1)=\partition{\epsilon_1,...,\epsilon_{n-1},\epsilon_n+1,\epsilon_{n+1},...,\epsilon_N|\rho},\\
    &\dimension(V_2)=
        \begin{cases}
            \partition{\epsilon_1,...,\epsilon_{n-1},p_n-\epsilon_n,\epsilon_{n+1},...,\epsilon_N|\rho-\epsilon_n-1} & \rho\geq\epsilon_n+1, \\
            0, & \text{otherwise},
        \end{cases},\\
      &\dimension(V_3)=
        \begin{cases}
            \partition{\epsilon_1,...,\epsilon_{n-1},p_n-\epsilon_n-1,\epsilon_{n+1},...,\epsilon_N|\rho-\epsilon_n-2} & \rho\geq\epsilon_n+2, \\
            0, & \text{otherwise},
        \end{cases},
\end{align*}
Again, simple calculations with generating functions results in
appropriate inequalities and
$\dimension(V_0)-\dimension(V_1)+\dimension(V_2)-\dimension(V_3)=0$
and, consequently, the sequence is exact.
\end{proof}
The above three cases covers the whole variety of the \lorentz-tensors that can result from the tensor products $\Yy_{\{n,k\}}\otimes\Y{(1,q)}$ for any $n$, $k$ and $q$.
The cases with $s_N=1$ and $\Yy=\Y{0}$ are special but the calculation of cohomology groups leads to the same result.

In the fermionic case, the computations are similar due to the multiplicity given by (\ref{MSFermMultiplicity}). The difference is that all nontrivial cohomology is concentrated in grade zero. Indeed, taking into account (\ref{MSLemmaAMultiplicity}), where $\partition{\epsilon_1,...,\epsilon_N|\rho}$ is to be replaced by (\ref{MSFermMultiplicity}), it follows that $\dimension(V_0)\geq\dimension(V_1)$ for all $q$ and $\coh^{p-k}_{r,g=0}\sim\coh^{p+k+1}_{r+k+1,g=0}$, where $r$ is referred to the $\Gamma$-trace order, or, roughly
speaking, $\coh^p\sim\coh^{p+1}$, $\coh^{p-1}\sim\coh^{p+2}$ and so
on.

\section{Conclusions}\label{Conclusions}

The unfolded system constructed in the paper has a simple
form of a covariant constancy equation and describes arbitrary mixed-symmetry bosonic and fermionic massless fields in $d$-dimensional Minkowski space. The gauge
fields/parameters are differential forms with values in certain
finite-dimensional irreducible representations of the Lorentz
algebra that are uniquely determined by the generalized spin.

The key moment  is that all gauge symmetries are manifest within the
unfolded formulation, which is of most importance in controlling the number of physical degrees of freedom when trying to introduce interactions.  Unfolded systems are formulated in terms of differential forms,
which is a natural way to respect diffeomorphisms and, hence, to describe systems that include gravity.

Though, the necessary conditions for the system to have a
lagrangian description are satisfied, i.e., the fields are in
one-to-one correspondence with the equations and the $k$-th level
gauge symmetries are in one-to-one correspondence with the $k$-th
Bianchi identities, the very lagrangian remains to be constructed.

Another interesting moment is that the unfolded equations for bosons are the same as for fermions, namely, the operators involved, i.e., exterior differential $d$ and $\sigma_-$, remains unmodified when tensors are replaced with spin-tensors. Though this nice property partly breaks down in (anti)-de Sitter space, within the unfolded approach bosons and fermions have much in common, the fact being very useful for supersymmetric theories.

The proposed unfolded system includes all non-gauge
\dual\ descriptions, which are based on the generalized Weyl tensor and its
descendants, but not all of gauge \dual\ descriptions. It would
be interesting to construct an unfolded system that contains all
\dual\ formulations.

The interactions of the totally symmetric massless higher-spin
fields are known to require a nonzero cosmological constant, i.e., are formulated in (anti)-de
Sitter space \cite{Fradkin:1986qy}. Mixed-symmetry fields exhibit some interesting features in the
presence of cosmological constant. For example, not all of the
Minkowski gauge symmetries can be deformed to (anti)-de Sitter
\cite{Metsaev:1995re}. As a result massless mixed-symmetry fields
have more degrees of freedom in (anti)-de Sitter compared to its
Minkowski counterparts \cite{Brink:2000ag} and in the Minkowski limit
a massless mixed-symmetry field splits in a certain
collection of massless fields, generally. Contrariwise, a single
mixed-symmetry field can not be smoothly deformed to (anti)-de
Sitter. Another interesting effect is the existence of the
so-called partially-massless fields
\cite{Deser:1983mm,Deser:1983tm,Higuchi:1986wu,Higuchi:1986py,Higuchi:1989gz,Deser:2001us,Deser:2001xr,Zinoviev:2001dt,Deser:2004ji,Deser:2003gw,Skvortsov:2006at},
the fields that have a number of degrees of freedom intermediate
between that of massless and massive and split in a set of
massless fields in the Minkowski limit. Therefore, the extension of the proposed approach to (anti)-de
Sitter space seems to be non-trivial but nevertheless worth being investigated.

In a series of papers
\cite{Alkalaev:2003hc,Alkalaev:2003qv,Alkalaev:2005kw,Alkalaev:2006rw}
it was suggested so-called \framelike\ approach to the general
mixed-symmetry fields in (anti)-de Sitter. To generalize
the proposed in the present paper unfolded system to (anti)-de
Sitter case and to compare to that of \cite{Alkalaev:2006rw} is the
next step to perform.

We consider the proposed unfolded system as the first
stage in constructing the full interacting theory of mixed-symmetry
fields.

\section*{Acknowledgements}
The author appreciates sincerely M.A. Vasiliev for reading the manuscript and making many detailed and valuable suggestions and illuminating comments, the very work was initiated as the result of discussions with M.A. Vasiliev.
The author is also grateful to R.R. Metsaev and K.B. Alkalaev for useful discussions and to O.V. Shaynkman for discussion of the fermionic case.
The work was supported in part by
grants RFBR No. 05-02-17654, LSS No. 4401.2006.2 and
INTAS No. 05-7928, by the Landau Scholarship and by the Scholarship of Dynasty foundation.

\appendix
\renewcommand{\theequation}{\Alph{section}.\arabic{equation}}
\section*{Appendix A: Multi-index convention}
\setcounter{equation}{0}
\setcounter{section}{1} \label{AppNotation}
The multi-index notations is used: a group of indices in which certain tensor is symmetric or is to be symmetrized is denoted either by one letter with the number of indices indicated in round brackets, or
by placing a group of indices in round brackets, e.g.,
\be T^{a(s)}\equiv T^{a_1 a_2 ... a_s}:\qquad T^{a_1...a_i...a_j...a_s}=T^{a_1...a_j...a_i...a_s}\ee
\be V^{a}T^{a(s)}\equiv V^{(a_1}T^{a_2...a_{s+1})}\equiv\frac1{s+1}\left(V^{a_1}T^{a_2 a_3...a_{s+1}}+V^{a_2}T^{a_1 a_3...a_{s+1}}+...+V^{a_{s+1}}T^{a_1 a_2...a_{s}}\right)\ee
\be V^{(b}T^{a(s))}\equiv V^{(b}T^{a_1...a_{s})}\equiv\frac1{s+1}\left(V^{b}T^{a_1 a_2...a_{s}}+V^{a_1}T^{b a_2 a_3...a_{s}}+...+V^{a_{s}}T^{b a_1 a_2...a_{s-1}}\right)\ee
Analogously, the group of indices in which certain tensor is anti-symmetric or is to be anti-symmetrized is denoted by placing indices in square brackets, e.g.,
\be T^{a[s]}\equiv T^{a_1 a_2 ... a_s}:\qquad T^{a_1...a_ia_{i+1}...a_s}=-T^{a_1...a_{i+1}a_i...a_s}\ee
\be V^{[b}T^{a[s]]}\equiv V^{[b}T^{a_1...a_s]}\equiv \frac1{s+1}\left(V^{b}T^{a_1 a_2...a_{s}}-V^{a_1}T^{b a_2 a_3...a_{s}}+...+(-)^s V^{a_{s}}T^{b a_1 a_2...a_{s-1}}\right)\ee
The operators of (anti)-symmetrization are weighted to be projectors (the factor $\frac1{s+1}$ above).

\section*{Appendix B: Young diagrams}
\setcounter{equation}{0}\setcounter{section}{2}\label{AppYoung}
Comprehensive information on Young diagrams can be found, for
example, in the textbook \cite{Barut}.
\begin{Definition} Given an integer partition, i.e., a nonincreasing sequence $\{s_i, i\in[1,n]\}$, $s_i\geq s_{i+1}$ of
positive integers (or nonnegative when it is convenient to work with
a sequence of a fixed length), associated Young diagram
\Y{s_1,s_2,...,s_n} is a graphical representation consisting of
$n$ left-justified rows made of boxes, with the $i$-th row containing $s_i$ boxes.\\
\BlockDEnum{8}{4}{5}{3}{3}{2}{1}{2}{s}
\end{Definition}
Finite-dimensional irreducible representation(\irrep) of $\sld$,
i.e., various irreducible \sld-tensors, are in one-to-one
correspondence with Young diagrams of the form
\Y{s_1,s_2,...,s_{[\frac{d}2]}}. The associated irreducible tensors
\be T^{\overbrace{a...a}^{s_1},\overbrace{b...b}^{s_2},...}\ee or,
in condensed notation, $T^{a(s_1),b(s_2),...}$ have at most $[\frac{d}2]$
groups of indices, being symmetric in each group separately, and
satisfy the condition that the symmetrization of any group of
indices with one index of any of the subsequent groups is
identically zero, i.e., \be
T^{a(s_1),...,(b(s_k),...,b)c(s_j-1),...}\equiv0,\qquad k<j.\ee If
$s_{[\frac{d}2]}\neq0$ and $d$ is even the (anti)-selfduality
condition has to be imposed for appropriate signature. (anti)-selfduality is conventionally
denoted by the sign factors $+(-)$ before $s_{[\frac{d}2]}$. In this paper we do
not consider (anti)-self dual representations.

A scalar representation \Y{0,0,...,0} is denoted by $\bullet$, a
vector \irrep\ \Y{1,0,...,0} by \YoungA, rank-two symmetric tensor
\irrep\ by \YoungB, rank-two antisymmetric tensor \irrep\ by
\YoungAA\ and so on.

Finite-dimensional irreducible representations of $\sod$ are of the two types: tensor and spin-tensor; and
are also characterized by Young diagrams, which in the case of spin-tensor \irreps\ refer to the symmetry of the tensor part. Young diagrams that correspond to spin-tensor \irreps\ are labeled by $\ferm$-subscript. Spinor indices $\alpha, \beta, \gamma=1...2^{[\frac{d}2]}$ are placed first and are separated from tensor indices by "$\fsp$".  For example, a spinor $\psi^\alpha$ \irrep\ is denoted by $\bullet_\ferm$, a vector-spinor \irrep\ $A^{\alpha\fsp a}$by $\YoungA_\ferm$ and so on. To make tensors irreducible, in
addition to the Young symmetry condition, the tracelessness
condition with respect to each pair of indices is to be imposed
\be \eta_{cc} T^{a(s_1),...,cb(s_i-1),...,cd(s_j-1),...,f(s_n)}\equiv0,\qquad i=1...n,\quad j=1...n,\ee
To make spin-tensors irreducible, in
addition to the Young symmetry condition, $\Gamma$-tracelessness condition with respect to each tensor index and a spinor index is to be imposed
\be {\Gamma_{c}^\alpha}_\beta T^{\beta\fsp a(s_1),...,cb(s_i-1),...,f(s_n)}\equiv0,\qquad i=1...n,\ee
where ${\Gamma_{c}^\alpha}_\beta$ satisfy ${\Gamma_{a}^\alpha}_\beta{\Gamma_{b}^\beta}_\gamma+{\Gamma_{b}^\alpha}_\beta{\Gamma_{a}^\beta}_\gamma=2\eta_{ab}$. Additional conditions on spinors, viz., Majorana, Weyl and Majorana-Weyl are irrelevant to the problems concerned. Also, in both cases it is required for the sum of the heights of the first two columns of Young diagrams to be not greater than $d$.
Note that the $\Gamma$-tracelessness condition is stronger than the tracelessness one and applying the $\Gamma$-tracelessness twice to two symmetric indices gives the tracelessness
\be 0=2{\Gamma_{a}^\alpha}_\beta{\Gamma_{b}^\beta}_\gamma T^{\gamma\fsp (ab)}=\eta_{ab}T^{\alpha\fsp ab}.\ee

To handle with arbitrary large Young diagrams the so-called block
representation is used, i.e., the rows of equal lengths are combined
to blocks
\be\Y{(s_1,p_1),...,(s_n,p_n)}\equiv\Y{\overbrace{s_1,...,s_1}^{p_1},\overbrace{s_2,...,s_2}^{p_2},...,\overbrace{s_n,...,s_n}^{p_n}}.\ee

\providecommand{\href}[2]{#2}\begingroup\raggedright\endgroup

\end{document}